\newcommand{\poly}{\operatorname{poly}}
\newcommand{\Card}{\operatorname{Card}}
\newcommand{\sdzero}{\textup{\texttt{0}}\xspace}
\newcommand{\sdone}{\textup{\texttt{1}}\xspace}
\newcommand{\classNP}{\text{\rm\textsf{NP}}}
\newcommand{\classFP}{\text{\rm\textsf{FP}}}
\newcommand{\classSharpP}{\text{\rm\textsf{\#P}}}
\newcommand{\classPSPACE}{\text{\rm\textsf{PSPACE}}}
\newcommand{\calC}{\mathcal{C}}
\newcommand{\calO}{\mathcal{O}}
\newcommand{\calU}{\mathcal{U}}
\newcommand{\calSO}{\mathcal{SO}}
\newcommand{\calSU}{\mathcal{SU}}
\newcommand{\IR}{\mathbb{R}}
\newcommand{\IZ}{\mathbb{Z}}
\newcommand{\IN}{\mathbb{N}}
\newcommand{\IH}{\mathbb{H}}
\newcommand{\PM}{\mathcal{PM}}
\newcommand{\Cantor}{{\mathbf{2}^\mathbb{N}}}
\newcommand{\genBall}[2]{\overline{B}_{#1}(#2)}
\newcommand{\abs}[1]{\ensuremath{\left|#1\right|}}
\newcommand{\method}[1]{\texttt{\textup{\ref{proc:#1}}}}
\title{Computing Haar Measures} 
\author{Arno Pauly}{Swansea University, UK \and \url{http://www.cs.swan.ac.uk/~cspauly/}}{arno.m.pauly@gmail.com}{https://orcid.org/0000-0002-0173-3295}{}
\author{Dongseong Seon}{KAIST, Daejeon, Republic of Korea}{instigation@kaist.ac.kr}{}{}
\author{Martin Ziegler}{KAIST, Daejeon, Republic of Korea \and \url{http://ziegler.theoryofcomputation.asia/}}{ziegler@kaist.ac.kr}{}{}
\authorrunning{A.~Pauly, D.~Seon and M.~Ziegler}
\keywords{Computable analysis, topological groups, exact real arithmetic, Haar measure}
\theoremstyle{definition}
\newtheorem*{theorem*}{Theorem}
\newtheorem{observation}[theorem]{Observation}
\newtheorem{fact}[theorem]{Fact}
\begin{document}

\maketitle

\begin{abstract}
According to Haar's Theorem, every compact topological group $G$
admits a unique (regular, right and) left-invariant Borel probability measure $\mu_G$.
Let the \emph{Haar integral} (of $G$) denote the functional
$\int_G:\calC(G)\ni f\mapsto \int f\,d\mu_G$
integrating any continuous function $f:G\to\IR$ with respect to $\mu_G$.
This generalizes, and recovers
for the additive group $G=[0;1)\mod 1$,
the usual Riemann integral: computable (cmp. Weihrauch 2000, Theorem 6.4.1),
and of computational cost characterizing complexity class \classSharpP$_1$
(cmp. Ko 1991, Theorem 5.32).

We establish that in fact, every computably compact computable metric group
renders the Haar measure/integral computable: 
once asserting computability using an elegant synthetic argument,
exploiting uniqueness in a computably compact space of probability measures;
and once presenting and analyzing an explicit, imperative algorithm
based on `maximum packings' with rigorous error bounds and guaranteed convergence.
Regarding computational complexity, for the groups $\calSO(3)$ and $\calSU(2)$,
we reduce the Haar integral to and from Euclidean/Riemann integration.
In particular both also characterize \classSharpP$_1$.
Implementation and empirical evaluation using the iRRAM C++ library for exact real computation
confirms the (thus necessary) exponential runtime.
\end{abstract}

\section{Motivation and Overview}
\label{sec:Overview}
Complementing empirical approaches, heuristics, and recipes \cite{PAUSINGER201913,Recipes},
Computable Analysis \cite{Wei00} provides a rigorous algorithmic foundation to Numerics,
as well as a way of formally measuring the constructive contents of
theorems in classical Calculus.
Haar's Theorem is such an example, of particular beauty combining three categories:
compact metric spaces, algebraic groups, and measure spaces:

\begin{fact} \label{f:Haar}
Let $(G,e,\circ,\cdot^{-1})$ denote a group
and $(G,d)$ a compact metric space
such that the group operation $\circ$ and inverse operation $\cdot^{-1}$ are continuous with respect to $d$ (that is, form a topological group).
There exists a unique left-invariant Borel probability measure $\mu_G$, called Haar measure, on $G$.
Moreover, $\mu_G$ is right-invariant and regular.
\end{fact}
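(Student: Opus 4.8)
The plan is to get \emph{existence} of a left-invariant Borel probability measure from a fixed-point theorem on the space of probability measures, \emph{regularity} for free from metrizability, and \emph{uniqueness} together with \emph{right-invariance} from a single Fubini computation.

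For existence I would work in $\PM(G)$, the set of Borel probability measures on $G$ endowed with the weak-$*$ topology it inherits from $\calC(G)^\ast$. Since $\calC(G)$ is separable (as $(G,d)$ is compact metric), $\PM(G)$ is a convex, metrizable, weak-$*$ compact subset of the locally convex space $\calC(G)^\ast$. The group $G$ acts on $\PM(G)$ by push-forward under left translations $L_s\colon x\mapsto sx$, each $s$ giving an affine homeomorphism $\mu\mapsto(L_s)_*\mu$, and a measure is left-invariant exactly when fixed by all of them. The crucial point is that this action is \emph{equicontinuous}: on a compact group the topology comes from a unique, bi-invariant uniformity (equivalently, by the Birkhoff--Kakutani theorem one may first replace the possibly non-invariant $d$ by a compatible left-invariant metric), so every $f\in\calC(G)$ is uniformly continuous and all its left translates $x\mapsto f(sx)$ obey a common modulus of continuity; hence that family of translates is totally bounded in $(\calC(G),\|\cdot\|_\infty)$ by Arzel\`a--Ascoli, and unwinding the definition of the weak-$*$ uniformity this is precisely equicontinuity of the maps $\mu\mapsto(L_s)_*\mu$ in $s$. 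Kakutani's fixed-point theorem for an equicontinuous group of affine homeomorphisms of a nonempty compact convex set then furnishes a common fixed point $\mu_G\in\PM(G)$: a left-invariant Borel probability measure. (Alternatively one builds a left-invariant mean on $\calC(G)$ directly and applies the Riesz representation theorem, with the same Arzel\`a--Ascoli input.)

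Regularity is then automatic, since a compact metric space is Polish and every finite Borel measure on a Polish space is inner regular by compact sets and outer regular by open sets. For uniqueness and right-invariance I would rerun the existence argument with \emph{right} translations to obtain a right-invariant $\nu_G\in\PM(G)$, and then note: for $f\in\calC(G)$ the map $(x,y)\mapsto f(xy)$ is continuous on the compact space $G\times G$, so Fubini applies to the finite product of any left-invariant $\mu$ and right-invariant $\nu$ in $\PM(G)$; left-invariance of $\mu$ (replace $y$ by $x^{-1}y$) makes $\int f(xy)\,d\mu(y)=\int f\,d\mu$, independent of $x$, whereas right-invariance of $\nu$ (replace $x$ by $xy^{-1}$) makes $\int f(xy)\,d\nu(x)=\int f\,d\nu$, independent of $y$; hence by Fubini the double integral $\iint f(xy)\,d\mu(y)\,d\nu(x)$ equals both $\int f\,d\mu$ and $\int f\,d\nu$, so $\mu=\nu$. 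In particular every left-invariant probability measure equals $\nu_G$ and every right-invariant one equals $\mu_G$, so the Haar measure is unique and bi-invariant.

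I expect the main obstacle to be the existence step, and within it the verification of the equicontinuity hypothesis: that continuous functions on a compact group are uniformly continuous and that translations act as uniform isomorphisms, so that all translates of a fixed $f$ share one modulus of continuity. Once that is secured, Arzel\`a--Ascoli and Kakutani's theorem do the rest, and the regularity step and the Fubini bookkeeping for uniqueness and right-invariance are routine by comparison.
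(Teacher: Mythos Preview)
The paper does not actually prove this statement: it is recorded as a classical \emph{Fact} and used as background. The only hint the paper gives toward a proof is the later Fact~\ref{f:Classical}, which recalls the classical covering-number construction $\mu(A)=\lim_B [A:B]/[X:B]$ from \cite[\S58]{HalmosGTM18}, precisely in order to point out that this route is \emph{not} directly effective.

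Your proof plan is a correct and self-contained argument, and it follows a genuinely different route from the covering-number construction the paper alludes to. You use Kakutani's fixed-point theorem on $\PM(G)$ (equicontinuity of the left-translation action via uniform continuity and Arzel\`a--Ascoli), then regularity from Polishness, then the standard Fubini trick for uniqueness and bi-invariance. This is cleaner for the compact metric setting and avoids the net limit in Fact~\ref{f:Classical}; the covering-number approach, by contrast, is more hands-on and closer in spirit to the explicit packing-based algorithm the paper develops in Section~\ref{sec:Dongseong}. Neither is needed here since the result is only quoted, but your outline would serve perfectly well as a proof were one wanted.
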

We refrain from expanding on generalizations to locally-compact Hausdorff spaces.
Recall that 
a left-invariant measure satisfies $\mu(U)=\mu(g\circ U)$ for every $g\in G$ and every measurable $U\subseteq G$.
For the additive group $[0;1)\mod 1$, its Haar measure recovers the standard Lebesgue measure $\lambda$,
corresponding to the angular measure divided by $2\pi$
on the complex unit circle group $\calU(1)\cong\calSO(2)$.

Each of the categories involved in Fact~\ref{f:Haar}
has a standard computable strengthening, cmp. \cite{Wei93,SS06,Eda09};
and our first main result establishes them to combine nicely:

\begin{theorem}
\label{theo:synthetic:nice}
Let $\mathbf{X}$ be a computably compact computable metric space with a computable group operation $\mathalpha{\circ} : \mathbf{X} \times \mathbf{X} \to \mathbf{X}$.
Then the corresponding Haar measure $\mu$ is computable.
\end{theorem}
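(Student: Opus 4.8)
The plan is to argue \emph{synthetically}, using only that the left-translations are computable (the hypothesis never mentions the inverse, and we will not need it) together with the fact that the Haar measure is the \emph{unique} left-invariant Borel probability measure (Fact~\ref{f:Haar}). Write $\mathcal{P}(\mathbf{X})$ for the space of Borel probability measures on $\mathbf{X}$ with the weak topology. A first ingredient is the (known) computable strengthening of the classical fact that $\mathcal{P}(\mathbf{X})$ is a compact metrizable space: when $\mathbf{X}$ is a computably compact computable metric space, so is $\mathcal{P}(\mathbf{X})$ (e.g.\ under a Prokhorov- or Wasserstein-type metric), integration $\mathcal{P}(\mathbf{X})\times\calC(\mathbf{X})\ni(\nu,f)\mapsto\int f\,d\nu\in\IR$ is computable, and pushforward along a computable map is a computable operation on $\mathcal{P}(\mathbf{X})$. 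Since $\circ$ is computable, the left-translation $L_g:x\mapsto g\circ x$ is computable uniformly in $g\in\mathbf{X}$, and hence so is $(g,\nu)\mapsto (L_g)_*\nu$, where $\int f\,d((L_g)_*\nu)=\int (f\circ L_g)\,d\nu$.

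The second ingredient is a synthetic lemma generalising the classical observation that a $\Pi^0_1$ singleton in Cantor space is computable: \emph{in a computably compact computable metric space $\mathbf{Y}$, any co-c.e.\ closed $S\subseteq\mathbf{Y}$ that happens to be a singleton $\{s\}$ has $s$ computable.} Indeed, $S^c$ being c.e.\ open makes $d(\cdot,S)$ lower semicomputable, since $d(x,S)>r$ iff $\overline{B}(x,r)\subseteq S^c$ and closed balls are computably compact (co-c.e.\ closed subsets of the computably compact $\mathbf{Y}$); and the singleton hypothesis makes it upper semicomputable as well, since $d(x,S)<q$ iff $S\subseteq B(x,q)$ iff $\{y:d(x,y)\ge q\}\subseteq S^c$ — once more an inclusion of a computably compact set into a c.e.\ open set, uniformly in $x$. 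Thus $d(\cdot,s)$ is computable, and then so is $s$, by searching a dense sequence for a point within any prescribed distance.

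Now combine the two. Let $S\subseteq\mathcal{P}(\mathbf{X})$ be the set of left-invariant probability measures. By uniform continuity of $\circ$ on the compact $\mathbf{X}\times\mathbf{X}$ it suffices to impose invariance for $f$ ranging over a countable dense subset $D\subseteq\calC(\mathbf{X})$ and $g$ over a countable dense subset of $\mathbf{X}$; hence $S=\bigcap_{f\in D,\,g}\{\nu:\int f\,d\nu=\int (f\circ L_g)\,d\nu\}$ is an intersection of closed sets whose complement $\bigcup_{f\in D,\,g,\,\varepsilon\in\IQ,\,\varepsilon>0}\{\nu:|\int f\,d\nu-\int (f\circ L_g)\,d\nu|>\varepsilon\}$ is c.e.\ open by computability of integration and of translation; so $S$ is co-c.e.\ closed in $\mathcal{P}(\mathbf{X})$. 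By Fact~\ref{f:Haar}, $S$ is nonempty and a singleton, namely $S=\{\mu\}$. The synthetic lemma yields that $\mu$ is a computable point of $\mathcal{P}(\mathbf{X})$; precomposing with computable integration then also gives computability of the Haar integral $f\mapsto\int f\,d\mu$.

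The load-bearing parts — and thus the main obstacles — are exactly these two ingredients. Establishing that $\mathcal{P}(\mathbf{X})$ is \emph{computably} compact (not merely a computable metric space) with computable integration and pushforward requires setting up an explicit representation/metric and proving the subcover property effectively. And the synthetic lemma, although clean, is where the interplay between co-c.e.\ closedness, the singleton hypothesis, and computable compactness is essential: without the singleton hypothesis the statement fails, since closed choice on a computably compact space is not computable. The reduction of full left-invariance to invariance over countable dense data, ensuring $S=\{\mu\}$, is a routine continuity argument.
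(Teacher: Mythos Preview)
Your proposal is correct and follows the same three-step synthetic strategy as the paper --- (I) exhibit the Haar measure as the unique element of a computably closed set inside (II) the computably compact space $\PM(\mathbf{X})$, then (III) invoke uniqueness --- but your implementation of step (I) is genuinely different from the paper's.

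The paper works with the native representation of $\PM(\mathbf{X})$ as a subspace of $\mathcal{C}(\mathcal{O}(\mathbf{X}),\IR_<)$, where $\nu(U)$ is only a \emph{lower} real. In that setting the obvious invariance condition $\nu(U)=\nu(L_g^{-1}U)$ is not recognizably violated, since inequality of lower reals is not semidecidable. The paper's workaround (Subsection~\ref{ss:Pairs}) introduces disjoint pairs of open sets and the predicate $\nu(U)+\nu(f^{-1}(V))\leq 1$, then proves a nontrivial equivalence (Lemma~\ref{lem:setsvspairs}) between pair-invariance and set-invariance, and shows the space $\mathrm{DPO}(\mathbf{X})$ is computably overt so that quantification over it preserves closedness.

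You sidestep this entirely by testing invariance through integration against continuous functions: $\nu\mapsto\int f\,d\nu$ lands in $\IR$ (not $\IR_<$) once $\mathbf{X}$ is a CCCMS, so $\big|\int f\,d\nu-\int (f\circ L_g)\,d\nu\big|>\varepsilon$ is directly semidecidable, and a countable intersection over dense $f\in\calC(\mathbf{X})$ and dense $g\in\mathbf{X}$ suffices by uniform continuity. This is cleaner and avoids the $\mathrm{DPO}$ machinery. The price is that your argument uses computable compactness of $\mathbf{X}$ already in step (I) (to get real-valued integration), whereas the paper's Theorem~\ref{theo:synthetic:main} establishes closedness of the invariance predicate for an arbitrary computable metric space and only invokes compactness in step (II). For the target Theorem~\ref{theo:synthetic:nice} this extra generality is not needed, so your route is a legitimate and arguably more elementary alternative.
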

That the Haar measure is computable means that we can approximate the measure of any given open subset of $\mathbf{X}$ from below, and implies that we can compute the integral of any given continuous function from $\mathbf{X}$ into $\mathbb{R}$.

Note that Haar's theorem would usually be stated for (locally)-compact Hausdorff topological groups, instead of requiring the metric structure. From the perspective of computable analysis, this generality is ephemeral, however. The topological spaces with a matching notion of computability are the $\mathrm{QCB}_0$-spaces \cite{schroder}, and any locally compact $\mathrm{QCB}_0$-space is already countably based (\cite[Corollary 6.11]{escardo7}), and countably-based compact Hausdorff spaces are metrizable.

In contrast, recall that other classical results in Calculus,
such as Brouwer's Fixed Point Theorem \cite{Ore63,paulybrattka3} or Peano's Theorem \cite{PR79},
do not carry over to computability that nicely.
And also common classical `constructive' existence proofs of the Haar measure
\cite[\S58]{HalmosGTM18} do employ limits without rate of convergence,
well-known since Specker \cite{Spe49,Spe59} to possibly leave the computable realm:

\begin{fact}
\label{f:Classical}
For non-empty $A,B\subseteq X$ let $[A:B]$ denote the least number of
left translates of $B$ that cover $A$.
Then $\mu(A)=\lim_B \frac{[A:B]}{[X:B]}$
holds for every compact $A\subseteq X$,
where the limit exists in the sense
of a net of open neighborhoods $B$ of $e$.
\end{fact}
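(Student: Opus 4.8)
The plan is to carry out the classical construction of the Haar measure via \emph{Haar covering numbers} and then invoke uniqueness (Fact~\ref{f:Haar}) to upgrade ``cluster point'' to ``limit''. For an open neighbourhood $B$ of $e$ and a non-empty compact $A\subseteq X$, the translates $\{gB:g\in X\}$ form an open cover of the compact space $X$, so $[X:B]<\infty$ and $1\le[A:B]\le[X:B]$; set $\mu_B(A):=[A:B]/[X:B]\in(0,1]$ and $\mu_B(\emptyset):=0$. First I would record the elementary properties of $\mu_B$ that hold uniformly in $B$: monotonicity ($A_1\subseteq A_2\Rightarrow[A_1:B]\le[A_2:B]$), left-invariance ($[gA:B]=[A:B]$), subadditivity ($[A_1\cup A_2:B]\le[A_1:B]+[A_2:B]$), the normalisation $\mu_B(X)=1$, and the boundedness $\mu_B(A)\in[0,1]$.

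The technical heart is \emph{approximate additivity}: if $A_1,A_2$ are disjoint compacta, then for every sufficiently small $B$ (small enough that no single translate $gB$ meets both $A_1$ and $A_2$, which is possible since $A_1,A_2$ have positive distance and $\circ$ is continuous with respect to $d$) one has $[A_1\cup A_2:B]=[A_1:B]+[A_2:B]$, hence $\mu_B(A_1\cup A_2)=\mu_B(A_1)+\mu_B(A_2)$. Next I would pass to a limit: order the neighbourhoods $B$ of $e$ by reverse inclusion, turning $B\mapsto(\mu_B(A))_A$ into a net valued in the product $[0,1]^{\mathcal{K}}$ over the set $\mathcal{K}$ of compact subsets of $X$, which is compact Hausdorff by Tychonoff. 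Any cluster point $\nu$ of this net inherits monotonicity, left-invariance, subadditivity and $\nu(X)=1$, and — since approximate additivity holds eventually along the net — finite additivity on disjoint compacta; that is, $\nu$ is a left-invariant content on compacta. By the standard extension of a content on a compact metric space, $\nu$ determines a left-invariant Borel probability measure on $X$, which is automatically regular.

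Finally I would invoke uniqueness. By Fact~\ref{f:Haar} that extension must be $\mu$, and its canonical restriction to compacta recovers $\nu$; hence \emph{every} cluster point of the net $(\mu_B(A))_A$ coincides with $(\mu(A))_A$. A net valued in a compact Hausdorff space with a unique cluster point converges to it, so $\lim_B[A:B]/[X:B]=\mu(A)$ for every compact $A\subseteq X$, which is the claim. (The same argument works with $X$ replaced by any fixed compact reference set with non-empty interior, at the cost of a different normalising constant $[A_0:B]$ in place of $[X:B]$.)

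The main obstacle I anticipate is twofold: making approximate additivity rigorous — the ``no translate meets both pieces'' step genuinely uses the interplay of the metric $d$ with the continuity of $\circ$ — and the passage from a content on compacta to a $\sigma$-additive Borel measure, which is routine but lengthy. I would emphasise that the existence of the limit, rather than merely of cluster points, is precisely where uniqueness (and nothing effective) enters: this is the non-constructive limit-without-a-modulus that the paper contrasts with its computable development, an instance of Specker's phenomenon of convergence lacking a rate.
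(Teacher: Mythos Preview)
The paper does not prove Fact~\ref{f:Classical}; it is stated as a classical result with a citation to \cite[\S58]{HalmosGTM18} and serves only as motivation, illustrating that the standard ``constructive'' existence proof of the Haar measure involves a limit without a modulus of convergence. Your sketch is precisely the classical Weil--Haar argument one finds in that reference: form the covering ratios $\mu_B$, use Tychonoff on $[0,1]^{\mathcal K}$ to extract a cluster point, verify it is a left-invariant content on compacta (approximate additivity becoming exact in the limit), extend to a regular Borel probability measure, and then invoke uniqueness (Fact~\ref{f:Haar}) to conclude that every cluster point coincides with $\mu$, whence the net actually converges. This is correct and is exactly the argument the citation points to; there is nothing further to compare, since the paper provides no proof of its own for this fact.
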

In addition to the possibly uncomputable limit,
the \emph{least integer} defining $[A:B]$ depends discontinuously and uncomputably on the underlying data $A,B$.

We establish Theorem~\ref{theo:synthetic:nice}
with elegant arguments following the `synthetic' (i.e. implicit, functional) approach to Computable Analysis developed in \cite{Pau16}.
It follows the following general strategy \cite{Esc13} (also explained in \cite[Section~9]{Pau16})
for proving computability of some object $\Omega$ living in an admissibly represented space by three steps:
\begin{enumerate}
\item[I)] Obtain a definition of $\Omega$ as the element of a computably closed set.
\item[II)] Obtain a computably compact set containing $\Omega$. 
\item[III)] Find a classical proof that (I) and (II) uniquely determine $\Omega$.
\end{enumerate}
As warm-up let us illustrate this approach to assert
computability of the group unit $e$ 
from the hypothesis of \cref{theo:synthetic:nice}:
For any fixed computable element $a\in G$,
(I) $e$ belongs to the computably closed set $\{y: a\circ y=a\}$
and (II) to the compact set $\Omega=X$
and (III) is uniquely determined by (I) and (II).
Similarly, for every $x\in G$, its inverse $x^{-1}$
(I) belongs to $\{y: x\circ y=e\}$
and (II) to the compact set $\Omega=X$
and (III) is uniquely defined by $x\circ x^{-1}=e$.
Note that this proof does not immediately yield an
algorithm computing $e$ or $x\mapsto x^{-1}$.

In this spirit, Section~\ref{sec:synthetic} establishes
Theorem~\ref{theo:synthetic:nice}. The challenge consists in
(I) obtaining a computable definition of the Haar measure $\mu$:
The inequality $\tilde\mu(U)\neq \tilde\mu(xU)$ expressing
a candidate measure $\tilde u$ to violate invariance
is not even recognizable, since $\tilde\mu(U)$ is in general
only a lower real. Subsection~\ref{ss:Pairs} avoids that
by allowing to consider pairs of sets in Lemma~\ref{lemma:allpairinvarianceclosed}.
Section~\ref{sec:Dongseong} complements Section~\ref{sec:synthetic}
by devising and analyzing an explicit, imperative algorithm
for computing Haar integrals $\calC(f)\mapsto\int_X f\,d\mu$.
It is based on `maximal packings': finite sets $T_n\subseteq X$ of points with pairwise distance $>2^{-n}$.
Intuitively, the ratio $|T_n\cap A|/|T_n|$ of those points contained in a given set $A$
should approximate its measure $\mu(A)$; however, rigorously, this is wrong
--- and counting is uncomputable anyway. Subsections~\ref{ss:Estimates}
and \ref{ss:Algorithm} describe a combination of mathematical and algorithmic
approaches that avoid these obstacles.
The superficially different hypotheses to Sections~\ref{sec:Dongseong}
and \ref{sec:synthetic} are compared in Section~\ref{sec:compare}.
There we also give some examples showing that these requirements are not dispensable;
and analyze which information of a compact metric group determines its Haar measure.

Having thus asserted computability, the natural next question is for efficiency.
We consider here the non-uniform computational cost of the
\emph{Haar integral} functional
\begin{equation}
\label{e:Average}
\int\nolimits_G \;:\; \calC(G)\;\ni\; f\;\mapsto\; \int f\,d\mu_G  \;\in\; \IR
\end{equation}
integrating continuous real functions $f:G\to\IR$.
For the arguably most important additive groups $G=[0;1)^d$ mod 1 with Lebesgue measure $\lambda^d$,
this amounts to Euclidean/Riemann integration --- whose complexity
had been shown to characterize the discrete class $\classSharpP_1$
\cite[Theorem~5.32]{Ko91} cmp. \cite{DBLP:conf/macis/FereeZ15,lmcs:3924}: indicating that standard quadrature methods,
although taking runtime exponential in $n$ to achieve guaranteed absolute
output error $2^{-n}$, are likely optimal.
And Section~\ref{sec:Complexity} extends this numerical characterization
of $\classSharpP_1$ to the arguably next-most important compact metric groups:

\begin{theorem}
\label{theo:Complexity}
Let $G$ denote any of the following compact groups,
considered as subsets of Euclidean space and equipped with the intrinsic/path metric:
\begin{enumerate}
\item[i)] $\calSO(3)\subseteq\IR^9$ of orthogonal real $3\times 3$ matrices of determinant $1$,
\item[ii)] $\calO(3)\subseteq\IR^9$ of orthogonal real $3\times 3$ matrices,
\item[iii)] $\calSU(2)\subseteq\IR^8$ of unitary complex $2\times 2$ matrices of determinant $1$,
\item[iv)] $\calU(2)\subseteq\IR^8$ of unitary complex $2\times 2$ matrices.
\smallskip\smallskip
\item[a)]
For every polynomial-time computable $f\in\calC(G)$,
$\int_G f\in\IR$ is computable in polynomial space
(and exponential time).
\item[b)]
If $\classFP_1=\classSharpP_1$ and
$f\in\calC(G)$ is polynomial-time computable,
then so is $\int_G f\in\IR$.
\item[c)]
There exists a polynomial-time computable $f\in\calC(G)$
such that polynomial-time computability of $\int_G f\in\IR$
implies $\classFP_1=\classSharpP_1$.
\end{enumerate}
\end{theorem}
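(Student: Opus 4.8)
The plan is to prove all four cases by exhibiting, for each group $G$ in the list, explicit reductions in both directions between the Haar integral $\int_G f$ and ordinary Riemann integration over a Euclidean box $[0;1]^k$ with $k=\dim G\in\{3,4\}$; claims (a)--(c) then mirror the known characterization of multivariate Riemann integration by $\classSharpP_1$ \cite[Theorem~5.32]{Ko91}. Two reductions cut down the bookkeeping. First, for these embedded compact Lie groups the intrinsic/path metric and the metric inherited from the ambient Euclidean space are bi-Lipschitz equivalent with an explicitly bounded constant, so polynomial-time computability of $f\colon G\to\IR$ does not depend on which of the two is used. Second, it suffices to treat $G=\calSU(2)$ and $G=[0;1)$: the double cover $\calSU(2)\to\calSO(3)$, given by explicit quadratic (hence polynomial-time computable) formulas, pushes $\mu_{\calSU(2)}$ forward to $\mu_{\calSO(3)}$, so $\int_{\calSO(3)}f=\int_{\calSU(2)}(f\circ\pi)$; the group $\calO(3)=\calSO(3)\sqcup R_0\calSO(3)$ is a disjoint union of two translates of $\calSO(3)$, each carrying half of the mass; and the double cover $\calSU(2)\times\calU(1)\to\calU(2)$, $(A,z)\mapsto zA$, together with $\calU(1)\cong[0;1)\bmod 1$ with Lebesgue measure, handles $\calU(2)$.

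Next I would fix an explicit polynomial-time computable chart $\Phi$ from $[0;1]^3$ onto $\calSU(2)\cong S^3$ --- for instance a rescaling of Hopf coordinates $(\eta,\xi_1,\xi_2)\mapsto(\cos\eta\cos\xi_1,\cos\eta\sin\xi_1,\sin\eta\cos\xi_2,\sin\eta\sin\xi_2)$, injective off a nullset --- under which $\mu_{\calSU(2)}$ pulls back to $J\cdot\lambda$ for an explicit polynomial-time computable density $J$ (here $J\propto\sin\eta\cos\eta$). Hence $\int_{\calSU(2)}h\,d\mu_{\calSU(2)}=\int_{[0;1]^3}(h\circ\Phi)\cdot J\,d\lambda$: the Haar integral of a polynomial-time computable $h$ equals a weighted Riemann integral with polynomial-time computable integrand. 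For claim (a) one observes that such an integral is $2^{-n}$-approximated by the weighted dyadic Riemann sum of mesh $2^{-p(n)}$, where $p$ bounds a (polynomial) modulus of continuity of $(h\circ\Phi)\cdot J$; this sum is a sum of $2^{\poly(n)}$ reals of polynomial bit-length, which an accumulator loop evaluates in polynomial space (and exponential time). For claim (b), under $\classFP_1=\classSharpP_1$ the same weighted dyadic Riemann sums become polynomial-time computable: replacing $(h\circ\Phi)\cdot J$ by a polynomial-time dyadic-rational approximation turns the scaled sum into a number of lattice points below a graph, a $\classSharpP_1$ quantity; together with the polynomial modulus of continuity this yields a polynomial-time $2^{-n}$-approximation of $\int_G f$.

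For claim (c) I would transport a known hard instance backwards through the chart. By \cite[Theorem~5.32]{Ko91} there is a polynomial-time computable $g_0\in\calC([0;1])$, supported in the open interval, for which polynomial-time computability of $\int_0^1g_0$ implies $\classFP_1=\classSharpP_1$; put $g(x_1,\dots,x_k)=g_0(x_1)\cdot\prod_{i\ge2}\varphi(x_i)$ with $\varphi$ a fixed interior-supported bump of integral $1$, so $\int_{[0;1]^k}g=\int_0^1g_0$. Pick a sub-box $W\subseteq(0;1)^k$ on which $\Phi$ restricts to a diffeomorphism onto an open set $U\subseteq G$ avoiding the coordinate singularities of $\Phi$; there $J\ge c>0$, and the inverse chart $\Phi^{-1}$ is given by explicit formulas (ratios of coordinates and $\arctan$) and is polynomial-time computable. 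With an affine bijection $\ell\colon[0;1]^k\to W$ set
\[
f(p)=\frac{g\bigl(\ell^{-1}(\Phi^{-1}(p))\bigr)}{\abs{\det\ell}\cdot J\bigl(\Phi^{-1}(p)\bigr)}\ \ (p\in U),\qquad f(p)=0\ \ (p\notin U).
\]
Since $g$ vanishes near $\partial[0;1]^k$, the support of $f$ is a compact subset of $U$, so $f$ extends by $0$ to a continuous --- indeed polynomial-time computable, since $\Phi^{-1}$, $\ell^{-1}$, the division by $J\ge c>0$ and the algebraic normalization constant all are --- function $f\in\calC(G)$; and the change-of-variables identity gives $\int_Gf\,d\mu_G=\int_W(f\circ\Phi)\,J\,d\lambda=\int_{[0;1]^k}g\,d\lambda=\int_0^1g_0$. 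Hence polynomial-time computability of $\int_Gf$ would collapse $\classFP_1=\classSharpP_1$.

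I expect the main obstacle to be the chart bookkeeping in claim (c): one must choose $W$ so as to simultaneously avoid the (removable but genuine) coordinate singularities of $\Phi$ --- $\eta\in\{0,\pi/2\}$ in Hopf coordinates, or $\beta\in\{0,\pi\}$ in the Euler-angle chart of $\calSO(3)$ --- keep $J$ bounded below, keep $\Phi$ injective on $W$, and retain there an explicitly polynomial-time computable inverse; and, for $\calO(3)$ and $\calU(2)$, one must make the pushforward bookkeeping along the finite covers interact correctly with these restricted charts, e.g. by translating and symmetrizing the hard instance so that it descends through the $2$-to-$1$ cover. The remaining ingredients --- polynomial modulus of continuity of polynomial-time computable real functions, polynomial-space evaluability of sums of exponentially many polynomial-length reals, and the $\classSharpP_1$ reformulation of a scaled dyadic Riemann sum as lattice-point counting --- are standard and already underlie \cite[Theorem~5.32]{Ko91}.
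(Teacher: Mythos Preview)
Your approach is correct and follows the same overall strategy as the paper: reduce the Haar integral on each group to Euclidean Riemann integration via an explicit polynomial-time chart with polynomial-time Jacobian, then invoke \cite[Theorem~5.32]{Ko91}. Your reductions among the four groups (double cover $\calSU(2)\to\calSO(3)$, disjoint union for $\calO(3)$, product/cover for $\calU(2)$) match the paper's, and Hopf coordinates work just as well as the paper's generalized spherical coordinates $\Psi(\eta,\vartheta,\varphi)$ with Jacobian $\sin^2(\eta)\sin(\vartheta)$.

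The one genuine difference is in claim~(c). You transport the hard instance by inverting the chart on a sub-box $W$ where $\Phi$ is a diffeomorphism with $J\geq c>0$ and $\Phi^{-1}$ explicitly polynomial-time computable; this is correct but requires exactly the bookkeeping you flag. The paper instead avoids inverse charts entirely: from a hard $f\in\calC\big(\calU(1)\big)$ it defines $\tilde f\in\calC(\IH_1)$ by
\[
\tilde f(\alpha+i\beta+j\gamma+k\delta)\;=\;f\big((\alpha+i\beta)/\sqrt{\alpha^2+\beta^2}\big)\cdot\sqrt{\alpha^2+\beta^2},
\]
which is globally defined and continuous (the factor $\sqrt{\alpha^2+\beta^2}$ kills the singularity at $\alpha=\beta=0$), polynomial-time computable, and by rotational symmetry has $\int_{\calSU(2)}\tilde f$ equal to a fixed polynomial-time constant times $\int_{\calU(1)}f$. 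This is shorter and sidesteps the sub-box, inverse-chart, and Jacobian-lower-bound issues; your route, on the other hand, is the generic one that would apply to any compact Lie group with a polynomial-time atlas, without needing a convenient fibration onto $\calU(1)$. Incidentally, your description of $\calU(2)$ as a double quotient of $\calSU(2)\times\calU(1)$ is more accurate than the paper's ``$\calU(2)\cong\calSU(2)\times\calU(1)$''; either suffices here since only the pushforward of the Haar measure is needed.
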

The proof of this result proceeds by mutual polynomial-time continuous (i.e. Weihrauch) reduction
from and to Euclidean/Riemann integration.
Subsection~\ref{ss:Implement} describes our implementation and empirical evaluation
of rigorous integration on $\calSU(2)$ in the \texttt{iRRAM} C++ library.

\section{Background}
\label{sec:Background}

In the following, we give a brief introduction to the key notions from computable analysis we need. For a formal treatment, we refer to \cite{Pau16}. Further standard references for Computable Analysis are \cite{Wei00,BHW08}.

Computable analysis is concerned with \emph{represented spaces}, which equip a set with a notion of computability by coding its elements as infinite binary sequences. We have various constructions of new represented spaces available, and use in particular the derived spaces $\mathcal{O}(\mathbf{X})$ of open subsets of $\mathbf{X}$ (characterized by making membership recognizable) and $\mathcal{C}(\mathbf{X},\mathbf{Y})$ of continuous functions from $\mathbf{X}$ to $\mathbf{Y}$, characterized by making function evaluation computable.

Computable compactness and computable overtness of a space are characterized by making universal and existential quantification preserve computable open predicates. We also use that admissibility of a space means that from a compact singleton we can extract the point \cite{Sch06}. A space is computably Hausdorff, if inequality is semidecidable. It is computably separable, if it has a computable dense sequence. Being computably separable implies being computably overt.

A particular convenient class of represented space are the computable metric spaces (CMS). We can start with a designated dense sequence on which the metric is computable (given indices), and then represent arbitrary points as limits of fast converging sequences. CMSs are in particular computably Hausdorff and computably separable. The prototypic example of a CMS are the reals $\mathbb{R}$. We write \emph{CCCMS} for \emph{computably compact computable metric space}.

There is a further relevant represented space with the reals as underlying set, namely the space of lower reals $\mathbb{R}$. Here a real is represented as the supremum of a sequence of rational numbers (without any limitation on convergence rates). This space is relevant for us as we can introduce the computability structure of the space of (probability) measures on an arbitrary represented space $\mathbf{X}$ by considering them as the subspace of $\mathcal{C}(\mathcal{O}(\mathbf{X}),\mathbb{R}_<)$ of functions satisfying the properties of a (probability) measure. More precisely, these correspond to continuous valuations. Let $\PM(\mathbf{X})$ denotes the space of probability measures on $\mathbf{X}$.

A useful theorem is that for a CCCMS $\mathbf{X}$, also $\PM(\mathbf{X})$ is a CCCMS. Here we can use the \emph{Wasserstein-Kantorovich-Rubinstein} metric
\[ W(\mu,\nu) \;=\; \sup\Big\{ \big|\int f\,d\mu \:-\: \int f\,d\nu \big| \;:\; f:X\to\IR, \; \forall x,y\in X: \; |f(x)-f(y)|\leq d(x,y) \Big\} \]
If $\mathbf{X}$ a complete metric space, $\PM(\mathbf{X})$ is again a complete metric space;
and convergence w.r.t. $W$ is equivalent to weak convergence.
For an introduction to computable probability theory, see \cite{collins4}. Some further results are found in \cite{pauly-fouche}.

Regarding computational complexity of real numbers and real functions on compact metric spaces, we refer to \cite{Ko91} and \cite{DBLP:conf/lics/KawamuraS016}.

Recall that $\classSharpP_1$ is the class of all integer functions $\varphi:\{\sdzero\}^*\to\IN$
with unary arguments counting the number of witnesses
\[ \varphi(\sdzero^n) \;=\; \Card\big\{\vec w\in\{\sdzero,\sdone\}^{\poly(n)} : \sdzero^n\,\sdone\,\vec w\in P\big\} \]
to a polynomial-time decidable predicate $P\subseteq\{\sdzero,\sdone\}^*$;
a class commonly conjectured to lie strictly between (the integer function versions of)
$\classNP_1$ and $\classPSPACE$ \cite[\S18]{Papadimitriou}.

\section{The Haar measure is computable}
\label{sec:synthetic}
In this section we shall establish Theorem~\ref{theo:synthetic:nice}
using the approach to computable analysis via synthetic topology \cite{Esc04} outlined in \cite{Pau16}.
To this end, we first obtain a more technical result stating that left-invariance of a
Radon probability measure for some continuous binary operation constitutes a computably closed predicate:

\begin{theorem}
\label{theo:synthetic:main}
Let $\mathbf{X}$ be a computable metric space. For $\mu \in \PM(\mathbf{X})$ and $g \in \mathcal{C}(\mathbf{X} \times \mathbf{X},\mathbf{X})$ the following predicate is computably closed:
\[\forall U \in \mathcal{O}(\mathbf{X}), \ \forall x \in \mathbf{X} \ \mu(U) = \mu(\{y \in \mathbf{X} \ g(x,y) \in U\})\]
\end{theorem}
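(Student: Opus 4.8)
\emph{Proof strategy.} The plan is to exhibit the \emph{failure} of the displayed predicate as a semidecidable property of the pair $(\mu,g)$, which is exactly to say that the predicate is computably closed. Write $g_x^{-1}(U):=\{y\in\mathbf{X}:g(x,y)\in U\}$ for the open preimage; by letting $g$ act on open sets and then currying, $(x,U)\mapsto g_x^{-1}(U)$ is computable $\mathbf{X}\times\mathcal{O}(\mathbf{X})\to\mathcal{O}(\mathbf{X})$, uniformly in $g$. Let $\nu_x:=\mu\circ g_x^{-1}$ be the push-forward of $\mu$; since $g_x^{-1}$ is a frame homomorphism, $\nu_x$ is again an element of $\PM(\mathbf{X})$, with $\nu_x(U)=\mu(g_x^{-1}(U))$, and $(x,U)\mapsto\nu_x(U)$ is lower-real-valued and computable, uniformly in $(\mu,g)$. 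In this notation the predicate is $\forall U\in\mathcal{O}(\mathbf{X})\,\forall x\in\mathbf{X}:\mu(U)=\nu_x(U)$, and the obstruction noted in the introduction is that $\mu(U)\neq\nu_x(U)$ is not recognisable, as $\mu(U)$ and $\nu_x(U)$ are only lower reals. The key device --- the ``pair'' trick --- is to reformulate failure as the existence of $x\in\mathbf{X}$ and of basic open sets $U=\bigcup_i B(a_i,r_i)$ and $V=\bigcup_j B(b_j,s_j)$ (finite unions of balls with centres in the dense sequence and rational radii) carrying a \emph{separation certificate}, i.e. satisfying $d(a_i,b_j)>r_i+s_j$ for all $i,j$, such that
\[
\mu(U)+\nu_x(V)>1 \qquad\text{or}\qquad \nu_x(U)+\mu(V)>1 .
\]
Here $\mu(U)+\nu_x(V)$ is a \emph{sum of lower reals}, so ``$>1$'' is semidecidable.

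Granting this reformulation, I would derive computable closedness as follows. The pairs $(U,V)$ admitting a separation certificate form a computably enumerable set, because $d$ is computable on the dense sequence, so each inequality $d(a_i,b_j)>r_i+s_j$ (computable real versus rational) is semidecidable and a certificate is a finite conjunction of these. For a certified pair, $x\mapsto\mu(U)+\nu_x(V)$ and $x\mapsto\nu_x(U)+\mu(V)$ are lower-real-valued and computable in $(x,\mu,g)$, so ``$>1$'' defines open predicates in $(x,\mu,g)$; and since $\mathbf{X}$ is a computable metric space it is computably separable, hence computably overt, so quantifying $\exists x\in\mathbf{X}$ preserves openness. Dovetailing the semidecision procedures over all certified pairs, over the two disjuncts, and over the existential quantifier on $x$ therefore semidecides failure uniformly in $(\mu,g)$.

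It remains to prove the reformulation. For the easy direction, a separation certificate forces $\overline{B(a_i,r_i)}\cap\overline{B(b_j,s_j)}=\emptyset$ by the triangle inequality, hence $\overline U\cap\overline V=\emptyset$ and a fortiori $U\cap V=\emptyset$; so if the invariance predicate holds then $\mu(U)+\nu_x(V)=\mu(U)+\mu(V)=\mu(U\cup V)\le1$, and likewise $\nu_x(U)+\mu(V)\le1$, so the failure clause is false. For the converse, suppose invariance fails; by the symmetry of the clause under $(\mu,U)\leftrightarrow(\nu_x,V)$ I may assume $\mu(W)>\nu_x(W)$ for some open $W$ and some $x$. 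As $W$ is the directed union of the basic open sets with closure contained in $W$ and valuations are continuous from below, pick such a $U=\bigcup_i B(a_i,r_i)$ with $\overline U\subseteq W$ and still $\mu(U)>\nu_x(W)$. Put $G_0:=\mathbf{X}\setminus\overline U$; since $\overline U\subseteq W$ we have $W\cup G_0=\mathbf{X}$, so modularity and monotonicity of $\nu_x$ give
\[
1\;=\;\nu_x(W\cup G_0)\;\le\;\nu_x(W\cup G_0)+\nu_x(W\cap G_0)\;=\;\nu_x(W)+\nu_x(G_0),
\]
whence $\nu_x(G_0)\ge1-\nu_x(W)>1-\mu(U)$. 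Now $G_0$ is the directed union of the open sets $G_\epsilon:=\{y:d(y,a_i)>r_i+\epsilon\ \text{for all }i\}$ over rational $\epsilon>0$, so continuity from below yields a rational $\epsilon>0$ with $\nu_x(G_\epsilon)>1-\mu(U)$. Cover $G_\epsilon$ by balls $B(b,s)$ with $\overline{B(b,s)}\subseteq G_\epsilon$ and $s\le\epsilon$; once more by continuity from below a finite subunion $V=\bigcup_j B(b_j,s_j)$ already has $\nu_x(V)>1-\mu(U)$. Each $b_j$ lies in $G_\epsilon$, so $d(a_i,b_j)>r_i+\epsilon\ge r_i+s_j$ for all $i,j$: the pair $(U,V)$ carries a separation certificate, and $\mu(U)+\nu_x(V)>1$, as needed.

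The main obstacle is the pair reformulation itself, which is what converts an unrecognisable disequality of lower reals into a semidecidable comparison; within its proof the one delicate point is the converse, where the open set $V$ must be produced with a genuine positive margin from $U$ although $\mathbf{X}$ is not assumed (locally) compact --- hence the auxiliary sets $G_\epsilon$ are cut out by strict inequalities with explicit slack $\epsilon$, and the covering balls kept of radius $\le\epsilon$. The remaining ingredients --- computability of preimages and push-forwards into $\mathcal{O}(\mathbf{X})$, enumeration of certified pairs, computable overtness of $\mathbf{X}$, and continuity from below and modularity of valuations --- are routine.
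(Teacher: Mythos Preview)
Your approach is the paper's approach --- the disjoint-pair trick, reformulating invariance as $\mu(U)+\nu_x(V)\le 1$ for disjoint open $(U,V)$ so that failure becomes a semidecidable strict inequality between a sum of lower reals and $1$ --- with only executional differences. The paper quantifies over the abstract space $\mathrm{DPO}(\mathbf X)$ of all disjoint open pairs and appeals to its computable overtness (established via a formal-disjointness lemma cited from elsewhere); you instead enumerate directly a c.e.\ family of basic pairs carrying explicit certificates $d(a_i,b_j)>r_i+s_j$, which is more self-contained. For the hard direction of the equivalence, the paper shrinks $U$ to $B_{-\varepsilon}(U)=\{x:d(x,U^c)>\varepsilon\}$ and uses a countability argument to pick $\varepsilon$ with $\mu$- and $\nu_x$-null boundary shell, obtaining a clean complementary pair; you build $V$ by hand inside an explicit $\varepsilon$-collar. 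Both arguments are valid.

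There is one small slip in yours. You assert that $G_0:=\mathbf X\setminus\overline U$ equals $\bigcup_{\varepsilon>0}G_\varepsilon$, but in fact $\bigcup_{\varepsilon>0} G_\varepsilon=\mathbf X\setminus\bigcup_i\{y:d(y,a_i)\le r_i\}$, and in a general metric space $\overline{B(a_i,r_i)}$ can be strictly smaller than the closed ball $\{y:d(y,a_i)\le r_i\}$ (e.g.\ in a discrete or ultrametric space), so $G_0$ may properly contain $\bigcup_\varepsilon G_\varepsilon$ and the continuity-from-below step could lose the inequality $\nu_x(\,\cdot\,)>1-\mu(U)$. The fix is immediate: when choosing $U$, require not merely $\overline U\subseteq W$ but that each closed ball $\{y:d(y,a_i)\le r_i\}$ lie in $W$ --- such basic sets still form a directed exhaustion of $W$ --- and take $G_0:=\mathbf X\setminus\bigcup_i\{y:d(y,a_i)\le r_i\}$. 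Then $W\cup G_0=\mathbf X$ holds and $G_0=\bigcup_\varepsilon G_\varepsilon$, so the rest of your argument goes through unchanged.
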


In view of the general strategy for computability proofs from Section~\ref{sec:Overview}, this establishes (I).
Regarding (II) recall \cite[\S2.5]{GHR11} that, if $\mathbf{X}$ is a computably compact computable metric space, then so is $\PM(\mathbf{X})$.
Finally, uniqueness in Haar's theorem takes care of Condition~(III).

\subsection{Disjoint pairs of open sets}
\label{ss:Pairs}
Prima facie, the condition in Theorem \ref{theo:synthetic:main} appears to be complicated. As measures of open sets are only available as lower reals, we cannot even recognize inequality. The workaround consists in considering pairs of disjoint open sets rather than individual open sets. We shall see that quantification over such pairs is unproblematic for the spaces we are interested in here.

Given a represented space $\mathbf{X}$, we define the space $\mathrm{DPO}(\mathbf{X})$ as the subspace $\{(U,V) \mid U \cap V = \emptyset\} \subseteq \mathcal{O}(\mathbf{X}) \times \mathcal{O}(\mathbf{X})$.

\begin{observation}
\label{obs:dpoclosed}
$\mathbf{X}$ is computably overt iff $\mathrm{DPO}(\mathbf{X})$ is a computable element of $\mathcal{A}(\mathcal{O}(\mathbf{X}) \times \mathcal{O}(\mathbf{X}))$.
\begin{proof}
If $\mathbf{X}$ is computably overt, then $U \cap V \neq \emptyset$ is a recognizable property given $(U,V) \in \mathcal{O}(\mathbf{X}) \times \mathcal{O}(\mathbf{X})$. Conversely, we find that $(U, X) \notin \mathrm{DPO}(\mathbf{X})$ iff $U \neq \emptyset$.
\end{proof}
\end{observation}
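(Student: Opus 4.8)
The plan is to prove the two implications separately, and in both directions to route everything through the standard reformulation of computable overtness: $\mathbf{X}$ is computably overt precisely when the predicate ``$U \neq \emptyset$'' is recognizable uniformly in $U \in \mathcal{O}(\mathbf{X})$. The ``official'' definition quantifies an open $R \subseteq \mathbf{Y} \times \mathbf{X}$ and asks $\{y : \exists x\, (y,x) \in R\}$ to be a computable open subset of $\mathbf{Y}$, but by the exponential law for $\mathcal{O}$ this is equivalent to the non-emptiness test: curry $R$ into a computable map $\mathbf{Y} \to \mathcal{O}(\mathbf{X})$ and postcompose with that test. I would also record at the outset that ``$\mathrm{DPO}(\mathbf{X})$ is a computable element of $\mathcal{A}(\mathcal{O}(\mathbf{X}) \times \mathcal{O}(\mathbf{X}))$'' unwinds to exactly: the complement of $\mathrm{DPO}(\mathbf{X})$ is a computable open set, i.e. ``$U \cap V \neq \emptyset$'' is recognizable given $(U,V)$.

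For the direction ``$\Rightarrow$'', assume $\mathbf{X}$ is computably overt. Then recognizability of ``$U \cap V \neq \emptyset$'' is obtained by composing two computable operations: the binary intersection map $\cap \colon \mathcal{O}(\mathbf{X}) \times \mathcal{O}(\mathbf{X}) \to \mathcal{O}(\mathbf{X})$, which is computable, followed by the non-emptiness test on $\mathcal{O}(\mathbf{X})$, which overtness makes computable. Hence the complement of $\mathrm{DPO}(\mathbf{X})$ is computably open, so $\mathrm{DPO}(\mathbf{X})$ is a computable closed set. (That $\mathrm{DPO}(\mathbf{X})$ is genuinely closed in the first place is classical: its complement is the union over $x$ of the open sets $\{U : x \in U\} \times \{V : x \in V\}$.)

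For ``$\Leftarrow$'', assume ``$U \cap V \neq \emptyset$'' is recognizable in $(U,V)$. The trick, made explicit in the observation's remark, is to instantiate the second coordinate with the whole space $X$, which is a computable element of $\mathcal{O}(\mathbf{X})$: since $U \cap X = U$, recognizability of ``$U \cap X \neq \emptyset$'' — uniformly in $U$, with the fixed computable second coordinate $X$ supplied — is exactly recognizability of ``$U \neq \emptyset$'', which by the reformulation above is computable overtness of $\mathbf{X}$.

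I do not anticipate a genuine obstacle here: the ingredients are all standard (computability of $\cap$ on $\mathcal{O}(\mathbf{X})$, computability of $X$ as an element of $\mathcal{O}(\mathbf{X})$, and the equivalence of computable overtness with uniform recognizability of non-emptiness). The only point that deserves explicit care is that last equivalence, so that the bare non-emptiness test legitimately stands in for the parametrized existential quantifier that appears in the definition of overtness.
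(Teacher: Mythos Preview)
Your proof is correct and follows essentially the same approach as the paper: the forward direction recognizes $U \cap V \neq \emptyset$ via overtness (you make explicit the factoring through $\cap$ and the non-emptiness test), and the converse specializes $V = X$ to recover recognizability of $U \neq \emptyset$. The only difference is that you spell out the standard equivalence between computable overtness and the uniform non-emptiness test, which the paper leaves implicit.
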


\begin{corollary}
If $\mathbf{X}$ is computably overt, then $\mathrm{DPO}(\mathbf{X})$ is computably compact.
\begin{proof}
The space $\mathcal{O}(\mathbf{X})$ is computably compact, as it contains $\emptyset$ as a computable bottom element. Then $\mathcal{O}(\mathbf{X}) \times \mathcal{O}(\mathbf{X})$ is computably compact as a product, and finally the claim follows by noting that a computably closed subspace of a computably compact space is computably compact and invoking Observation \ref{obs:dpoclosed}.
\end{proof}
\end{corollary}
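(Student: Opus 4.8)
The plan is to exhibit $\mathrm{DPO}(\mathbf{X})$ as a computably closed subspace of a space that is already known to be computably compact, and then to invoke the standard closure properties of computable compactness: finite products of computably compact spaces are computably compact, and a computably closed subspace of a computably compact space is again computably compact.

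First I would record that, for every represented space $\mathbf{X}$, the hyperspace $\mathcal{O}(\mathbf{X})$ is computably compact. The point is that $\emptyset$ is a computable element of $\mathcal{O}(\mathbf{X})$ and is the bottom element in the specialization order, so for any open $\mathcal{W}$ of $\mathcal{O}(\mathbf{X})$ one has $\mathcal{O}(\mathbf{X}) \subseteq \mathcal{W}$ if and only if $\emptyset \in \mathcal{W}$; since membership in an open set is semidecidable and $\emptyset$ is computable, the universal quantifier $\forall_{\mathcal{O}(\mathbf{X})}$ is computable, which is exactly computable compactness. Consequently the product $\mathcal{O}(\mathbf{X}) \times \mathcal{O}(\mathbf{X})$ is computably compact as well.

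Next I would invoke Observation~\ref{obs:dpoclosed}: since $\mathbf{X}$ is computably overt by hypothesis, $\mathrm{DPO}(\mathbf{X}) = \{(U,V) : U \cap V = \emptyset\}$ is a computable element of $\mathcal{A}(\mathcal{O}(\mathbf{X}) \times \mathcal{O}(\mathbf{X}))$, that is, a computably closed subspace of $\mathcal{O}(\mathbf{X}) \times \mathcal{O}(\mathbf{X})$. It then remains to apply the standard fact that a computably closed subspace of a computably compact space is computably compact --- concretely, the universal quantifier of the subspace is obtained by, inside the ambient space, taking the union of the given relatively open set with the open complement of the closed subspace and applying the ambient quantifier. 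Taking the ambient space to be $\mathcal{O}(\mathbf{X}) \times \mathcal{O}(\mathbf{X})$ then gives the corollary. I do not anticipate any real obstacle: the statement is essentially an assembly of Observation~\ref{obs:dpoclosed} with these routine closure properties, and the only mildly nontrivial ingredient --- computable compactness of $\mathcal{O}(\mathbf{X})$ --- is immediate from the computable bottom element $\emptyset$.
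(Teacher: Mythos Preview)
Your proposal is correct and follows essentially the same route as the paper: establish computable compactness of $\mathcal{O}(\mathbf{X})$ via the computable bottom element $\emptyset$, pass to the product, and then use Observation~\ref{obs:dpoclosed} together with the fact that computably closed subspaces of computably compact spaces are computably compact. The only difference is that you spell out in slightly more detail why a computable bottom element yields computable compactness, which the paper leaves implicit.
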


\begin{lemma}
\label{lemma:dpoisovert}
If $\mathbf{X}$ is computably separable, effectively countably based and computably Hausdorff, then $\mathrm{DPO}(\mathbf{X})$ is a computable element of $\mathcal{V}(\mathcal{O}(\mathbf{X}) \times \mathcal{O}(\mathbf{X}))$.
\begin{proof}
It is shown in \cite{paulytsuiki-arxiv} that under the given conditions, we can obtain an adequate formal disjointness notion on basic open sets. We can then obtain a dense sequence in $\mathrm{DPO}(\mathbf{X})$ by constructing pairs of finite unions of basic open sets with the additional requirements that each basic open set is formally disjoint from all basic open sets listed in the opposite finite union.
\end{proof}
\end{lemma}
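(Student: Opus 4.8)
The plan is to reduce the claim to producing a single computable sequence that is dense in $\mathrm{DPO}(\mathbf{X})$ (with the topology inherited from $\mathcal{O}(\mathbf{X}) \times \mathcal{O}(\mathbf{X})$). Indeed, a computable element of $\mathcal{V}(\mathbf{Y})$ is a computably overt subset of $\mathbf{Y}$, and if a subset $A \subseteq \mathbf{Y}$ carries a computable sequence $(a_n)_{n \in \IN}$ that is dense in $A$, then for any $\mathcal{U} \in \mathcal{O}(\mathbf{Y})$ we have $A \cap \mathcal{U} \neq \emptyset$ iff $\exists n\ (a_n \in \mathcal{U})$, which is semidecidable uniformly in a name of $\mathcal{U}$; so $A$ is a computable point of $\mathcal{V}(\mathbf{Y})$. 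Hence it suffices to exhibit such a sequence for $A = \mathrm{DPO}(\mathbf{X})$ inside $\mathbf{Y} = \mathcal{O}(\mathbf{X}) \times \mathcal{O}(\mathbf{X})$.

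The organising device is a \emph{formal disjointness} relation on basic opens. Fixing an effective enumeration $(B_k)_{k \in \IN}$ of the basic open sets of $\mathbf{X}$ (available since $\mathbf{X}$ is effectively countably based), I want a computably enumerable relation $k \bowtie l$ that is \emph{sound} --- $k \bowtie l$ implies $B_k \cap B_l = \emptyset$ --- and \emph{adequate} --- whenever $x \neq y$ in $\mathbf{X}$ there are $k, l$ with $x \in B_k$, $y \in B_l$ and $k \bowtie l$. For a computable metric space one may simply take $k \bowtie l$ to mean $d(a_k, a_l) > q_k + q_l$ for the centre--radius data $B_k = B(a_k, q_k)$, $B_l = B(a_l, q_l)$: this is c.e. because $d$ is computable on the dense sequence, sound by the triangle inequality, and adequate because distinct points of a metric space are at positive distance. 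In the stated generality --- a computably separable, effectively countably based, computably Hausdorff space, not necessarily metric --- a relation with these properties is supplied by \cite{paulytsuiki-arxiv}, where computable Hausdorffness is what makes disjointness of basic opens formally recognisable.

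Granting $\bowtie$, the sequence is assembled by an unbounded search: effectively enumerate all pairs $(I, J)$ of finite index sets together with a certificate that $k \bowtie l$ holds for every $k \in I$ and every $l \in J$, and to each such $(I, J)$ associate the pair $\bigl(\bigcup_{k \in I} B_k,\ \bigcup_{l \in J} B_l\bigr)$. By soundness, $\bigl(\bigcup_{k \in I} B_k\bigr) \cap \bigl(\bigcup_{l \in J} B_l\bigr) = \bigcup_{k \in I,\, l \in J}(B_k \cap B_l) = \emptyset$, so each listed pair lies in $\mathrm{DPO}(\mathbf{X})$, and a name for it as an element of $\mathcal{O}(\mathbf{X}) \times \mathcal{O}(\mathbf{X})$ is computable from $(I, J)$; this produces a computable sequence in $\mathrm{DPO}(\mathbf{X})$.

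The remaining task is density, and this is where the genuine content lies. A basic open neighbourhood of a point $(U, V) \in \mathrm{DPO}(\mathbf{X})$ is prescribed by finitely many points $x_1, \dots, x_m \in U$ and $y_1, \dots, y_n \in V$; since $U \cap V = \emptyset$, all $x_i \neq y_j$. One must then find basic opens $B_{k_1}, \dots, B_{k_m}$ and $B_{l_1}, \dots, B_{l_n}$ with $x_i \in B_{k_i}$, $y_j \in B_{l_j}$, and --- crucially --- $k_i \bowtie l_j$ for \emph{all} $i$ and $j$ at once; then $\bigl(\bigcup_i B_{k_i},\ \bigcup_j B_{l_j}\bigr)$ falls into the prescribed neighbourhood and occurs in the enumerated sequence, so the sequence is dense. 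For a computable metric space this simultaneous separation is elementary: with $\delta := \min_{i,j} d(x_i, y_j) > 0$, choose dense-sequence points $a_{k_i}$ with $d(a_{k_i}, x_i) < \delta/5$ and radii $q_{k_i} = \delta/4$, and similarly around the $y_j$; then $d(a_{k_i}, a_{l_j}) > \delta - 2\delta/5 = 3\delta/5 > \delta/2 = q_{k_i} + q_{l_j}$. In the general setting, obtaining such a simultaneous, pairwise formally disjoint refinement --- and having it available effectively so that the corresponding $(I, J)$ is actually enumerated --- is exactly the strength of the ``adequate formal disjointness notion'' of \cite{paulytsuiki-arxiv}; I expect this to be the only non-routine point, the enumeration and soundness bookkeeping being straightforward.
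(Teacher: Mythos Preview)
Your proposal is correct and follows exactly the paper's (very terse) approach: cite \cite{paulytsuiki-arxiv} for an adequate formal-disjointness relation on basic opens, and take as the dense sequence all pairs of finite unions of basic opens that are pairwise formally disjoint. One small caveat on the extra detail you supply for density: subbasic neighbourhoods in $\mathcal{O}(\mathbf{X})$ are not in general ``prescribed by finitely many points'' (the point-open sets do not generate the Scott-type topology on $\mathcal{O}(\mathbf{X})$ when $\mathbf{X}$ has infinite compacta), but density still goes through once you observe that any disjoint pair $(U,V)$ admits a name built entirely from pairwise formally disjoint basic opens, so that the finite prefixes of that particular name already lie in your enumerated sequence and land in any prescribed neighbourhood.
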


\begin{corollary}
\label{corr:dpoisnice}
If $\mathbf{X}$ is computably separable, effectively countably based and computably Hausdorff, then $\mathrm{DPO}(\mathbf{X})$ is computably compact and computably overt.
\end{corollary}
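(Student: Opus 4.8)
The plan is to read off \cref{corr:dpoisnice} as the conjunction of the two preceding results, after checking that the hypotheses assumed here feed both of them. For computable compactness: the Background section records that computable separability implies computable overtness, so $\mathbf{X}$ is computably overt; then the Corollary stated just above \cref{lemma:dpoisovert} applies verbatim and yields that $\mathrm{DPO}(\mathbf{X})$ is computably compact. That settles one of the two assertions.

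For computable overtness: \cref{lemma:dpoisovert} already establishes, under exactly the present hypotheses (computably separable, effectively countably based, computably Hausdorff), that $\mathrm{DPO}(\mathbf{X})$ is a computable element of $\mathcal{V}(\mathcal{O}(\mathbf{X}) \times \mathcal{O}(\mathbf{X}))$. It then remains only to invoke the standard equivalence between a subset $A \subseteq Y$ being a computable point of $\mathcal{V}(Y)$ and the represented space $A$ (with the subspace representation) being computably overt: in both formulations the content is that, given an open set $O$, one can semidecide $O \cap A \neq \emptyset$. Applying this with $A = \mathrm{DPO}(\mathbf{X})$ and $Y = \mathcal{O}(\mathbf{X}) \times \mathcal{O}(\mathbf{X})$ gives computable overtness of $\mathrm{DPO}(\mathbf{X})$.

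There is no real mathematical obstacle here; the only thing worth double-checking is the alignment of notions — that the ``computably overt'' hypothesis consumed by the earlier Corollary is the same notion produced as a conclusion by \cref{lemma:dpoisovert}, and that ``computably compact'' together with ``computably overt'' for the subspace $\mathrm{DPO}(\mathbf{X})$ are precisely the two statements the corollary claims. Once the definitions are lined up, the proof is a one-line combination of the Corollary preceding \cref{lemma:dpoisovert}, of \cref{lemma:dpoisovert} itself, and of the implication ``computably separable $\Rightarrow$ computably overt''.
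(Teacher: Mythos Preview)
Your proposal is correct and matches the paper's intended reasoning: the corollary is stated without proof precisely because it is meant to be read as the immediate combination of the preceding Corollary (computable compactness of $\mathrm{DPO}(\mathbf{X})$ from overtness of $\mathbf{X}$), \cref{lemma:dpoisovert} (overtness of $\mathrm{DPO}(\mathbf{X})$), and the implication ``computably separable $\Rightarrow$ computably overt'' from the Background. Your explicit identification of computable overtness with being a computable point of $\mathcal{V}(-)$ is the only step the paper leaves entirely implicit, and it is correctly handled.
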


\begin{definition}
Given $f \in \mathcal{C}(\mathbf{X},\mathbf{X})$, $(U,V) \in \mathrm{DPO}(\mathbf{X})$ and $\mu \in \PM(\mathbf{X})$, we say that $(U,V)$ is $\mu$-invariant under $f$, iff: $$\mu(U) + \mu(f^{-1}(V)) \leq 1$$
\end{definition}

\begin{observation}
\label{obs:invarianceclosedforsinglepair}
$(U,V)$ being $\mu$-invariant under $f$ is a computably closed property.
\end{observation}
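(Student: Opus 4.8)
The plan is to unfold the definition and show that its \emph{negation}, the predicate $\mu(U) + \mu(f^{-1}(V)) > 1$, is recognizable (semidecidable) uniformly in $f \in \mathcal{C}(\mathbf{X},\mathbf{X})$, $(U,V) \in \mathrm{DPO}(\mathbf{X})$ and $\mu \in \PM(\mathbf{X})$. Since a computably closed property is by definition the complement of a computably open (i.e.\ recognizable) one, it suffices to exhibit $\mu(U) + \mu(f^{-1}(V)) > 1$ as the composite of a short chain of computable maps ending in the ``strictly greater than $1$'' test on a lower real.

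Concretely, first I would use that the preimage operator $(f,V) \mapsto f^{-1}(V)$ is computable as a map $\mathcal{C}(\mathbf{X},\mathbf{X}) \times \mathcal{O}(\mathbf{X}) \to \mathcal{O}(\mathbf{X})$; this is immediate from the defining universal properties of $\mathcal{C}(\cdot,\cdot)$ and $\mathcal{O}(\cdot)$. Next, by the very definition of the representation of $\PM(\mathbf{X})$ as a subspace of $\mathcal{C}(\mathcal{O}(\mathbf{X}),\mathbb{R}_<)$, evaluation $(\mu,W) \mapsto \mu(W)$ is computable into $\mathbb{R}_<$; applying it to $W = U$ and to $W = f^{-1}(V)$ produces two lower reals from the given data, and since addition $\mathbb{R}_< \times \mathbb{R}_< \to \mathbb{R}_<$ is computable, $\mu(U) + \mu(f^{-1}(V))$ is available as a lower real $r$. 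Finally, for a lower real $r = \sup_n q_n$ (with $q_n$ rational and computable from a name of $r$), the relation $r > 1$ holds iff $q_n > 1$ for some $n$, which we can enumerate; equivalently, $\{r \in \mathbb{R}_< : r > 1\}$ is open in $\mathbb{R}_<$. Composing these stages gives recognizability of $\mu(U) + \mu(f^{-1}(V)) > 1$, hence computable closedness of its negation, which is exactly $\mu$-invariance of $(U,V)$ under $f$.

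I do not expect a genuine obstacle here; the only delicate point — already anticipated in phrasing the definition with the bound $\leq 1$ rather than an equality — is the direction of the inequality. Measures of open sets are only lower-semicomputable, so neither $\mu(f^{-1}(V)) = \mu(U)$ nor $\mu(f^{-1}(V)) \leq \mu(U)$ would be recognizable, whereas ``a sum of two lower reals exceeds the computable real $1$'' is. The role of the disjointness $U \cap V = \emptyset$ built into $\mathrm{DPO}(\mathbf{X})$ — namely that it forces $\mu(U) + \mu(V) \leq 1$, so that $\mu(U) + \mu(f^{-1}(V)) > 1$ genuinely witnesses $\mu(f^{-1}(V)) > \mu(V)$ — is not needed for this observation and only enters the subsequent lemma relating all-pairs invariance to actual left-invariance.
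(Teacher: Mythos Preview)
Your proposal is correct and is exactly the routine verification the paper intends: the observation is stated without proof precisely because it unfolds into the chain you describe --- compute $f^{-1}(V)\in\mathcal{O}(\mathbf{X})$, evaluate $\mu$ on $U$ and on $f^{-1}(V)$ to obtain two lower reals, add them, and recognize when the sum exceeds the computable real~$1$. Your remark that disjointness plays no role here and only enters later (in \cref{lem:setsvspairs}) is also correct.
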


\begin{lemma}
\label{lemma:allpairinvarianceclosed}
Let $\mathbf{X}$ be computably separable, effectively countably based and computably Hausdorff. Then ``\emph{all pairs from $\mathrm{DPO}(\mathbf{X})$ are $\mu$-invariant under $f$}'' is a computably closed property in $\mu \in \PM(\mathbf{X})$ and $f \in \mathcal{C}(\mathbf{X},\mathbf{X})$.
\begin{proof}
Computably closed properties are closed under universal quantification over computably overt sets. So we just combine Observation \ref{obs:invarianceclosedforsinglepair} and Corollary \ref{corr:dpoisnice}.
\end{proof}
\end{lemma}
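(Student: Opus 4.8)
The plan is to obtain the statement as a direct instance of the closure principle pervasive in synthetic computability: \emph{universal quantification over a computably overt set sends computably closed predicates to computably closed predicates}. Writing the asserted property of $(\mu,f)$ as
\[ Q(\mu,f) \;\equiv\; \forall (U,V)\in\mathrm{DPO}(\mathbf{X}):\ \mu(U)+\mu(f^{-1}(V))\le 1, \]
I would let the quantified parameter range over $\mathbf{Y}=\mathcal{O}(\mathbf{X})\times\mathcal{O}(\mathbf{X})$ restricted to $K=\mathrm{DPO}(\mathbf{X})$, the free parameter over $\mathbf{Z}=\PM(\mathbf{X})\times\mathcal{C}(\mathbf{X},\mathbf{X})$, and take $R\subseteq\mathbf{Y}\times\mathbf{Z}$ to be the set of all $\bigl((U,V),(\mu,f)\bigr)$ with $\mu(U)+\mu(f^{-1}(V))\le 1$. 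The principle is the De Morgan dual of ``$\exists$ over an overt set preserves open predicates'': the complement of $Q$ is $\exists (U,V)\in\mathrm{DPO}(\mathbf{X}):\ \mu(U)+\mu(f^{-1}(V))>1$, and this is a computable open predicate of $(\mu,f)$ as soon as $\mu(U)+\mu(f^{-1}(V))>1$ is computable open in all four arguments and $\mathrm{DPO}(\mathbf{X})$ is computably overt.

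Both inputs are already at hand. First, that $R$ is computably closed --- equivalently that $\mu(U)+\mu(f^{-1}(V))>1$ is semidecidable jointly in $(U,V,\mu,f)$ --- is exactly Observation~\ref{obs:invarianceclosedforsinglepair}, read uniformly: the preimage operator $(f,V)\mapsto f^{-1}(V)$ is computable $\mathcal{C}(\mathbf{X},\mathbf{X})\times\mathcal{O}(\mathbf{X})\to\mathcal{O}(\mathbf{X})$, the valuation $U\mapsto\mu(U)$ is computable into $\mathbb{R}_<$, and $\{(a,b)\in\mathbb{R}_<\times\mathbb{R}_<:\ a+b\le 1\}$ is computably closed, its complement being recognizable from rational lower approximations --- this is precisely where the passage to the ``$\le 1$'' formulation pays off, since $=$ between lower reals is \emph{not} closed. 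Second, that $K=\mathrm{DPO}(\mathbf{X})$ is computably overt is Corollary~\ref{corr:dpoisnice}, which under the hypotheses (computably separable, effectively countably based, computably Hausdorff) rests on Lemma~\ref{lemma:dpoisovert} and the formal-disjointness construction of \cite{paulytsuiki-arxiv}. Feeding both into the closure principle yields that $Q$ defines a computably closed subset of $\PM(\mathbf{X})\times\mathcal{C}(\mathbf{X},\mathbf{X})$, as claimed.

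Given how much has been pre-digested by the preceding observations and corollary, the only genuinely delicate point is \emph{uniformity}, not any new construction. One has to make sure that Observation~\ref{obs:invarianceclosedforsinglepair}, although phrased for a fixed pair $(U,V)$, is in fact produced by a single procedure taking $(U,V,\mu,f)$ together --- otherwise the quantifier step is not licensed --- and that the quantification genuinely ranges over the overt space $\mathrm{DPO}(\mathbf{X})$ rather than over all of $\mathcal{O}(\mathbf{X})\times\mathcal{O}(\mathbf{X})$: it is the disjointness constraint, together with the assumptions on $\mathbf{X}$, that makes the domain overt, and without it the argument collapses. Once these are checked, the proof is the one-line combination of Observation~\ref{obs:invarianceclosedforsinglepair} and Corollary~\ref{corr:dpoisnice}.
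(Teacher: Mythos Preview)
Your proposal is correct and follows essentially the same approach as the paper: invoke the principle that universal quantification over a computably overt set preserves computably closed predicates, then combine Observation~\ref{obs:invarianceclosedforsinglepair} with Corollary~\ref{corr:dpoisnice}. Your additional remarks on uniformity and on why the ``$\le 1$'' formulation is needed are accurate elaborations, but the paper's proof is the one-line combination you arrive at.
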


\subsection{Proof of Theorem \ref{theo:synthetic:main}}
To be able to invoke the results of the previous subsection we need to relate invariance of disjoint pairs of open sets to invariance of individual open sets.

\begin{lemma}
\label{lem:setsvspairs}
For a computable metric space $\mathbf{X}$,  $\mu \in \PM(\mathbf{X})$ and $f \in \mathcal{C}(\mathbf{X},\mathbf{X})$ the following are equivalent:
\begin{enumerate}
\item All pairs from $\mathrm{DPO}(\mathbf{X})$ are $\mu$-invariant under $f$.
\item For all $U \in \mathcal{O}(\mathbf{X})$ it holds that $\mu(U) = \mu(f^{-1}(U))$.
\end{enumerate}
\end{lemma}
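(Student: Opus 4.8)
The plan is to prove the two implications separately; (2)$\Rightarrow$(1) will be immediate, and essentially all of the work lies in (1)$\Rightarrow$(2). The first move is to reformulate hypothesis (1) into a more workable single-family statement: I claim that ``all pairs from $\mathrm{DPO}(\mathbf{X})$ are $\mu$-invariant under $f$'' is equivalent to
\[ \mu\big(f^{-1}(B)\big)\;\le\;\mu\big(\overline{B}\big)\quad\text{for every open }B\subseteq X \]
(writing $\mu(C):=1-\mu(X\setminus C)$ for closed $C$, since $\mu$ is a priori only given on opens). Indeed, the largest open set disjoint from $B$ is $X\setminus\overline{B}$, of measure $1-\mu(\overline{B})$, so feeding the pair $(X\setminus\overline{B},B)\in\mathrm{DPO}(\mathbf{X})$ into~(1) yields exactly this bound; conversely, if $U\cap V=\emptyset$ with $U$ open, then $U$ is contained in $X\setminus\overline{V}$, whence $\mu(U)+\mu(f^{-1}(V))\le(1-\mu(\overline{V}))+\mu(\overline{V})=1$ by the displayed inequality. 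This step uses only finite additivity of $\mu$ on the opens.

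Next I would upgrade the reformulated bound to closed sets: $\mu(f^{-1}(C))\le\mu(C)$ for every closed $C$. Here metrizability enters. With $B_\varepsilon=\{x:d(x,C)<\varepsilon\}$ and $C_\varepsilon=\{x:d(x,C)\le\varepsilon\}$ one has $\overline{B_\varepsilon}\subseteq C_\varepsilon$, the $C_\varepsilon$ decrease to $C$, and $f^{-1}(C)\subseteq f^{-1}(B_\varepsilon)$; continuity from below of the valuation (applied to the increasing complements) gives $\mu(C_\varepsilon)\to\mu(C)$, so that $\mu(f^{-1}(C))\le\mu(f^{-1}(B_\varepsilon))\le\mu(\overline{B_\varepsilon})\le\mu(C_\varepsilon)\to\mu(C)$. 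Taking complements, this is precisely $\mu(f^{-1}(U))\ge\mu(U)$ for every open $U$, which is one of the two inequalities needed for~(2).

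For the reverse inequality I would exhaust $U$ from the inside by its erosions $U_n=\{x:d(x,X\setminus U)>1/n\}$, which are open, increase to $U$, and satisfy $\overline{U_n}\subseteq U$. Then $\mu(f^{-1}(U))=\sup_n\mu(f^{-1}(U_n))\le\sup_n\mu(f^{-1}(\overline{U_n}))$, and the closed-set bound from the previous paragraph gives $\mu(f^{-1}(\overline{U_n}))\le\mu(\overline{U_n})\le\mu(U)$. Combining the two inequalities yields $\mu(f^{-1}(U))=\mu(U)$, i.e.\ statement~(2). The converse (2)$\Rightarrow$(1) is then trivial: (2) gives $\mu(f^{-1}(B))=\mu(B)\le\mu(\overline{B})$, which is the reformulation of~(1) established in the first paragraph.

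I expect the main obstacle to be exactly the reason the naive ``apply invariance to the pair $(U,X\setminus U)$'' fails: condition~(1) only ever controls $\mu(f^{-1}(B))$ by the measure of the \emph{closure} $\overline{B}$, so the genuine content is recovering the boundary mass $\mu(\overline{B}\setminus B)$. The two limiting arguments above — shrinking outer $\varepsilon$-neighborhoods for closed sets, and growing erosions for open sets — are what reclaim this boundary mass; both lean essentially on metrizability, and the erosion argument additionally uses that $\mu$ is a genuine measure, so that continuity from below applies to the increasing sequence $f^{-1}(U_n)$ of open preimages.
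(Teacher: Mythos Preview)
Your proof is correct and takes a genuinely different route from the paper's for the direction $(1)\Rightarrow(2)$. The paper argues by contrapositive: given an open $U$ with $\mu(U)\neq\mu(f^{-1}(U))$, it passes to an erosion $B_{-\varepsilon_1}(U)$ chosen so that the boundary sphere $D_{-\varepsilon_1}(U)$ and its $f$-preimage both have $\mu$-measure zero (invoking that at most countably many of the pairwise disjoint spheres $D_{-\varepsilon}(U)$ can carry positive mass); for such a set, one of the pairs $\big(B_{-\varepsilon_1}(U),\,(B_{-\varepsilon_1}(U)^C)^\circ\big)$ or its swap explicitly violates DPO-invariance. You instead proceed directly: first distilling $(1)$ into the single inequality $\mu(f^{-1}(B))\le\mu(\overline{B})$ for open $B$, then squeezing this to the two halves of $(2)$ via outer $\varepsilon$-neighbourhoods (yielding $\mu(f^{-1}(C))\le\mu(C)$ for closed $C$, hence $\mu(f^{-1}(U))\ge\mu(U)$) and inner erosions (yielding $\mu(f^{-1}(U))\le\mu(U)$). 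Your approach is arguably cleaner in that it sidesteps the ``select a null-boundary radius'' trick and isolates the two inequalities; the paper's approach has the virtue of exhibiting a concrete witnessing pair. Both lean on metrizability in the same essential way---through monotone approximation by metric neighbourhoods and continuity of the valuation---so neither generalizes beyond the other.
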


\begin{proof}
2. implies
\[ \mu(U)\;+\;\mu\big(f^{-1}(V)\big)
\;=\;
\mu\big(f^{-1}(U)\big)\;+\;\mu\big(f^{-1}(V)\big)
\overset{(*)}{\;=\;}
\mu\big(f^{-1}(U)\cup f^{-1}(V)\big)  \;\leq\; 1 \]
with (*) since $f^{-1}(U)$ and $f^{-1}(V)$ are disjoint.
\\

For the converse, assume that $U$ witnesses that $f$ is not invariant, i.e.~$\mu(U) \neq \mu(f^{-1}(U))$. We shall argue that this implies the existence of a disjoint pair of open sets which is not $\mu$-invariant under $f$. Let $\delta = \frac{1}{3} |\mu(U) - \mu(f^{-1}(U))|$. Consider the sets $B_{-\varepsilon}(U) = \{x \in \mathbf{X} \mid d(x,U^C) > \varepsilon\}$. Since $U = \bigcup_{\varepsilon > 0} B_{-\varepsilon}(U)$ is a nested union and $f$ is continuous, we find that $\mu(U) = \sup_{\varepsilon > 0} \mu(B_{-\varepsilon}(U))$ and $\mu(f^{-1}(U)) = \sup_{\varepsilon > 0} \mu(f^{-1}(B_{-\varepsilon}(U)))$. Consequently, there exists some $\varepsilon_0$ such that for all $\varepsilon < \varepsilon_0$ it holds that $|\mu(U) -  \mu(B_{-\varepsilon}(U))| < \delta$ and $|\mu(f^{-1}(U)) -  \mu({f^{-1}(B_{-\varepsilon}(U))})| < \delta$.

Next, consider the sets $D_{-\varepsilon}(U) := \{x \in \mathbf{X} \mid d(x,U^C) = \varepsilon\}$. Since for different $\varepsilon$ these sets are disjoint, we know that for only countably many $\varepsilon$ can it hold that $\mu(D_{-\varepsilon}(U)) > 0$. The sets $f^{-1}(D_{-\varepsilon}(U))$ are disjoint, too, and thus the same argument applies. We can thus select some $\varepsilon_1 < \varepsilon_0$ such that $\mu(D_{-\varepsilon_1}(U)) = \mu(f^{-1}(D_{-\varepsilon_1}(U))) = 0$. This ensures that $\mu(B_{-\varepsilon_1}(U)) + \mu((B_{-\varepsilon_1}(U)^C)^\circ) = 1$ and $\mu(f^{-1}(B_{-\varepsilon_1}(U))) + \mu(f^{-1}((B_{-\varepsilon_1}(U))^C)^\circ) = 1$. Moreover, we know that $|\mu(B_{-\varepsilon_1}(U)) - \mu(f^{-1}(B_{-\varepsilon_1}(U)))| > \delta$ from $\varepsilon_1 < \varepsilon_0$, so depending on the sign of the difference, either $(B_{-\varepsilon_1}(U), (B_{-\varepsilon_1}(U)^C)^\circ)$ or $((B_{-\varepsilon_1}(U)^C)^\circ, B_{-\varepsilon_1}(U))$ is not $\mu$-invariant under $f$.
\end{proof}

\begin{proof}[Proof of Theorem \ref{theo:synthetic:main}]
By \cref{lem:setsvspairs} we can replace the invariance for open sets by invariance for disjoint pairs of open sets. By Lemma \ref{lemma:allpairinvarianceclosed}, this is a computably closed property for each fixed choice of continuous function $y \mapsto g(x,y)$. The additional universal quantification over the computably overt space $\mathbf{X}$ preserves being a computably closed predicate.
\end{proof}

\section{Explicit computation of the Haar measure}
\label{sec:Dongseong}

The synthetic arguments from Section~\ref{sec:synthetic}
establishing computability (Theorem~\ref{theo:synthetic:nice})
do not immediately exhibit an actual algorithm.
To this end, the present section takes a more explicit approach.
Its assumptions superficially differ but will be shown equivalent
(in a sense to be formalized) in \cref{sec:compare}.
Among others, we suppose computability of the \emph{size of maximum packings}.
This is a notion asymptotically related to, yet in detail (maximum packing vs. minimum covering,
open vs. closed balls) subtly different from, Kolmogorov's metric entropy \cite{Kolmogorov},
to the \emph{separation bound} from \cite[Definition~6.2]{Wei03}, and to the \emph{capacity} from \cite[Definition~12]{DBLP:conf/lics/KawamuraS016}.
All three notions can be regarded as integer Skolemizations (i.e. \emph{moduli}) of total boundedness \cite[Def~17.106]{Koh08a}.

\begin{definition}
\label{d:Packing}
For any compact metric space $(X, d)$ and its subset $U \subseteq X$,
    \begin{enumerate}
        \item $T \subseteq U$ is called an \emph{$n$-packing} of $U$ if $\forall x,y \in T (x \neq y) \rightarrow d(x,y) > 2^{-n}$.
        \item An $n$-packing $T$ is \emph{maximum} if $|T|\geq |S|$ for every $n$-packing $S$ of $U$.
        \item $\{T_n\}_{n=1}^\infty$ is a \emph{sequence} of maximum packings if each $T_n$ is a maximum $n$-packing.
        \item $\kappa_U: \IN \rightarrow \IN$ is the \emph{size} of maximum packings of $U$ if $\kappa_U(n) =|T_n|$ where $T_n$ is a maximum $n$-packing.
    \end{enumerate}
If $U = X$, the term `of $U$' is omitted.
\end{definition}
Our definition features strict inequality of pairwise distances:
this implies that a maximum $n$-packing $T_n$ can be found algorithmically
by exhaustive search, provided that its size is given/computable:

\begin{fact}
\label{f:Packing}
    For any computable metric space $(M,d,A,\alpha)$, if the size of maximum packings $\kappa_X$ is computable, then for any $n$, we can compute an encoding of a maximum $n$-packing $T_n$ in $A$. In other words, we can compute $S \subseteq \Sigma^*$ s.t. $\alpha(S)$ is a maximum $n$-packing.
\end{fact}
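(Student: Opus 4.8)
The statement asserts that if the size function $\kappa_X$ is computable, then for each $n$ we can effectively produce a finite list of names of dense-sequence points $\alpha(s_1),\dots,\alpha(s_k)$ forming a maximum $n$-packing. The natural approach is a brute-force search over finite tuples from the dense sequence $A = \alpha(\Sigma^*)$, using $\kappa_X(n)$ as the stopping target: we enumerate candidate tuples, semidecide whether each is an $n$-packing, and halt as soon as we find one whose size equals $\kappa_X(n)$. Two things have to be checked: that "being an $n$-packing" is semidecidable on tuples of dense-sequence points, and that such a tuple actually exists among the dense points, so the search terminates.

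First I would address semidecidability. For a finite tuple $(s_1,\dots,s_k)$ of words, the condition ``$\alpha(s_i) \neq \alpha(s_j) \Rightarrow d(\alpha(s_i),\alpha(s_j)) > 2^{-n}$'' for all $i\neq j$ rewrites (since we want strict inequality throughout, and equal points trivially satisfy the implication) simply as: for all $i\neq j$, $d(\alpha(s_i),\alpha(s_j)) > 2^{-n}$ \emph{or} $\alpha(s_i)=\alpha(s_j)$. But the cleaner observation, and the reason the strict inequality in \cref{d:Packing} matters, is that we do not need to decide equality at all: a tuple of \emph{distinct} points with pairwise distance $> 2^{-n}$ has, in particular, all pairwise distances $> 2^{-n}$, hence is genuinely an $n$-packing; conversely any $n$-packing of size $k$ consists of $k$ distinct points. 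So it suffices to semidecide ``$d(\alpha(s_i),\alpha(s_j)) > 2^{-n}$ for all $i\neq j$'', which is possible because the metric is computable on the dense sequence given indices, and $r > 2^{-n}$ is a c.e. condition on a computable real $r$. Running this semidecision in parallel (dcovetailing) over all finite tuples of words, we obtain an enumeration of (codes for) all $n$-packings contained in $A$.

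Next I would argue termination. We are told $\kappa_X(n) = |T_n|$ where $T_n$ is a maximum $n$-packing in the whole space $X$. The point is that there is also a maximum $n$-packing with the \emph{same} cardinality whose points lie in the dense set $A$: given any $n$-packing $\{x_1,\dots,x_k\}\subseteq X$, the pairwise distances strictly exceed $2^{-n}$, so there is some $\varepsilon>0$ with all pairwise distances $\geq 2^{-n}+2\varepsilon$; replacing each $x_i$ by a point $a_i\in A$ with $d(x_i,a_i)<\varepsilon$ (possible by density) and noting these $a_i$ are distinct with pairwise distance $> 2^{-n}$ by the triangle inequality, yields an $n$-packing in $A$ of the same size $k$. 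Applying this to a maximum $n$-packing shows $A$ contains an $n$-packing of size $\kappa_X(n)$. Therefore the dovetailed enumeration above will eventually list some packing of size exactly $\kappa_X(n)$; we halt and output it, and by maximality of $\kappa_X(n)$ this output is a maximum $n$-packing. Uniformity in $n$ is immediate since $\kappa_X$ is computable and the whole procedure is driven by its value.

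\textbf{Main obstacle.} There is no deep obstacle here; the only subtlety — and the reason the proof is short but not vacuous — is the interplay between the strict inequality in the definition of packing and the non-decidability of equality in a computable metric space. Getting this right is exactly what lets us sidestep any need to test $\alpha(s_i) = \alpha(s_j)$: we semidecide only the open condition $d > 2^{-n}$, and the density-perturbation argument guarantees that restricting attention to $A$ costs nothing in packing size. Everything else is routine dovetailing.
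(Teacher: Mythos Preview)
Your proposal is correct and matches the paper's proof essentially step for step: both perturb an arbitrary maximum $n$-packing into the dense set $A$ using the strict inequality and the triangle inequality, then dovetail a semidecision of the open condition $d(\alpha(s_i),\alpha(s_j))>2^{-n}$ over $\kappa_X(n)$-element tuples from $A$ until one succeeds. The only difference is presentational (you treat semidecidability first, the paper treats existence first), and your explicit remark about why equality testing is unnecessary is exactly the point the paper makes when it notes the test ``does not include equality.''
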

\begin{proof}
    First, let's show that $A$ includes a maximum $n$-packing as a subset. Fix a maximum $n$-packing $T_n$ and let $T_n := \{p_1, \dots, p_N\}$. Since $\forall i,j (i \neq j) \Rightarrow d(p_i,p_j) > 2^{-n}$, if we let $R := \min_{i \neq j} d(p_i, p_j)$ then $R>2^{-n}$. Let $\delta = R-2^{-n}$. Then $\forall p_i \in T_n, \exists {p_i}' \in A$ s.t. ${p_i}' \in B_{\frac{\delta}{2}}(p_i)$ since $A$ is dense. Thus, $\{p_1', \dots, p_N'\} \subseteq A$ and $\forall i,j (i \neq j) \Rightarrow d(p_i', p_j') > d(p_i,p_j) - d(p_i,p_i') - d(p_j,p_j') > R-\delta \geq 2^{-n}$, which means $\{p_1', \dots, p_N'\}$ is a maximum $n$-packing.
    \\Second, let's show that there is an algorithm that outputs a maximum $n$-packing in $A$ if there is one. The algorithm is to dovetail the test of distance between $\kappa_X(n)$ element subsets of $A$. since the test does not includes equality, it is semidecidable. Thus, the algorithm will output a maximum $n$-packing if there is one.\\
    Combining the first and the second step gives the computability of a maximum $n$-packing.
\end{proof}

\begin{theorem}
\label{theo:explicit}
Let $(X, d)$ be a computable metric space and $(X,e,\circ,\cdot^{-1})$ a compact topological group.
Suppose that the metric $d$ is bi-invariant:
\[ \forall a,b,c\in X: \; d(a \circ c, b \circ c) = d(a,b) = d(c \circ a, c \circ b) \]
And suppose that the size of maximum packings $\kappa_X:\IN\to\IN$ is computable.
Then the Haar integral $\calC(X)\ni f\mapsto\int_X f\,d\mu$ is computable.
\end{theorem}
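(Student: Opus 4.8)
Since computability of the Haar measure is in any case established by \cref{theo:synthetic:nice}, the task here is really to exhibit an \emph{algorithm}, and the plan is to couple the ``average against a maximum packing'' idea with the standard bracketing technique. A given $f\in\calC(X)$ comes with a computable modulus of continuity $\omega_f$ (available because $X$ is computably compact). For each $n$, \cref{f:Packing} together with computability of $\kappa_X$ produces a maximum $n$-packing $T_n=\{t_1,\dots,t_{M_n}\}$ of points from the dense sequence; being maximum it is maximal, hence a $2^{-n}$-net (otherwise it could be enlarged). From $T_n$ I would build the symmetric, ``lazy'' averaging operator $B_n\colon\calC(X)\to\calC(X)$,
\[ (B_nf)(g)\;=\;\tfrac12\,f(g)\;+\;\tfrac1{4M_n}\sum_{t\in T_n}\bigl(f(g\circ t)+f(g\circ t^{-1})\bigr), \]
computable because $\circ$, $\cdot^{-1}$ and evaluation on $\calC(X)$ are. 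Since $\mu$ is right-invariant (\cref{f:Haar}), $\int_X(B_nf)\,d\mu=\int_X f\,d\mu$, hence $\int_X(B_n^k f)\,d\mu=\int_X f\,d\mu$ for every $k$, so that
\[ \int\nolimits_X f\,d\mu\;\in\;\bigl[\;\min_{g\in X}(B_n^k f)(g)\;,\ \max_{g\in X}(B_n^k f)(g)\;\bigr], \]
an enclosure whose endpoints are computable reals, since $B_n^k f$ is a computable element of $\calC(X)$ and $X$ is computably compact. Everything thus reduces to making this enclosure narrow.

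That width estimate is the heart of the matter. Choosing $k=1$ --- simply averaging $f$ over one maximum packing --- is the ``ratio'' heuristic the introduction warns is wrong: a maximum packing's empirical measure is pinned to $\mu$ only up to the local doubling behaviour of $d$, unbounded in general, so $B_n$ by itself need not flatten $f$. The remedy is to iterate. For fixed $n$ we have $(B_nf)(g)=\int f(g\circ s)\,d\tau_n(s)$ for a symmetric probability measure $\tau_n$ with $\tau_n(\{e\})\ge\tfrac12$ and $\overline{\langle\operatorname{supp}\tau_n\rangle}=\overline{\langle T_n\rangle}=:H_n$; by the classical convergence of random walks on compact groups, $\tau_n^{*k}$ tends weakly as $k\to\infty$ to the Haar measure of $H_n$, and since the family $\{f(g\circ\cdot)\colon g\in X\}$ is equicontinuous and bounded, $B_n^k f$ converges \emph{uniformly} to $G_n(g):=\int_{H_n}f(g\circ h)\,d\mu_{H_n}(h)$. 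Here bi-invariance of $d$ enters: $G_n$ factors through the coset space $X/H_n$ with a modulus $\le\omega_f$, and since $T_n\subseteq H_n$ is a $2^{-n}$-net the space $X/H_n$ has diameter $\le 2^{-n+1}$, so $\operatorname{osc}(G_n)\le\omega_f(2^{-n+1})$. Writing $\mathrm{w}(n,k)=\max_g(B_n^k f)-\min_g(B_n^k f)$, we conclude $\mathrm{w}(n,k)\to\operatorname{osc}(G_n)$ as $k\to\infty$ for each $n$, while $\operatorname{osc}(G_n)\le\omega_f(2^{-n+1})\to 0$ as $n\to\infty$; so $\inf_{n,k}\mathrm{w}(n,k)=0$.

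The algorithm then dovetails over all $(n,k)\in\IN^2$: for each pair it computes $T_n$ via \cref{f:Packing}, forms $B_n^k f$, and approximates $\min_g(B_n^k f)$ and $\max_g(B_n^k f)$ with enough precision to decide whether the resulting enclosure of $\int_X f\,d\mu$ is narrower than the requested error; it halts at the first pair where this succeeds and returns the midpoint. Correctness is the bracketing identity, termination the width estimate. I expect essentially all the work to sit in that width estimate --- quantifying the mixing $B_n^k f\to G_n$ and proving the geometric lemma that the closed subgroup generated by a $2^{-n}$-net has $2^{-n+1}$-small coset space --- whereas the implicit-to-explicit packaging (computing maximum packings, computing extrema over a computably compact space, dovetailing) is routine, as are the minor technicalities of symmetrising/delaying $\tau_n$ to kill periodicity and of reducing the uniform-convergence step to Lipschitz $f$.
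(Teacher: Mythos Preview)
Your proposal has a genuine gap at its very first step: the averaging operator $B_n$ requires evaluating $f(g\circ t)$ and $f(g\circ t^{-1})$, hence computing the group operation and inversion. But the hypotheses of \cref{theo:explicit} do \emph{not} assume these to be computable --- the paper points this out explicitly right after the theorem statement, and later exhibits a compact group on a CCCMS whose bi-invariant metric and Haar measure are computable while the group law is not. Your opening appeal to \cref{theo:synthetic:nice} does not rescue this: that theorem has different hypotheses (it \emph{does} assume $\circ$ computable), and the comparison in \cref{sec:compare} shows the two hypothesis sets yield the same conclusion only by an indirect route (uniqueness of the Haar measure plus computable compactness of the set of compatible group laws), never by recovering $\circ$ itself from the metric.

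The paper's algorithm, by contrast, never applies $\circ$. It partitions $X$ into located co-inner-regular pieces built purely from closed balls of the given metric, computes their Haar measure via the packing bounds of \cref{lem:coreIneq} (where $\circ$ appears only in the correctness \emph{proof}, not in the code), and returns a weighted Riemann sum $\sum_i f(p_i)\,\mu(U_i)$. Bi-invariance of $d$ is exactly what makes the metric carry enough group information for this to work without access to the multiplication. If one \emph{adds} computability of $\circ$ to the hypotheses (i.e.\ works under the assumptions of \cref{theo:synthetic:nice} instead), your random-walk/averaging argument becomes a legitimate and rather elegant alternative --- the coset-space diameter bound on $\operatorname{osc}(G_n)$ is nice --- but as a proof of \cref{theo:explicit} as stated it does not get off the ground.
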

Recall \cite[\S8.1]{Wei00} that a computable metric space $(X,d)$
comes with a dense sequence $\xi:\IN\to X$ such that the
real double sequence $d:\IN\times\IN\ni (a,b)\mapsto d\big(\xi(a),\xi(b)\big)$ is computable.
Note that, as opposed to \cref{theo:synthetic:nice},
we do not suppose the group operation $\circ$ (nor neutral element nor inversion)
to be computable but instead require the metric to be bi-invariant.
See Section~\ref{sec:compare} for a comparison
between the different hypotheses.

\subsection{Mathematical Estimates of Haar Measures}
\label{ss:Estimates}
Invariance of both metric $d$ and Haar measure $\mu$ implies that the content $\mu(B)$ of an open ball $B=B_r(c)$ depends only on its radius $r$, but not on its center $c$.
Intuitively, for a sufficiently large maximum packing $T$, said volume should be approximated by the ratio of points in $B$ to the total number of points
(Definition~\ref{d:Counting}).
If $B_r(c)$ contains significantly smaller a fraction, then by double counting some other $B_r(c')$ would need to `compensate' with a larger fraction,
hence invariance suggests that more points can be added to $T$ at $B(r,c)$ as well, contradicting maximality.
Lemma~\ref{lem:evenDist} below formalizes this idea both in its statement and proof. 
\begin{definition}
    For a metric space $(X, d)$ and its subset $U \subseteq X$, we introduce the outer generalized closed ball as $\genBall{r}{U} := \{x \in X \mid d(x,U) \leq r\}$.
    Similarly, the inner generalized closed ball is introduced as $\genBall{-r}{U} = \{ x\in X : d(x,U^c) \geq r\}$.
\end{definition}
For $0\leq r,s$ it holds
\begin{equation}
\label{e:Balls}
\genBall{+r}{\genBall{-r}{U}} \;\subseteq\; \overline{U} \;\subseteq\;
\genBall{-r}{\genBall{+r}{U}}, \qquad
\genBall{+r}{\genBall{+s}{U}} \;\subseteq \; \genBall{+r+s}{U}
\end{equation}

\begin{lemma}\label{lem:evenDist}
    Suppose $(X,d, \circ)$ is a compact topological group with bi-invariant metric $d$ and a maximum $n$-packing $T_n$ of size $\kappa_X(n)$.
        Then for any $x \in X$ and measurable $U \subseteq X$ it holds:
    $$\kappa_{\genBall{-2^{-n}}{U}}(n) \leq
    |T_n \cap xU| \leq
    \kappa_{\genBall{2^{-n}}{U}}(n)$$
\end{lemma}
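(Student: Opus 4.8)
The plan is to derive the two inequalities from maximality of the packing $T_n$, transported via the left-translation $x\cdot(-)$, which is an isometry because $d$ is bi-invariant (here only left-invariance is used). The key observation is that, since $d(x\circ a, x\circ b) = d(a,b)$, the set $x^{-1}T_n = \{x^{-1}\circ t : t\in T_n\}$ is again an $n$-packing of $X$ of the same size $\kappa_X(n)$, and more to the point $|T_n\cap xU| = |x^{-1}T_n \cap U|$. So it suffices to understand how many points of an arbitrary maximum $n$-packing $S := x^{-1}T_n$ can land inside a measurable set $U$.

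For the upper bound: the points of $S$ lying in $U$ form, a fortiori, an $n$-packing contained in $U$, hence contained in $\genBall{2^{-n}}{U}$; since $\kappa_{\genBall{2^{-n}}{U}}(n)$ is by definition the maximal cardinality of an $n$-packing inside $\genBall{2^{-n}}{U}$, we get $|S\cap U| \le \kappa_{\genBall{2^{-n}}{U}}(n)$. (In fact $S\cap U$ is already an $n$-packing of $U$ itself, so even $|S\cap U|\le \kappa_U(n)\le\kappa_{\genBall{2^{-n}}{U}}(n)$; stating it with the enlarged ball is just for symmetry with the lower bound.) For the lower bound: take a maximum $n$-packing $P$ of $\genBall{-2^{-n}}{U}$, so $|P| = \kappa_{\genBall{-2^{-n}}{U}}(n)$. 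I want to compare $P$ with $S$. Consider $S\cup P$: I claim every point of $P$ is within $2^{-n}$ of some point of $S$. Indeed, if some $p\in P$ had $d(p,s)>2^{-n}$ for all $s\in S$ — using the strict inequality defining packings — then $S\cup\{p\}$ would be an $n$-packing strictly larger than $S$, contradicting that $S$ is a maximum $n$-packing of $X$. Hence each $p\in P$ has a "witness" $s(p)\in S$ with $d(p,s(p))\le 2^{-n}$. Now, distinct $p,p'\in P$ satisfy $d(p,p')>2^{-n}$, so... the witnesses need not be distinct yet; but $d(s(p), p)\le 2^{-n}$ together with $p\in\genBall{-2^{-n}}{U}$, i.e. $d(p,U^c)\ge 2^{-n}$, forces $s(p)\in U$ by the triangle inequality $d(s(p),U^c)\ge d(p,U^c)-d(p,s(p))\ge 0$ — this only gives $s(p)\in\overline U$, so I will instead run the witness argument with a slightly smaller radius, or observe that the maximality of $S$ as a packing of $X$ actually lets me insist on strict inequality: if $d(p,s)>2^{-n}$ for all $s\in S$ then we could adjoin $p$; so the negation gives some $s$ with $d(p,s)\le 2^{-n}$, and combined with $d(p,U^c)\ge 2^{-n}$... the boundary case is the only delicate point.

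The main obstacle I anticipate is exactly this boundary bookkeeping in the lower bound: distinguishing $\le$ from $<$ when combining the packing condition (strict $>2^{-n}$) with membership in the inner ball $\genBall{-2^{-n}}{U}$ (closed condition $\ge 2^{-n}$), and separately ensuring the map $p\mapsto s(p)$ is \emph{injective} so that $|P|\le|S\cap U|$. Injectivity should follow because if $s(p)=s(p')=s$ then $d(p,p')\le d(p,s)+d(s,p')\le 2\cdot 2^{-n}$, which does not immediately contradict $d(p,p')>2^{-n}$; the fix is to build $P$ as a maximum $n$-packing but assign witnesses at scale $2^{-(n)}$ while exploiting that two points of $P$ share a witness ball of radius $2^{-n}$ only if they are $<2^{-n+1}$ apart — so I will instead argue directly that $|S\cap U|\ge\kappa_{\genBall{-2^{-n}}{U}}(n)$ by showing $S\cap U$ is itself an $n$-packing of $\genBall{-2^{-n}}{U}$ that is \emph{maximum among $n$-packings of} $\genBall{-2^{-n}}{U}$: any point $q\in\genBall{-2^{-n}}{U}$ at distance $>2^{-n}$ from every point of $S\cap U$ is also at distance $>2^{-n}$ from every point of $S\setminus U$ (since such points lie outside $U$, while $q$ is $2^{-n}$-deep inside $U$), hence at distance $>2^{-n}$ from all of $S$, contradicting maximality of $S$. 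This is the clean route, and I will present the lower bound that way, reserving the witness argument only as intuition; the upper bound is then immediate as above. Invariance of $\kappa$ under the isometry $x^{-1}\circ(-)$, which I use to pass from $T_n$ to $S$, is where bi-invariance of $d$ enters.
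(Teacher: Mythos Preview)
Your upper bound is correct and matches the paper's argument (the paper works directly with $T_n\cap xU$ and invokes $\kappa_{xU}(n)=\kappa_U(n)$ from bi-invariance, whereas you pre-translate by $x^{-1}$; these are equivalent cosmetic choices).

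The lower bound via your ``clean route'' has a genuine gap. First, $S\cap U$ is not in general a subset of $\genBall{-2^{-n}}{U}$, so it is not an $n$-packing of that set in the sense of Definition~\ref{d:Packing}. More importantly, the argument you give --- that no $q\in\genBall{-2^{-n}}{U}$ can be adjoined to $S\cap U$ while keeping pairwise distances $>2^{-n}$ --- establishes only an \emph{inextensibility} (maximality) property. It does not show that $|S\cap U|$ dominates the cardinality of every $n$-packing of $\genBall{-2^{-n}}{U}$, and maximal packings can be strictly smaller than maximum ones. In fact your ``clean route'' is a rephrasing of the witness argument you already rejected: its conclusion is precisely that every $q\in\genBall{-2^{-n}}{U}$ admits a witness $s(q)\in S\cap U$ with $d(q,s(q))\le 2^{-n}$, and you have correctly identified that injectivity of $q\mapsto s(q)$ fails at this scale (two distinct $q,q'$ with $d(q,q')>2^{-n}$ can share a witness since $2\cdot 2^{-n}>2^{-n}$).

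The paper closes this gap with a \emph{swap} rather than a witness map: take a maximum $n$-packing $P$ of $\genBall{-2^{-n}}{xU}$ and form $P\cup(T_n\cap (xU)^c)$. Points of $P$ lie $2^{-n}$-deep in $xU$ while points of $T_n\cap(xU)^c$ lie outside $xU$, so the union is again an $n$-packing of $X$; since $T_n$ is maximum and the union is disjoint, $|P|+|T_n\cap(xU)^c|\le |T_n|$, hence $|T_n\cap xU|\ge |P|=\kappa_{\genBall{-2^{-n}}{xU}}(n)=\kappa_{\genBall{-2^{-n}}{U}}(n)$. The $\ge$ versus $>$ boundary issue you flag is present in this step as well, but that is a separate and much milder point than the maximal/maximum conflation.
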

\begin{proof}
    $\abs{T_n \cap xU} \leq \kappa_{xU}(n)$ since $T_n \cap xU$ is an $n$-packing of $xU$. Because of bi-invariance, $\kappa_U(n) = \kappa_{xU}(n)$. So $\abs{T_n \cap xU} \leq \kappa_U(n) \leq \kappa_{\genBall{2^{-n}}{U}}(n)$ since $U \subseteq \genBall{2^{-n}}{U}$.\\
    Let $S_n$ be a maximum $n$-packing of $\genBall{-2^{-n}}{xU}$. Then $S_n \cup (T_n \cap (xU)^c)$ is an $n$-packing for the whole space $X$. Since $T_n$ is maximum, $\abs{S_n} + \abs{T_n \cap (xU)^c} = \abs{S_n \cup (T_n \cap (xU)^c)} \leq \abs{T_n}$. Thus $\abs{T_n \cap xU} \geq \abs{S_n} = \kappa_{\genBall{-2^{-n}}{U}}(n) = \kappa_{\genBall{-2^{-n}}{xU}}(n)$.\\
\end{proof}

\begin{definition}
\label{d:Counting}
Abbreviate
$\mu_T := \frac{1}{|T|} \sum_{p \in T} \delta_p$
where $\delta_p$ denotes the Dirac measure.
\end{definition}

\begin{lemma}\label{lem:coreIneq}
Let $(X,d, \circ)$ be a compact topological group with bi-invariant metric $d$ and Haar measure $\mu$,
and $T_n$ a maximum $n$-packing. Then for any $U \subseteq X$:
$$\mu\big(\genBall{-2^{-n+2}}{U}\big) \;\leq\;
\mu_{T_n}\big(\genBall{-2^{-n+1}}{U}\big) \;\leq\;
\mu(U) \;\leq\;
\mu_{T_n}\big(\genBall{2^{-n+1}}{U}\big) \;\leq\;
\mu\big(\genBall{2^{-n+2}}{U}\big)$$
\end{lemma}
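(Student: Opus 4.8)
The plan is to chain together four inequalities, two on each side of $\mu(U)$, alternately comparing $\mu$ with the empirical measure $\mu_{T_n}$ and absorbing an error term of size $2^{-n+1}$ into the radius of a generalized ball. First I would prove the right-hand chain. For the inner inequality $\mu(U)\leq \mu_{T_n}\big(\genBall{2^{-n+1}}{U}\big)$, I would write $\mu_{T_n}\big(\genBall{2^{-n+1}}{U}\big)=\frac{1}{|T_n|}\cdot\big|T_n\cap \genBall{2^{-n+1}}{U}\big|$ by Definition~\ref{d:Counting}, and then I want to bound this below by $\mu(U)$. The idea is that $|T_n|=\kappa_X(n)$ and, by invariance of $\mu$, we have $\mu(U)=\int_X \mu_T(x^{-1}\cdot(\,\cdot\,))$-type averaging; more concretely, I would integrate the pointwise bound from Lemma~\ref{lem:evenDist} over $x\in X$ against $\mu$. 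Indeed, applying Lemma~\ref{lem:evenDist} with $U$ replaced by $\genBall{2^{-n}}{U}$ (so that $\genBall{-2^{-n}}{\genBall{2^{-n}}{U}}\supseteq \overline{U}\supseteq U$ using \eqref{e:Balls}) gives, for every $x$,
\[
|T_n|\cdot\mu(U)\;\leq\;\kappa_{\genBall{-2^{-n}}{\genBall{2^{-n}}{U}}}(n)\;\leq\;|T_n\cap x\genBall{2^{-n}}{U}|,
\]
wait — the cleaner route is: Lemma~\ref{lem:evenDist} states $\kappa_{\genBall{-2^{-n}}{V}}(n)\leq|T_n\cap xV|\leq\kappa_{\genBall{2^{-n}}{V}}(n)$ for measurable $V$. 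Set $V=\genBall{2^{-n}}{U}$; then $x\cdot V\subseteq x\cdot\genBall{2^{-n}}{U}$ and, crucially, bi-invariance of $d$ gives $x\cdot\genBall{r}{U}=\genBall{r}{xU}$, so $|T_n\cap xV|=|T_n\cap\genBall{2^{-n}}{xU}|$. Integrating the lower bound $\kappa_{\genBall{-2^{-n}}{V}}(n)\leq|T_n\cap xV|$ over $x$ with respect to $\mu$, the left side is a constant while $\int_X |T_n\cap xV|\,d\mu(x)=\sum_{p\in T_n}\mu\{x: p\in xV\}=\sum_{p\in T_n}\mu(Vp^{-1})=|T_n|\cdot\mu(V)$ by right-invariance; hence $\kappa_{\genBall{-2^{-n}}{V}}(n)\leq|T_n|\cdot\mu(\genBall{2^{-n}}{U})$. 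That bounds $\mu$ of a ball, not $\mu_{T_n}$, so I would instead integrate the chain differently: the point is that $\mu_{T_n}(\genBall{2^{-n+1}}{U})=\frac{1}{|T_n|}|T_n\cap \genBall{2^{-n+1}}{U}|$ and I want $\geq\mu(U)$, i.e. $|T_n\cap\genBall{2^{-n+1}}{U}|\geq|T_n|\cdot\mu(U)=\kappa_X(n)\mu(U)$. Using Lemma~\ref{lem:evenDist} with $V=\genBall{2^{-n}}{U}$ and $x=e$: $|T_n\cap V|\geq\kappa_{\genBall{-2^{-n}}{V}}(n)\geq\kappa_{\overline U}(n)\geq\kappa_U(n)$, but I need the $\mu(U)$ lower bound, which comes from averaging, so the honest statement is: for \emph{every} $x$, $|T_n\cap x\genBall{2^{-n}}{U}|\geq\kappa_{\genBall{-2^{-n}}{\genBall{2^{-n}}{U}}}(n)\geq\kappa_{\overline U}(n)$, and since $\genBall{2^{-n}}{xU}=x\genBall{2^{-n}}{U}\subseteq\genBall{2^{-n+1}}{U}$ whenever $x\in\genBall{2^{-n}}{e}$— no. I would therefore take the clean averaging argument: $\frac{1}{|T_n|}\int_X |T_n\cap x\genBall{2^{-n}}{U}|\,d\mu(x)=\mu(\genBall{2^{-n}}{U})$ by right-invariance of $\mu$, and for \emph{some} $x$ the integrand is $\geq$ its average; but pointwise Lemma~\ref{lem:evenDist} gives $|T_n\cap x\genBall{2^{-n}}{U}|\leq\kappa_{\genBall{2^{-n+1}}{U}}(n)$ always, which together with \eqref{e:Balls} ($\genBall{2^{-n}}{\genBall{2^{-n}}{U}}\subseteq\genBall{2^{-n+1}}{U}$) and the fact that $x\genBall{2^{-n}}{U}=\genBall{2^{-n}}{xU}$ translates into $T_n\cap\genBall{2^{-n+1}}{U}\supseteq T_n\cap x\genBall{2^{-n}}{U}$ once $x\genBall{2^{-n}}{U}\subseteq\genBall{2^{-n+1}}{U}$, true for $x$ near $e$. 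This is getting delicate; let me state the intended structure instead.

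\emph{Intended structure.} The two outer inequalities $\mu(\genBall{-2^{-n+2}}{U})\leq\mu_{T_n}(\genBall{-2^{-n+1}}{U})$ and $\mu_{T_n}(\genBall{2^{-n+1}}{U})\leq\mu(\genBall{2^{-n+2}}{U})$ will follow by \emph{applying the two middle inequalities to a shifted set}: replace $U$ by $\genBall{2^{-n+1}}{U}$ in "$\mu(U)\leq\mu_{T_n}(\genBall{2^{-n+1}}{U})$" and use $\genBall{2^{-n+1}}{\genBall{2^{-n+1}}{U}}\subseteq\genBall{2^{-n+2}}{U}$ from \eqref{e:Balls}; symmetrically for the inner ball using $\genBall{+r}{\genBall{-r}{U}}\subseteq\overline U$. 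So everything reduces to the two middle inequalities, which are mirror images. For $\mu(U)\leq\mu_{T_n}(\genBall{2^{-n+1}}{U})$: integrate the pointwise estimate from Lemma~\ref{lem:evenDist} over $x\in X$ against $\mu$. Concretely, Lemma~\ref{lem:evenDist} gives $\kappa_{\genBall{-2^{-n}}{V}}(n)\leq|T_n\cap xV|$; choosing $V$ with $\genBall{-2^{-n}}{V}\supseteq U$ — namely $V=\genBall{2^{-n}}{U}$ by \eqref{e:Balls} — and using bi-invariance to rewrite $xV=\genBall{2^{-n}}{xU}$, and then using $\genBall{2^{-n}}{xU}\subseteq\genBall{2^{-n+1}}{U}$ for $x$ in a $\mu$-large set is the wrong path; the right path is to note $\kappa_U(n)\leq\kappa_{\genBall{-2^{-n}}{V}}(n)$ is \emph{not} what gives $\mu(U)$. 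The genuinely correct move, which I would use, is the right-invariance averaging identity
\[
\int_X |T_n\cap xW|\,d\mu(x)\;=\;\sum_{p\in T_n}\mu(\{x: x^{-1}p\in W\})\;=\;|T_n|\cdot\mu(W)
\]
valid for measurable $W$ (here I used right-invariance: $\mu(\{x:x^{-1}p\in W\})=\mu(Wp^{-1})=\mu(W)$). Take $W=\genBall{2^{-n+1}}{U}$. Then $\mu_{T_n}(W)$ averaged over translates equals $\mu(W)$ — but we want a bound for the \emph{specific} $\mu_{T_n}$, i.e. for $x=e$, not the average. This is exactly the subtlety the informal discussion after Lemma~\ref{lem:evenDist} warns about: $\mu_{T_n}(B)$ need not equal $\mu(B)$. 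So the correct deduction is: apply Lemma~\ref{lem:evenDist} with $U:=\genBall{2^{-n}}{U}$ and $x:=e$, giving $|T_n\cap\genBall{2^{-n}}{U}|\geq\kappa_{\genBall{-2^{-n}}{\genBall{2^{-n}}{U}}}(n)\geq\kappa_{\overline U}(n)\geq\kappa_U(n)$, and separately bound $\kappa_U(n)\geq|T_n|\cdot\mu(\genBall{-2^{-n}}{U})$ — still not $\mu(U)$. I now see the clean statement: \emph{average Lemma~\ref{lem:evenDist} over $x$}. We get $\kappa_{\genBall{-2^{-n}}{V}}(n)\leq\frac{1}{|T_n|}\int|T_n\cap xV|d\mu(x)=\mu(V)$, i.e. $\kappa_{\genBall{-2^{-n}}{V}}(n)\leq|T_n|\mu(V)\leq\kappa_{\genBall{2^{-n}}{V}}(n)$ for all measurable $V$. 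Thus $|T_n|\mu(V)$ is sandwiched between $\kappa$ of an inner and outer ball, for every $V$. Now for $V=U$ directly: $\kappa_{\genBall{-2^{-n}}{U}}(n)\leq|T_n|\mu(U)\leq\kappa_{\genBall{2^{-n}}{U}}(n)$. Combine with Lemma~\ref{lem:evenDist} at $x=e$: $|T_n\cap\genBall{2^{-n+1}}{U}|\geq\kappa_{\genBall{-2^{-n}}{\genBall{2^{-n+1}}{U}}}(n)\geq\kappa_{\genBall{2^{-n}}{U}}(n)\geq|T_n|\mu(U)$, using \eqref{e:Balls} ($\genBall{-2^{-n}}{\genBall{2^{-n+1}}{U}}\supseteq\genBall{2^{-n}}{U}$). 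Dividing by $|T_n|$ yields $\mu_{T_n}(\genBall{2^{-n+1}}{U})\geq\mu(U)$, as desired. The left middle inequality $\mu_{T_n}(\genBall{-2^{-n+1}}{U})\leq\mu(U)$ is the mirror image: $|T_n\cap\genBall{-2^{-n+1}}{U}|\leq\kappa_{\genBall{2^{-n}}{\genBall{-2^{-n+1}}{U}}}(n)\leq\kappa_{\genBall{-2^{-n}}{U}}(n)\leq|T_n|\mu(U)$, again via Lemma~\ref{lem:evenDist} (at $x=e$), \eqref{e:Balls}, and the sandwich.

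\emph{Main obstacle.} The delicate point — and the one I would write most carefully — is establishing the averaged sandwich $\kappa_{\genBall{-2^{-n}}{V}}(n)\leq|T_n|\cdot\mu(V)\leq\kappa_{\genBall{2^{-n}}{V}}(n)$ from the pointwise Lemma~\ref{lem:evenDist}: it requires the integration identity $\int_X|T_n\cap xV|\,d\mu(x)=|T_n|\,\mu(V)$, which in turn uses measurability of $x\mapsto|T_n\cap xV|=\sum_{p\in T_n}\mathbf 1[x^{-1}p\in V]=\sum_{p\in T_n}\mathbf 1_V(x^{-1}p)$ (a finite sum of measurable functions, since inversion and multiplication are continuous hence measurable) together with \emph{right}-invariance of $\mu$ (Fact~\ref{f:Haar}) to evaluate each $\mu(Vp^{-1})=\mu(V)$. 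One must also take care that $V$ ranges only over measurable sets — which is fine since generalized closed balls $\genBall{\pm r}{U}$ are closed, hence Borel. The remaining steps are bookkeeping with the ball inclusions \eqref{e:Balls} and the factor-of-two widening of radii ($2^{-n}+2^{-n+1}\leq 2^{-n+2}$, etc.).
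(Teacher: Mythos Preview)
Your final argument---the ``averaged sandwich'' $\kappa_{\genBall{-2^{-n}}{V}}(n)\leq|T_n|\,\mu(V)\leq\kappa_{\genBall{2^{-n}}{V}}(n)$ obtained by integrating Lemma~\ref{lem:evenDist} over $x$, then combined with Lemma~\ref{lem:evenDist} at $x=e$ and the ball inclusions~\eqref{e:Balls}---is correct and is essentially the paper's approach: the paper packages the same two applications of Lemma~\ref{lem:evenDist} around the identity $N\mu(U)=\int_X|T_n\cap xU^{-1}|\,d\mu(x)$ (obtained from $\sum_i\chi_{p_iU}$ and left-invariance) and then invokes $\mu(U)=\mu(U^{-1})$ at the end. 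One slip to fix: $\{x:p\in xW\}=pW^{-1}$, not $Wp^{-1}$, so your integration identity actually uses $\mu(pW^{-1})=\mu(W^{-1})=\mu(W)$, i.e.\ left-invariance together with the very inverse-invariance the paper makes explicit.
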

\begin{proof}
Let $T_n = \{p_1, \dots, p_N\}$ so that $\abs{T_n} = N$. Then $\mu(U)N = \sum_{i=1}^N \int_X \chi_{p_iU} d\mu = \int_X \sum_{i=1}^N \chi_{p_iU} d\mu$. Let $f := \sum_{i=1}^N \chi_{p_iU}$ then $f(x) = |\{ i : x \in p_iU \}| = |\{ i : p_i \in xU^{-1} \}| = |T_n \cap xU^{-1}|$. By \cref{lem:evenDist}, $f(x) \leq \kappa_{\genBall{2^{-n}}{U^{-1}}} \leq |T_n \cap \genBall{2^{-n+1}}{U^{-1}}|$. Dividing both sides of $\mu(U)N = \int_X f d\mu \leq |T_n \cap \genBall{2^{-n+1}}{U^{-1}}|$ by $N$ gives $\mu(U) \leq  \mu_{T_n}(\genBall{2^{-n+1}}{U^{-1}})$.
Additionally, $\mu(U) = \mu(U^{-1})$ because if we let $\lambda(U) := \mu(U^{-1})$
then $\lambda$ is right-invariant, hence by Fact~\ref{f:Haar} coincides with the Haar measure $\mu$.
This completes the proof of the third inequality; the other three inequalities can be obtained similarly.
\end{proof}
For the illustration of \cref{lem:coreIneq}, see \cref{fig:coreIneqIllustration}.

\begin{figure}[t]
\center{
\includegraphics[width=0.4\linewidth]{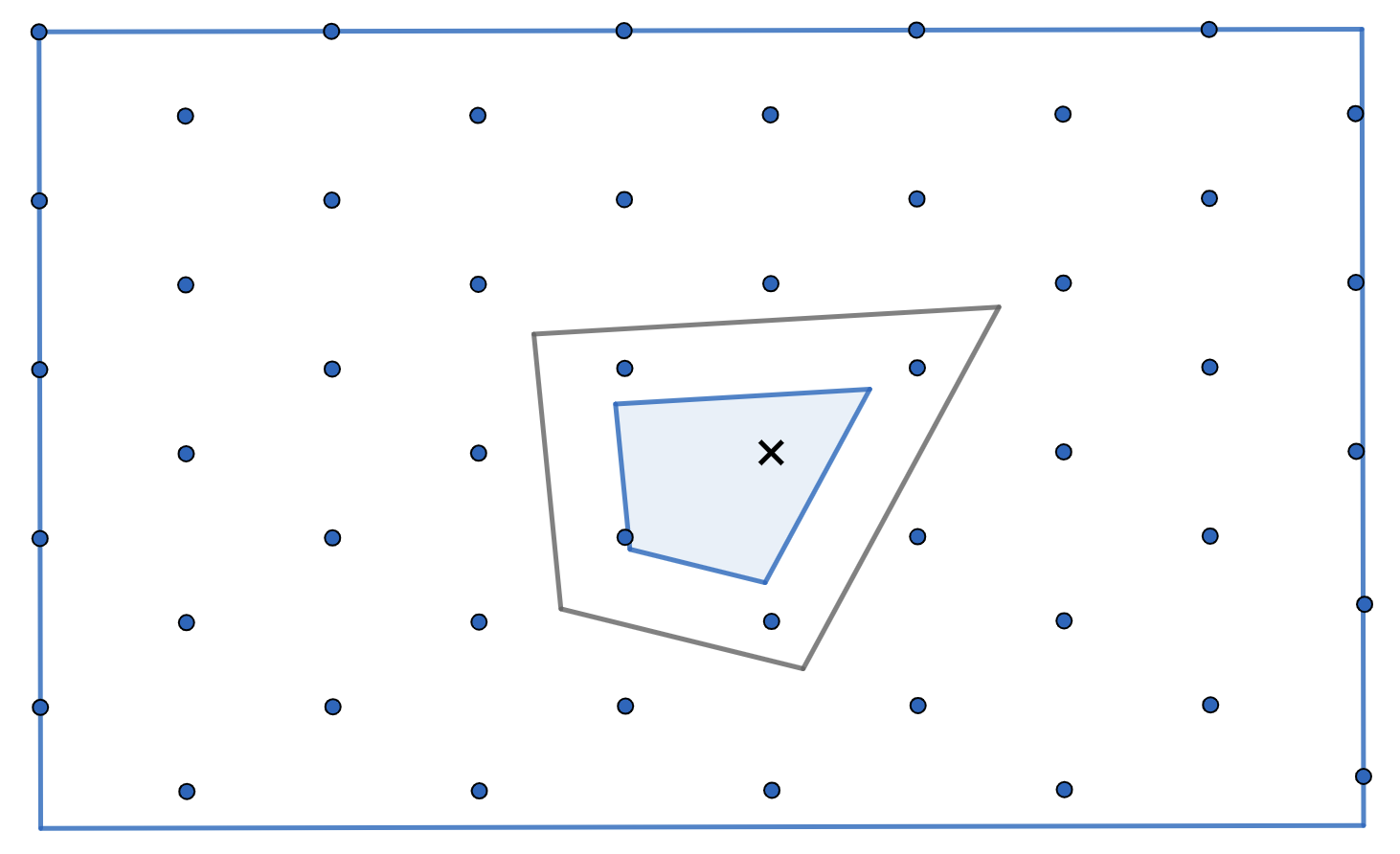}
\includegraphics[width=0.4\linewidth]{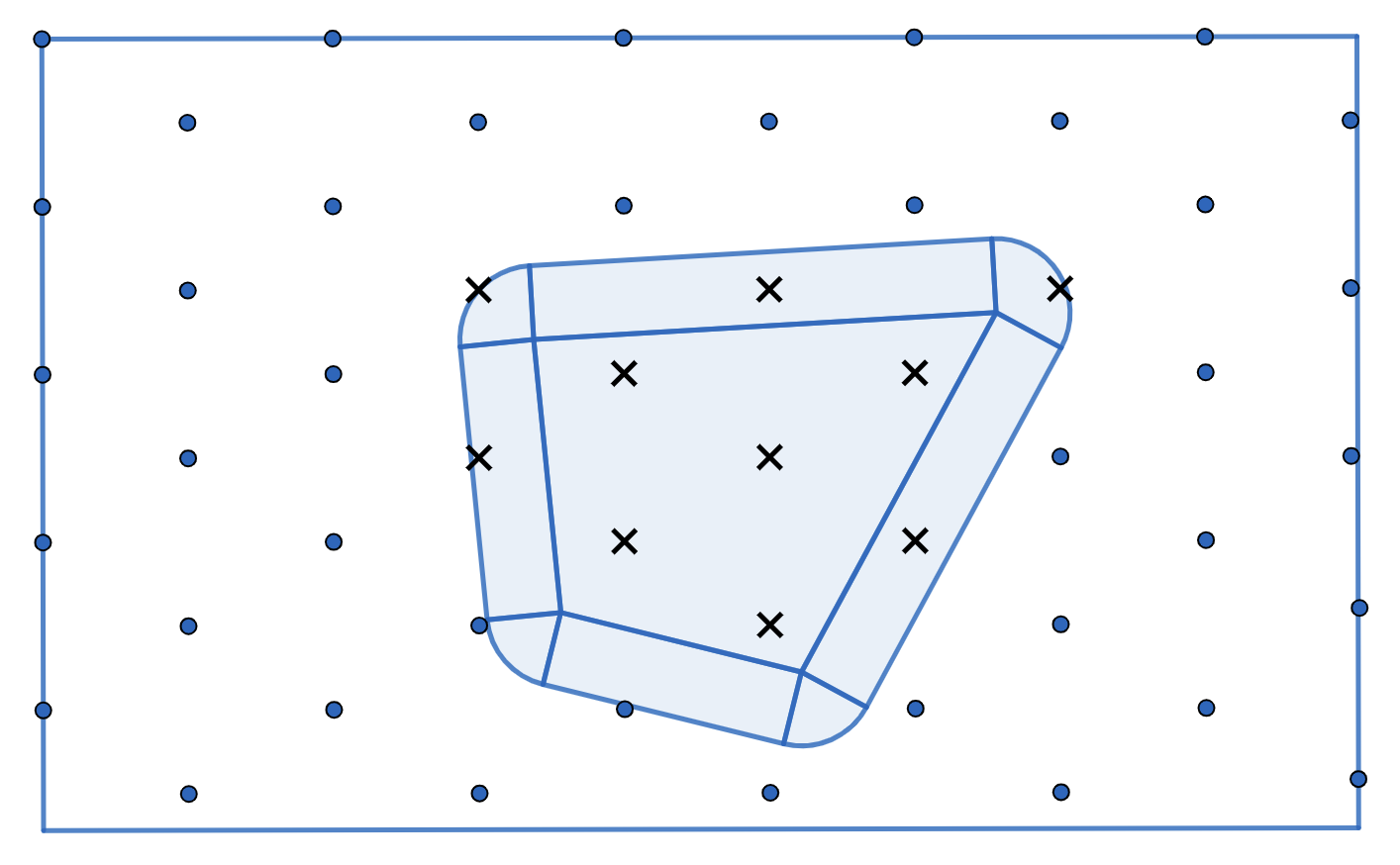}
}
\caption{Illustration of \cref{lem:coreIneq}. A blue rectangle represents the space. Blue points represent the maximum $n$-packing. A black shape represents $U$. Blue colored shapes represent inner and outer generalized balls. Counting cross-marked points and dividing it by the number of (any) points gives $\mu_{T_n}\big(\genBall{-2^{-n+1}}{U}\big)$ and $\mu_{T_n}\big(\genBall{2^{-n+1}}{U}\big)$.
}
\label{fig:coreIneqIllustration}
\end{figure}

\subsection{Algorithmic Approximation of Haar Measures}
\label{ss:Algorithm}
Our strategy to compute $\mu(U)$ with \cref{lem:coreIneq} is to compute $\mu_{T_n}$ and assert that the sequence of intervals $\big\{ \; \big[ \; \mu_{T_n}\big(\genBall{-2^{-n+1}}{U}\big), \; \mu_{T_n}\big(\genBall{2^{-n+1}}{U}\big) \; \big] \; \big\}_{n=1}^\infty$ include $\mu(U)$, and that its length converges to zero. However, there are two obstacles:
\begin{enumerate}
    \item $\mu_T$ is discrete (i.e. the value of $\mu_T(U)$ can jump even by a small perturbation to $U$), which makes it uncomputable.
    \item If $\lim_{r \rightarrow 0} \mu(\genBall{r}{U}) = \mu(U)$, then \cref{lem:coreIneq} guarantees that the length of the interval converges. However, the hypothesis may not hold in general.
\end{enumerate}
This section works around these obstacles and gives an algorithm that can compute the measure of sufficiently rich a class of sets to perform the integration.

The first thing to address is $\mu_T$. It is not computable, but procedure \method{pseudoCount} below can bound its measure on closed sets whose distances to any given points is computable. The latter condition is known as being (Turing-) located \cite{turingLocated}:

\begin{definition}
A closed subset $S$ of a computable metric space $(X,d)$ is \emph{located} if the continuous function $X\ni p \mapsto d(p,S)\in\IR$ is computable.
\end{definition}
Located sets are sometimes called computably closed sets, but being located is different from being a computable element of $\mathcal{A}(X)$.

Our workaround to the second obstacle is, instead of trying to compute the measure of every closed set, to effectively `approximate' the given set by those satisfying the convergence condition and to compute their measure. Let us define such sets first:
\begin{definition}
\label{d:regular}
On a topological space $(X, \tau)$ with a Borel measure $\mu$, call a measurable set $U$ \emph{co-inner regular} iff
\[
\mu(U) \;=\; \sup \big\{ \mu(V) \;|\; V \subseteq U \text{ open and measurable} \big\} \enspace . \]
On a compact metric group $(X, d, \circ)$ with the Haar measure $\mu$ where $d$ is bi-invariant,
a real number $r>0$ is a \emph{co-inner regular radius} iff for some/all $p\in X$, the ball $\genBall{r}{p}$ is co-inner regular.
\end{definition}
Indeed, invariance of $d$ and $\mu$ implies that $\genBall{r}{p}$ is co-inner regular iff $\genBall{r}{q}$ is.
Note that since Haar measures are regular, on a compact metric group with a bi-invariant metric and a Haar measure, if a set $U$ is co-inner regular, then $\mu(\partial U)=0$, giving $\lim_{r \rightarrow 0} \mu(\genBall{r}{U}) = \mu(U)$.

\begin{lemma}\label{lem:compMeasure}
Let $(X,d, \circ)$ be a compact topological group with bi-invariant metric $d$, Haar measure $\mu$, and computable size $\kappa_X$ of maximum packings.
    If the closure of $U$ is located and co-inner regular, then the procedure \method{computeMeasure}
     computes its measure $\mu(U)$.
\end{lemma}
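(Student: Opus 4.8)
The goal is to compute $\mu(U)$ when $\overline{U}$ is located and co-inner regular, using the sandwich from Lemma~\ref{lem:coreIneq}. The strategy is to produce, for each $n$, the two numbers $\mu_{T_n}(\genBall{-2^{-n+1}}{\overline{U}})$ and $\mu_{T_n}(\genBall{2^{-n+1}}{\overline{U}})$ — or rather rigorous rational over- and under-estimates of them — and argue these converge to $\mu(U)$. First I would invoke Fact~\ref{f:Packing}: since $\kappa_X$ is computable, we can actually produce an encoding of a maximum $n$-packing $T_n=\{p_1,\dots,p_N\}$ with $N=\kappa_X(n)$. Given $\overline{U}$ located, the map $p\mapsto d(p,\overline{U})$ is computable, so the generalized closed balls $\genBall{\pm 2^{-n+1}}{\overline{U}}$ are sets whose membership we can test up to the usual boundary ambiguity. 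The quantity $\mu_{T_n}(\genBall{r}{\overline{U}})$ is $\tfrac{1}{N}\cdot|\{i: d(p_i,\overline{U})\le r\}|$ (and similarly with $d(p_i,\overline{U}^c)\ge r$ for the inner ball) — a counting expression that is not directly computable because individual comparisons $d(p_i,\overline{U})\lessgtr r$ are only semidecidable. This is exactly what the procedure \method{pseudoCount} is for: it returns rigorous bounds on $\mu_{T_n}$ of such located closed sets, so I would use it to obtain, for each $n$, an interval $[a_n,b_n]\ni$ the relevant $\mu_{T_n}$-value, with $b_n-a_n$ controllable.

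\textbf{Convergence.} Combining Lemma~\ref{lem:coreIneq} with these pseudo-counts gives, after bookkeeping, rational $\ell_n\le \mu(U)\le u_n$ with
\[
\mu\big(\genBall{-2^{-n+2}}{\overline{U}}\big)\;-\;o(1)\;\le\;\ell_n\;\le\;\mu(U)\;\le\;u_n\;\le\;\mu\big(\genBall{2^{-n+2}}{\overline{U}}\big)\;+\;o(1),
\]
where the $o(1)$ terms are the pseudo-count slacks, which we can drive below $2^{-n}$ by running \method{pseudoCount} with enough precision. It remains to show $u_n-\ell_n\to 0$. Here the hypotheses pay off: co-inner regularity of $\genBall{r}{\overline{U}}$ (equivalently $\mu(\partial\,\overline{U})=0$, as noted after Definition~\ref{d:regular}) gives $\lim_{r\to 0}\mu(\genBall{r}{\overline{U}})=\mu(\overline{U})=\mu(U)$, the last equality because $U$ co-inner regular forces $\mu(U)=\mu(\overline U)$. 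So $\mu(\genBall{2^{-n+2}}{\overline{U}})-\mu(\genBall{-2^{-n+2}}{\overline{U}})\to 0$, hence $u_n-\ell_n\to 0$, and the nested-interval data $(\ell_n,u_n)_n$ constitutes a name for the real $\mu(U)$.

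\textbf{Main obstacle.} The routine part is the metric estimates; the delicate point is getting \method{pseudoCount} to interface correctly with the located set $\genBall{\pm 2^{-n+1}}{\overline{U}}$ and to deliver a \emph{rate}: we need not just that $\mu_{T_n}$ is approximated, but that for a prescribed target accuracy $2^{-k}$ we can choose $n$ and the internal precision of \method{pseudoCount} effectively. Since $T_n$ is a finite, explicitly given set of rational-describable points (elements of the dense sequence $\xi$ produced via Fact~\ref{f:Packing}), and $d(p_i,\overline{U})$ is computable, \method{pseudoCount} can for any $\varepsilon$ classify each $p_i$ as ``definitely in $\genBall{r-\varepsilon}{\overline{U}}$'', ``definitely out of $\genBall{r}{\overline{U}}$'', or ambiguous, and the ambiguous ones lie in a thin shell whose $\mu_{T_n}$-mass is bounded by $1$; crucially the \emph{true} count $\mu_{T_n}(\genBall{r}{\overline{U}})$ is squeezed between the two unambiguous counts, and by perturbing $r$ slightly (using co-inner regularity again to keep $\mu$ of the shell small in the limit) the gap closes. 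Assembling this — that the double limit in $n$ and in \method{pseudoCount}-precision can be taken along an effective diagonal — is where the care goes; everything else follows by substitution into Lemma~\ref{lem:coreIneq}.
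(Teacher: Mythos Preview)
Your proposal is correct and follows the same approach as the paper: sandwich $\mu(U)$ via Lemma~\ref{lem:coreIneq}, approximate the relevant $\mu_{T_m}$-values with \texttt{pseudoCount} on located generalized balls around $\overline{U}$, and invoke co-inner regularity of $\overline{U}$ to get $\lim_{r\to 0}\mu(\genBall{r}{\overline{U}})=\mu(\overline{U})=\mu(U)$ and hence convergence. What you describe as ``bookkeeping'' and ``perturbing $r$ slightly'' is precisely the paper's choice to call \texttt{pseudoCount} on $\genBall{-r}{\overline{U}}$ and $\genBall{r/2}{\overline{U}}$ with $r=2^{-m}$ and precision $m{+}1$, so that the counting slack is absorbed directly into the inclusion chain of Equation~(\ref{e:Balls}) rather than carried as a separate $o(1)$ term.
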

Note that \method{computeMeasure} in turn calls \method{pseudoCount}$(p,S,n)$.

\begin{procedure}[t]
  \SetKwData{error}{error} \SetKwData{interval}{interval}
  \SetKwFunction{pseudoCount}{pseudoCount}
  \SetKwFunction{length}{length}
  \KwData{$\overline{U}$ located co-inner regular set, $\{T_m\}_{m=1}^\infty$ computable sequence of maximum packings, $n$ target precision}
  \KwResult{A rational number $q$ s.t. $|q - \mu(S)| \leq 2^{-n}$.}
    \caption{computeMeasure($U$, $\{T_m\}_{m=1}^\infty$, $n$)}
    \error $\leftarrow$ $\infty$\;
    $m \leftarrow 0$\;
    \While{\error $> 2^{-n}$}{
      $r \leftarrow 2^{-m}$\;
      $a \leftarrow$ \pseudoCount{$\genBall{-r}{\overline{U}}$, $T_m$, $m+1$})\;
      $b \leftarrow$ \pseudoCount{$\genBall{r/2}{\overline{U}}$, $T_m$, $m+1$})\;
      \error $\leftarrow \; b-a$ \;
      $m \leftarrow m+1$\;
      }
    \Return any $p \in$ \interval
\label{proc:computeMeasure}
\end{procedure}
\begin{procedure}[t]
\SetKwFunction{dist}{dist}
\SetKwData{count}{count}
\KwData{$S$ a located set, $T$ a finite set of points, $n$ error parameter, \dist{$p,S,m$} approximate distance between $p\in T$ and $S$ up to $2^{-m}$.}
\KwResult{A rational $q$ where $\mu_T(S) \leq q \leq \mu_T(\genBall{2^{-n}}{S})$}
\caption{pseudoCount($S$, $T$, $n$)}
$\count \leftarrow 0$\;
\ForEach{ $p \in T$}{
    \If{ \dist{$p$,$S$,$n+2$} $< 2^{-n-1}$}{$\count \leftarrow \count +1$\;}
}
\Return $\frac{\count}{|T|}$
\label{proc:pseudoCount}
\end{procedure}
\begin{proof}[Proof of \cref{lem:compMeasure}]
    In \method{computeMeasure}, $[a,b]$ is a non-empty subinterval of \linebreak $[\mu_{T_m}(\genBall{-2^{-m}}{\overline{U}}), \mu_{T_m}(\genBall{2^{-m}}{\overline{U}})]$ because
    \[ \genBall{-2^{-m}}{\overline{U}} \;\subseteq\; \genBall{2^{-(m+1)}}{\genBall{-r}{\overline{U}}} \;\subseteq\; \overline{U}
      \;\subseteq\; \genBall{2^{-(m+1)}}{\genBall{r/2}{\overline{U}}}
        \;\subseteq\; \genBall{2^{-m}}{\overline{U}} \]
    by Equation~(\ref{e:Balls}) and the postcondition of \method{pseudoCount}. This interval converges to the singleton $[\mu(U)]$ because of \cref{lem:coreIneq} and that $\lim_{|r| \rightarrow 0} \mu(\genBall{r}{\overline{U}}) = \mu(\overline{U}) = \mu(U)$ (from $\overline{U}$ being co-inner regular). The postcondition of \method{pseudoCount} is satisfied, because for any $p \in T$, every $p \in S$ is counted and every $p \notin \genBall{2^{-n}}{S}$ is not counted.
\end{proof}

Not every closed ball is co-inner regular, but `sufficiently' many are:
Co-inner regular radii can be effectively found to compute Haar measures in the form of the Haar integral by \method{findCoInnerRegularRadius}.
\begin{figure}[t]
    \centering

\begin{tikzpicture}[>=stealth]
    \draw[->, thick] (-0.25,0) -- (3,0);
    \draw[->, thick] (0,-0.25) -- (0,3);

    \draw (0,0) sin (2,1);
    \draw[thick] (2,1) circle (1pt);
    \fill[white] (2,1) circle (1pt);
    \filldraw[black] (2,2) circle (1pt);
    \draw (2,2) -- (3,3);

    \begin{scope}[dashed]
        \draw (1.2,0) node[anchor=north] {$r_1$} -- (1.2,0.80901699437) -- ++(-1.2,0);
        \draw (0, 0.7) node[anchor=east] {$\mu(\overline{B}_{r_1})$};
        \draw (1.8,0) node[anchor=north] {$r_5$} -- (1.8,0.98768834059) -- ++(-1.8,0);
        \draw (0,1.1) node[anchor=east] {$\mu(\overline{B}_{r_5})$};
        \draw (2.4, 0) node[anchor=north] {$r_9$} -- (2.4,2.4) -- (0,2.4) node[anchor=east] {$\mu(\overline{B}_{r_9})$};
    \end{scope}
\end{tikzpicture}

    \caption{Illustration of procedure \method{findCoInnerRegularRadius}. Here, an example graph of the discontinuous function $r \mapsto \mu(\overline{B}_r)$ is shown. Note that $\mu(\overline{B}_r(p))=\mu(\overline{B}_r(q))$ because of the invariance of the metric and the Haar measure.
    }
    \label{fig:findCoInnerRegularRadius}
\end{figure}
\cref{fig:findCoInnerRegularRadius} illustrates the procedure \method{findCoInnerRegularRadius}. $\lambda x. \text{\method{findCoInnerRegularRadius}}(a, b, \{T_m\}_{m=1}^\infty, x)$ is a nested sequence of intervals that converges to a co-inner regular radius. \method{findCoInnerRegularRadius} achieves this by recursively dividing and outputting the interval. That is, $\text{\method{findCoInnerRegularRadius}}(a, b, \{T_m\}_{m=1}^\infty, n)$ first computes the $(n-1)$-th interval ($\text{\method{findCoInnerRegularRadius}}(a, b, \{T_m\}_{m=1}^\infty, n-1)$) and outputs the $n$-th interval by dividing it.
The procedure divides the $(n-1)$-th interval into two parts $[r_1,r_5]$ and $[r_5,r_9]$, computes corresponding measures $\mu(\overline{B}_{r_1}), \mu(\overline{B}_{r_5}), \mu(\overline{B}_{r_9})$, and picks the interval which has smaller difference of measures. In this case, since $\mu(\overline{B}_{r_5})-\mu(\overline{B}_{r_1}) \leq \mu(\overline{B}_{r_9})-\mu(\overline{B}_{r_5})$, $[r_1,r_5]$ is picked. This strategy makes the difference of measures converges to zero since it is always, at least, halved on each iterations. This gives a co-inner regular radius, because in fact co-inner regular radii are continuity points of the function $r \mapsto \mu(\overline{B}_r)$.

\begin{lemma} Procedure \method{findCoInnerRegularRadius} computes a co-inner regular radius in the form of $\lambda x. \text{\method{findCoInnerRegularRadius}}(a, b, \{T_m\}_{m=1}^\infty, x)$.

\begin{procedure}[t]
    \SetKwData{interval}{interval}
    \SetKwFunction{div}{divide} \SetKwFunction{pseudoCount}{pseudoCount}
    \KwData{$a<b$ rational bounds between which to look for a co-inner regular radius,
        $\{T_m\}_{m=1}^\infty$ sequence of maximum packings, $n$ target precision.}
    \KwResult{Rational bounds $a_n, b_n$ s.t.  $(a<a_{n-1}<a_n<b_n<b_{n-1}<b) \wedge (b_n-a_n\leq 2^{-n}) \wedge |\mu(\genBall{a_n}{p}) - \mu(\genBall{b_n}{p}) | \leq 2^{-n}$
    for any/all $p \in X$.}
    \caption{findCoInnerRegularRadius($a,b, \{T_m\}_{m=1}^\infty, n$)}
    $(a_{n-1}, b_{n-1}) \leftarrow $ \findCoInnerRegularRadius{$a,b, \{T_m\}_{m=1}^\infty, n-1$}\;
    $r_1, r_5, r_9 \leftarrow \frac{9a_{n-1} + 1b_{n-1}}{10}, \frac{5a_{n-1} + 5b_{n-1}}{10}, \frac{1a_{n-1} + 9b_{n-1}}{10}$\;
    Pick sufficiently large $N$ s.t. $2^{-N+2} \leq \frac{b_{n-1}-a_{n-1}}{10}$\;
    Compute an element $p \in X$ using the fact that $X$ is a computable metric space\;
    $m_1, m_5, m_9 \leftarrow$ \pseudoCount{$\genBall{r_1}{p}, T_N, N$}, \pseudoCount{$\genBall{r_5}{p}, T_N, N$}, \pseudoCount{$\genBall{r_9}{p}, T_N, N$}\;
    $\epsilon \leftarrow \frac{b_{n-1}-a_{n-1}}{10}$\;
    \leIf{$m_9 - m_5 \leq m_5 - m_1$}{\Return $[r_1+\epsilon , r_5-\epsilon ]$}{\Return $[r_5+\epsilon , r_9-\epsilon ]$}
    \label{proc:findCoInnerRegularRadius}
\end{procedure}

\begin{proof}
    $\lambda x. \text{\method{findCoInnerRegularRadius}}(a, b, \{T_m\}_{m=1}^\infty, x)$ represents $r := \lim_{n \rightarrow \infty} a_n$, where $a_n$ is the first element of the interval that \method{findCoInnerRegularRadius} outputs. $r$ is a co-inner regular radius because the fact $r \in (a_n, b_n)$ makes $\partial \genBall{r}{p} \subseteq \genBall{b_n}{p} \setminus \genBall{a_n}{p}$, which leads to $\mu(\partial \genBall{r}{p}) \leq |\mu(\genBall{b_n}{p}) - \mu(\genBall{a_n}{p})| \leq 2^{-n}$ for any $n$. This implies $\mu(\partial \genBall{r}{p}) = 0$.

    Now it is sufficient to prove the postconditions. Let us only prove $\mu(\genBall{a_n}{p})-\mu(\genBall{b_n}{p})$ $\leq 2^{-n}$, since others are straightforward. Let $r_i := \frac{ia_{n-1}+(10-i)b_{n-1}}{10}$ and $m_i := \mu(\genBall{r_i}{p})$. Because of \cref{lem:coreIneq} and the fact that $N$ is sufficiently large, $\mu(\genBall{r_i}{p}) \leq \mu_{T_N}(\genBall{r_{i+1}}{p}) \leq m_{i+1} \leq \mu_{T_N}(\genBall{r_{i+1}+2^{-N}}{p}) \leq \mu(\genBall{r_{i+2}}{p})$. Then since $2^{-n+1} \geq |\mu(\genBall{a_{n-1}}{p}) - \mu(\genBall{b_{n-1}}{p})|$ $\geq |m_9 - m_1| \geq |m_9 - m_5| + |m_5 - m_1|$, WLOG $|m_5 - m_1| \leq 2^{-n}$. Then $|\mu(\genBall{r_5-\epsilon}{p}) - \mu(\genBall{r_1+\epsilon}{p})| \leq |m_5 - m_1| \leq 2^{-n}$.
\end{proof}
\end{lemma}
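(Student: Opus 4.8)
The plan is to verify three things about the recursion $\method{findCoInnerRegularRadius}(a,b,\{T_m\}_m,\cdot)$: that every call is effective; that the returned rational intervals $[a_n,b_n]$ are nested and shrink geometrically, so that $\lambda x.\method{findCoInnerRegularRadius}(a,b,\{T_m\}_m,x)$ is a legitimate Cauchy name of some real $r$ with $a<r<b$ (in particular $r>0$) and $r\in(a_n,b_n)$ for all $n$; and that this $r$ satisfies $\mu\big(\partial\genBall{r}{p}\big)=0$, which by regularity of the Haar measure (\cref{f:Haar}) is precisely co-inner regularity of $\genBall{r}{p}$ and hence makes $r$ a co-inner regular radius in the sense of \cref{d:regular}.

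Effectivity is the routine part. The recursion has depth $n$ and so terminates; one call performs only rational arithmetic, produces a point $p\in X$ (possible since $X$ is a computable metric space), evaluates the map $q\mapsto d\big(q,\genBall{s}{p}\big)=\max\{0,d(q,p)-s\}$ (so closed balls about $p$ are located and $\texttt{dist}$ is available), calls $\method{pseudoCount}$ on such located balls, and decides one comparison of rationals. Hence $n\mapsto[a_n,b_n]$ is computable. A short arithmetic check --- using that $r_1<r_5<r_9$ are the $10\%,50\%,90\%$ points of $[a_{n-1},b_{n-1}]$ and $\epsilon=(b_{n-1}-a_{n-1})/10$, so that $r_1-\epsilon=a_{n-1}$ and $r_9+\epsilon=b_{n-1}$ --- shows that either candidate output $[r_1+\epsilon,r_5-\epsilon]$ or $[r_5+\epsilon,r_9-\epsilon]$ lies strictly inside $(a_{n-1},b_{n-1})$ and has length $(b_{n-1}-a_{n-1})/5$. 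Thus the intervals are nested (from the first refinement on) and $b_n-a_n\le 5^{-n}(b-a)\le 2^{-n}$, so $r:=\lim_n a_n=\lim_n b_n$ exists and is named by the procedure.

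The heart of the matter, and the step I expect to need the most care, is the remaining postcondition $g(b_n)-g(a_n)\le 2^{-n}$ with $g(s):=\mu\big(\genBall{s}{p}\big)$ (nondecreasing), proved by induction on $n$ together with the nesting bounds; the base case is immediate since $g$ takes values in $[0,1]$. The subtlety is that $g$ is only monotone, not continuous, so $\method{pseudoCount}$ approximates $g(r_i)$ not up to a small error in value but only up to a shift of the radius argument: combining \cref{lem:coreIneq} with the $2^{-N}$-error of $\method{pseudoCount}$ and the choice of $N$ with $2^{-N+2}\le\epsilon$ yields $m_i\in\big[g(r_i-\epsilon),\,g(r_i+\epsilon)\big]$. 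From this, $g(r_5-\epsilon)-g(r_1+\epsilon)\le m_5-m_1$ and $g(r_9-\epsilon)-g(r_5+\epsilon)\le m_9-m_5$, so --- since the comparison of the $m_i$ is designed to retain the half with the smaller $m$-increment --- the $g$-increment over the $\epsilon$-trimmed retained interval is at most $\min(m_5-m_1,\,m_9-m_5)\le\frac{1}{2}(m_9-m_1)$. Finally $m_9-m_1\le g(r_9+\epsilon)-g(r_1-\epsilon)=g(b_{n-1})-g(a_{n-1})\le 2^{-(n-1)}$ by the inductive hypothesis, whence $g(b_n)-g(a_n)\le 2^{-n}$.

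To harvest co-inner regularity, use $r\in(a_n,b_n)$: the sphere through $r$ satisfies $\partial\genBall{r}{p}\subseteq\{x:d(x,p)=r\}\subseteq\genBall{b_n}{p}\setminus\genBall{a_n}{p}$, so $\mu\big(\partial\genBall{r}{p}\big)\le g(b_n)-g(a_n)\le 2^{-n}$ for every $n$, i.e.\ $\mu\big(\partial\genBall{r}{p}\big)=0$. Then $\genBall{r}{p}=(\genBall{r}{p})^\circ\cup\big(\partial\genBall{r}{p}\cap\genBall{r}{p}\big)$ gives $\mu\big(\genBall{r}{p}\big)=\mu\big((\genBall{r}{p})^\circ\big)=\sup\{\mu(V):V\subseteq\genBall{r}{p}\text{ open and measurable}\}$, so $\genBall{r}{p}$ is co-inner regular; by invariance of $d$ and $\mu$ this is independent of the chosen $p$, so $r$ is a co-inner regular radius, as required.
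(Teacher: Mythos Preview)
Your proof is correct and follows essentially the same route as the paper: both sandwich the \method{pseudoCount} outputs $m_i$ between $g(r_i-\epsilon)$ and $g(r_i+\epsilon)$ via \cref{lem:coreIneq} and the choice of $N$, use this to halve the $g$-increment over the retained subinterval by induction, and then conclude $\mu(\partial\genBall{r}{p})=0$ from $r\in(a_n,b_n)$ and the postcondition. Your write-up is in fact more explicit than the paper's (you spell out effectivity, the inductive base case, and the passage from $\mu(\partial\genBall{r}{p})=0$ to co-inner regularity); the only cosmetic slip is that $5^{-n}(b-a)\le 2^{-n}$ needs $b-a\le(5/2)^n$, which holds for all relevant $n$ in the intended calls but not literally for arbitrary $a<b$ and small $n$---the paper glosses over this too.
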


\subsection{Main Algorithm for Haar Integration}
\begin{proof}[Explicit algorithm of \cref{theo:explicit}]

The procedure \method{computeIntegral} computes the Haar integral $\int_X f\,d\mu$.
Generalizing classical Riemann sums,
it partitions $X$ into subsets $U_i$, $i\leq N$, of sufficiently small diameter (see \cref{fig:partition}):
given by a modulus of continuity
such that $f$ on each $U_i$ varies by at most $2^{-n}$.
Then it sums those values of $f|_{U_i}$,
each weighted by the measure of $U_i$.
In order to invoke \method{computeMeasure},
we want the $U_i$ to be located and co-inner regular:
as provided by \method{findNicePartition}.
Specifically, each $U_i$ will be of the form
$U_i=\genBall{R}{p_i} \setminus \bigcup_{j <i} \genBall{R}{p_j}$
for $p_1,\ldots,p_N\in T_m$ and real $R>0$ provided by \method{findCoInnerRegularRadius}.

\begin{figure}[t]
    \centering
    \includegraphics[width=0.5\linewidth]{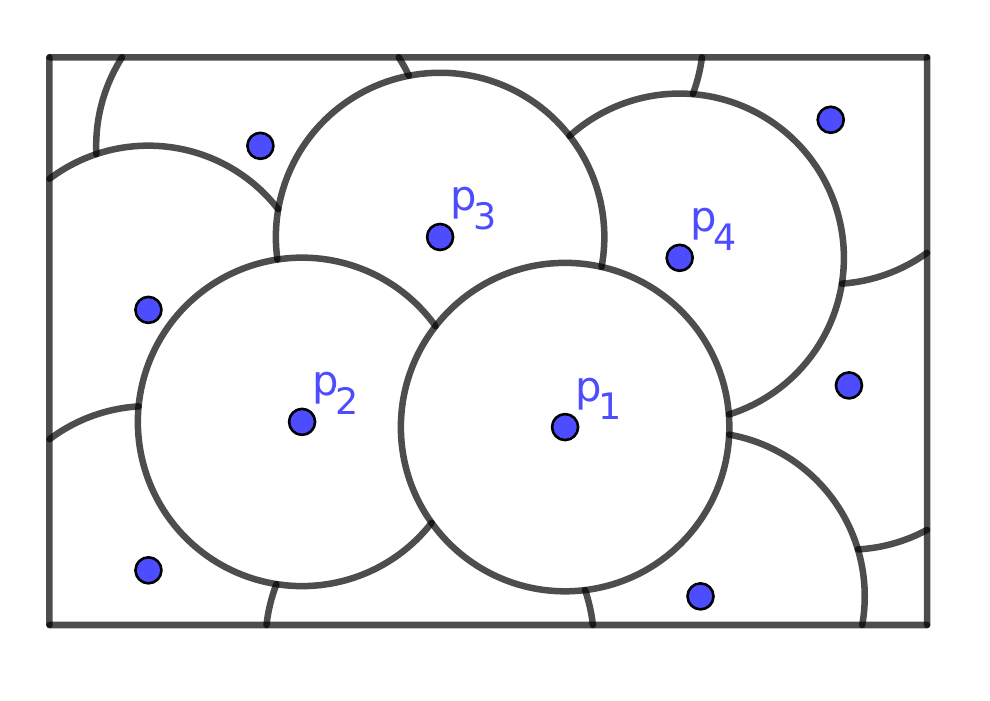}
    \caption[How the space is partitioned]{Consider the whole rectangle as the whole space. Then this is how the procedure \method{computeIntegral} partitions the space. For example, the subset containing $p_1$ is $\genBall{R}{p_1}$. Similarly, the subset containing $p_2$ is $\genBall{R}{p_2} \setminus \genBall{R}{p_1}$, the subset containing $p_3$ is $\genBall{R}{p_3} \setminus (\genBall{R}{p_1} \cup \genBall{R}{p_2})$, and so on. Then the subsets of the partition are of the form $\genBall{R}{p_i} \setminus \bigcup_{j <i} \genBall{R}{p_j}$.}
    \label{fig:partition}
\end{figure}

\begin{procedure}[t]
  \SetKwFunction{findNicePartition}{findNicePartition}\SetKwFunction{computeMeasure}{computeMeasure} \SetKwFunction{center}{center} \SetKwFunction{modulus}{modulus}
  \SetKwFunction{bound}{bound}
  \KwData{real function $f$, sequence of maximum packings $\{T_m\}_{m=1}^\infty$, target precision $n$}
  \KwResult{a rational number $q$ s.t. $|q - \int_X f d\mu| \leq 2^{-n}$}
    \caption{computeIntegral($f$, $\{T_m\}_{m=1}^\infty$, $n$)}
    $m_f$ $\leftarrow$ \modulus{$f$, $n+1$} \tcp*[r]{\modulus is from \cite[Definition~2.12]{Ko91}}
    $\{U_i\}_{i=1}^N \leftarrow$ \findNicePartition{$\{T_m\}_{m=1}^\infty$, $m_f$}\;
    $M \leftarrow$ \bound{$|f|$}\;
    \ForEach{$U_i$ in $\{U_i\}_{i=1}^N$}{
    $p_i \leftarrow$ \center{$U_i$}\;
    $m_i \leftarrow $\computeMeasure{$U_i$, $\{T_m\}_{m=1}^\infty$, $n+1+i+\log{M}$}\;
    }
    \Return $\sum_{i=1}^N m_if(p_i)$
    \label{proc:computeIntegral}
\end{procedure}
\begin{procedure}[t]
    \SetKwFunction{findCoInnerRegularRadius}{findCoInnerRegularRadius}
    \KwData{$\{T_m\}_{m=1}^\infty$ is a sequence of maximum packings, $n$ is the target precision}
    \KwResult{A partition $P = \{U_i\}_{i=1}^N$ s.t. each $\overline{U_i}$ is a located co-inner regular set of the form $U_i=\genBall{R}{p_i} \setminus \bigcup_{j<i} \genBall{R}{p_j}$.}
    $P \leftarrow \{\}$\;
    $R \leftarrow \lambda x.$ \findCoInnerRegularRadius{$(2^{-n-1}, 2^{-n}), \{T_m\}_{m=1}^\infty, x$}\;
    \ForEach{$p_i$ in $T_{n+1}$}{
        $U_i \leftarrow {\genBall{R}{p_i} \setminus \bigcup_{U \in P} U}$\;
        $P \leftarrow P \cup \{U_i\}$\;
    }
    \Return $P$
    \caption{findNicePartition($\{T_m\}_{m=1}^\infty$, $n$)}
    \label{proc:findNicePartition}
\end{procedure}
\end{proof}

\begin{lemma}
    Procedure \method{computeIntegral} is correct
\end{lemma}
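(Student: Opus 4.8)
The plan is to bound the total error $\bigl|\sum_{i=1}^N m_i f(p_i) - \int_X f\,d\mu\bigr|$ by splitting it, via the triangle inequality, into a \emph{Riemann-sum error} $\bigl|\sum_i f(p_i)\mu(U_i) - \int_X f\,d\mu\bigr|$ and a \emph{measure-approximation error} $\bigl|\sum_i f(p_i) m_i - \sum_i f(p_i)\mu(U_i)\bigr|$, and to show that each is at most $2^{-(n+1)}$.

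First I would record the geometry of the partition returned by \method{findNicePartition}. With the argument $m_f$, its output consists of the sets $U_i=\genBall{R}{p_i}\setminus\bigcup_{j<i}\genBall{R}{p_j}$ for $p_1,\dots,p_N$ enumerating $T_{m_f+1}$ and $R\in(2^{-m_f-1},2^{-m_f})$ a co-inner regular radius. These are pairwise disjoint by construction, and $\bigcup_i U_i=\bigcup_i\genBall{R}{p_i}=X$: since $T_{m_f+1}$ is a \emph{maximum} (hence maximal) $(m_f{+}1)$-packing, every $x\in X$ is within distance $2^{-(m_f+1)}<R$ of some $p_i$. Thus $\{U_i\}_{i\le N}$ is a Borel partition of $X$, so $\sum_i\mu(U_i)=\mu(X)=1$. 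Moreover every point of $U_i\subseteq\genBall{R}{p_i}$ lies within $R<2^{-m_f}$ of the center $p_i$, so the defining property of the modulus $m_f$ of $f$ at precision $n{+}1$ gives $|f(y)-f(p_i)|\le 2^{-(n+1)}$ for all $y\in U_i$. Hence
\[
\Bigl|\int_X f\,d\mu - \sum_{i=1}^N f(p_i)\,\mu(U_i)\Bigr|
=\Bigl|\sum_{i=1}^N\int_{U_i}\bigl(f(y)-f(p_i)\bigr)\,d\mu(y)\Bigr|
\le 2^{-(n+1)}\sum_{i=1}^N\mu(U_i)=2^{-(n+1)} .
\]

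For the measure-approximation error I would invoke the stated postcondition of \method{findNicePartition}, namely that each $\overline{U_i}$ is a located co-inner regular set, so that \cref{lem:compMeasure} applies: the call \method{computeMeasure}$(U_i,\{T_m\}_{m=1}^\infty,\,n{+}1{+}i{+}\log M)$ returns a rational $m_i$ with $|m_i-\mu(U_i)|\le 2^{-(n+1+i+\log M)}$, where $M$ is the computed bound on $|f|$, so in particular $|f(p_i)|\le M$. Since $M\cdot 2^{-\log M}\le 1$,
\[
\Bigl|\sum_{i=1}^N f(p_i)\,m_i - \sum_{i=1}^N f(p_i)\,\mu(U_i)\Bigr|
\le \sum_{i=1}^N M\cdot 2^{-(n+1+i+\log M)}
\le \sum_{i=1}^N 2^{-(n+1+i)} < 2^{-(n+1)} .
\]
Adding the two displays yields $\bigl|\sum_i m_i f(p_i)-\int_X f\,d\mu\bigr|\le 2^{-n}$, as required. (If the output must be exactly rational, one replaces each real $f(p_i)$ by a rational within $2^{-(n+2+i)}$ and shifts the internal precisions by one; the extra error is again $<2^{-(n+1)}$.)

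The main obstacle is not this arithmetic but the hypothesis of \cref{lem:compMeasure}: verifying the postcondition of \method{findNicePartition} that each $\overline{U_i}$ --- the closure of a \emph{difference} of closed balls of radius $R$, hence a priori neither open nor closed --- is simultaneously located and co-inner regular. This should be extracted from $R$ being a genuine co-inner regular radius: then each sphere $\partial\genBall{R}{p_j}$ is $\mu$-null, so $\mu(\partial\overline{U_i})=0$, which (with bi-invariance and regularity of $\mu$) yields co-inner regularity of $\overline{U_i}$ and in particular $\lim_{r\to0}\mu(\genBall{r}{\overline{U_i}})=\mu(\overline{U_i})=\mu(U_i)$ --- exactly the condition that \cref{lem:compMeasure} and the termination of the \texttt{while}-loop of \method{computeMeasure} need. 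Locatedness of $\overline{U_i}$ must likewise be propagated from locatedness of the single balls through the finite Boolean combination, using that $R$ is a concretely given real delivered by \method{findCoInnerRegularRadius}. The remaining bookkeeping --- reconciling the modulus precision $n{+}1$ with $R\in(2^{-m_f-1},2^{-m_f})$ and the diameters $\le 2R$ of the $U_i$ --- is already absorbed above by sampling $f$ at the center $p_i$, for which $R<2^{-m_f}$ (rather than $2R<2^{-m_f}$) suffices.
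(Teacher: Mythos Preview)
Your proof is correct and follows essentially the same approach as the paper: split the total error via the triangle inequality into a Riemann-sum error (bounded using the modulus of continuity and that each $U_i$ sits inside a ball of radius $R<2^{-m_f}$ about $p_i$) and a measure-approximation error (bounded using the guarantee $|m_i-\mu(U_i)|\le 2^{-(n+1+i+\log M)}$ from \cref{lem:compMeasure} together with $|f(p_i)|\le M$), each contributing at most $2^{-(n+1)}$. The only organizational difference is that the paper factors the verification of the postconditions of \method{findNicePartition} (that $\{U_i\}$ is a genuine partition and each $\overline{U_i}$ is located and co-inner regular) into a separate lemma, whereas you sketch that inline in your final paragraph; your additional justification that $T_{m_f+1}$ being a maximum packing forces $\bigcup_i\genBall{R}{p_i}=X$ is in fact more explicit than the paper's.
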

\begin{proof}
    $m_f$ is a modulus of continuity \cite[Definition~2.12]{Ko91} of $f$ with precision $n+1$. This means $d(x,y) \leq 2^{-m_f} \Rightarrow |f(x)-f(y)| \leq 2^{-n-1}$. The partition $P = \{U_i\}_{i=1}^N$ satisfies that every $U_i$ has radius smaller than $2^{-m_f}$. So
    \begin{equation*}
        \begin{split}
        |q - \int_X fd\mu | &= |\sum_{i=1}^N m_if(p_i) - \int_X fd\mu | \\
        &\leq \sum_{i=1}^N 2^{-n-1-i-\log{M}}f(p_i) + |\sum_{i=1}^N \mu(U_i)f(p_i) - \int_X fd\mu | \\
        &\leq \sum_{i=1}^N 2^{-n-1-i-\log{M}}M + \sum_{i=1}^N \mu(U_i)2^{-n-1} \\
        &\leq 2^{-n-1}2^{-\log{M}}M + 2^{-n-1}\sum_{i=1}^N \mu(U_i) \leq 2^{-n}
        \end{split}
    \end{equation*}
\end{proof}

\begin{lemma}
    Procedure \method{findNicePartition} is correct.
\end{lemma}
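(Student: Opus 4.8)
The plan is to verify the three assertions packed into the specification of \method{findNicePartition}: that the returned $P=\{U_i\}_{i=1}^N$ is a partition of $X$; that each $U_i$ has the advertised shape $\genBall{R}{p_i}\setminus\bigcup_{j<i}\genBall{R}{p_j}$; and that each closure $\overline{U_i}$ is located and co-inner regular. The shape is literally what the loop writes down, so all the content is in the partition property and in the two properties of $\overline{U_i}$. Throughout, $p_1,\dots,p_N$ enumerate the (computably obtained, by Fact~\ref{f:Packing}) maximum $(n+1)$-packing $T_{n+1}$, and $R=\lim_k\method{findCoInnerRegularRadius}((2^{-n-1},2^{-n}),\{T_m\},k)$.

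\emph{Partition.} Disjointness is immediate from the disjointification: for $j<i$ we have $U_j\subseteq\genBall{R}{p_j}$ while $U_i\subseteq X\setminus\genBall{R}{p_j}$. For covering, note first that $\bigcup_i U_i=\bigcup_i\genBall{R}{p_i}$, since peeling earlier balls off a finite union does not change the union; so it suffices to see that the closed balls of radius $R$ about $p_1,\dots,p_N$ cover $X$. A maximum $(n+1)$-packing is maximal, hence every $x\in X$ has $d(x,p)\le 2^{-(n+1)}$ for some $p\in T_{n+1}$, for otherwise $T_{n+1}\cup\{x\}$ would be a strictly larger $(n+1)$-packing. By the postcondition of \method{findCoInnerRegularRadius} on input bounds $(2^{-n-1},2^{-n})$ one has $R\in(2^{-n-1},2^{-n})$, in particular $R>2^{-(n+1)}\ge d(x,p)$, so $x\in\genBall{R}{p}$. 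The same bound $R<2^{-n}$ also puts each $U_i$ inside a ball of radius $<2^{-n}$, which is what \method{computeIntegral} uses downstream when it calls \method{findNicePartition} with $n=m_f$.

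\emph{Co-inner regularity.} By the lemma on \method{findCoInnerRegularRadius}, $R$ is a co-inner regular radius, and (using regularity of $\mu$, as in the remark after \cref{d:regular}) more precisely $R$ is a continuity point of $r\mapsto\mu(\genBall{r}{p})$, i.e.\ every sphere $\Sigma_j:=\{x:d(x,p_j)=R\}$ is $\mu$-null. The open set $B_R(p_i)\cap\bigcap_{j<i}\{x:d(x,p_j)>R\}$ lies inside $U_i\subseteq\overline{U_i}$, so $\partial\overline{U_i}\subseteq\bigcup_{j\le i}\Sigma_j$, a finite union of $\mu$-null sets. Hence $\mu(\partial\overline{U_i})=0$, which for the regular measure $\mu$ is exactly co-inner regularity of $\overline{U_i}$.

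\emph{Locatedness --- the main obstacle.} We must show $p\mapsto d(p,\overline{U_i})$ is computable from the computable data $R\in\IR$ and $p_1,\dots,p_N\in X$; distance to a set difference is not a Boolean combination of the constituent distance functions, and this is where the choice of $R$ is genuinely needed. Since each $\Sigma_j$ is $\mu$-null and $\mu$ has full support (Haar measure on a compact group), $\Sigma_j$ is nowhere dense, so $B_R(p_j)$ is dense in $\genBall{R}{p_j}$; a short perturbation argument then identifies $\overline{U_i}=\genBall{R}{p_i}\setminus\bigcup_{j<i}B_R(p_j)$, with the open set $O_i:=B_R(p_i)\setminus\bigcup_{j<i}\genBall{R}{p_j}$ dense in it. From this explicit description, locatedness follows from two semidecidability facts. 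First, the complement $X\setminus\overline{U_i}=\{d(\cdot,p_i)>R\}\cup\bigcup_{j<i}B_R(p_j)$ is effectively open, so by computable compactness of $X$ (which follows from computability of $\kappa_X$) the inclusion $\genBall{q}{p}\subseteq X\setminus\overline{U_i}$, equivalently $d(p,\overline{U_i})>q$, is recognizable. Second, $d(p,\overline{U_i})<q$ is recognizable by searching the dense sequence of $X$ for a point within $q$ of $p$ passing the finitely many strict membership tests defining $O_i$. Together these make $d(p,\overline{U_i})$ computable. I expect the only non-routine parts to be the identification of the closure $\overline{U_i}$ and the appeal to full support of the Haar measure that makes the spheres nowhere dense; the partition and co-inner-regularity claims are short.
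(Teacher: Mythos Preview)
Your partition and co-inner regularity arguments are correct, and in fact more careful than the paper's, which merely asserts (without proof) that co-inner regularity is preserved under intersection, union, complement, and closure. The paper's locatedness argument is equally terse---essentially just ``$\overline{U_i}$ has the stated shape and every $R,p_i$ are computable''---so your attempt to spell this step out is the right instinct; but the argument you give has a genuine gap.

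The faulty inference is ``$\Sigma_j$ is nowhere dense, so $B_R(p_j)$ is dense in $\genBall{R}{p_j}$.'' From full support and $\mu(\Sigma_j)=0$ you correctly get that $\Sigma_j$ has empty interior in $X$, but this only says every neighbourhood of a sphere point meets $X\setminus\Sigma_j$; the nearby off-sphere points may all lie in the \emph{exterior} $\{d(\cdot,p_j)>R\}$ rather than in the open ball. Concretely, take $G=\IR/\IZ\times\IZ_2$ with bi-invariant metric $d\big((\theta,a),(\theta',a')\big)=d_1(\theta,\theta')+c\,|a-a'|$ for a fixed $0<c<\tfrac12$, and set $p=(0,0)$, $R=c$. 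The sphere $\Sigma_c(p)$ consists of three points, hence is $\mu$-null, and one checks that $r\mapsto\mu(\genBall{r}{p})$ is continuous at $r=c$, so $R=c$ satisfies the postcondition of \method{findCoInnerRegularRadius}. Yet $(0,1)\in\genBall{c}{p}$ is not in $\overline{B_c(p)}$, since $B_c(p)\subseteq\IR/\IZ\times\{0\}$. Consequently $O_1=B_c(p)$ is not dense in $\overline{U_1}=\genBall{c}{p}$, and your search-in-$O_i$ scheme for recognizing $d(q,\overline{U_i})<r$ would report distance $c$ rather than $0$ at $q=(0,1)$. So locatedness does not follow from your argument as written; whether a different route (or a sharper guarantee on $R$ than a null sphere) closes this is left open by both your proof and the paper's.
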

\begin{proof}The followings are proofs of postconditions in the same order in the pseudocode.
    \begin{enumerate}
        \item $\{U_i\}_{i=1}^N$ is a partition because they are disjoint, and they covers the whole space since $X = \bigcup_{p \in T_{n+1}} \genBall{2^{-n-1}}{p} \subseteq \bigcup_{U \in P} U$.
        \item $\overline{U_i}$ are clearly closed. They are located because $\bigcup_{U \in P} U = \bigcup \genBall{R}{p_i} \Rightarrow \overline{U_i}$ is of the form $\overline{\genBall{R}{p} \setminus \bigcup \genBall{R}{p_i}}$ and every $R, p_i$ are computable.
        \item $\overline{U_i}$ are co-inner regular because $\genBall{R}{p_i}$ are co-inner regular and it is preserved under intersection, union, complement, and closure. $\genBall{R}{p_i}$ are co-inner regular because the postcondition of the \method{findCoInnerRegularRadius} ensures that $R$ is a co-inner regular radius.
        \item $U_i \subseteq \genBall{R}{p_i} \subseteq \genBall{2^{-n}}{p_i} \Rightarrow U_i$ is contained in a closed ball with radius $2^{-n}$.
    \end{enumerate}
\end{proof}

\section{Discussion of Hypotheses}\label{sec:compare}
While the requirements of Theorem \ref{theo:synthetic:nice} and Theorem \ref{theo:explicit} appear to be very different, it turns out that actually, both theorems are applicable in the very same cases.

For one direction, suppose we have a computably compact computable metric space $(\mathbf{X},d)$ with a computable group operation $\circ$.
Then\footnote{We appreciate relevant discussion on MathOverflow \cite{mo-biinvariant}.}
\[d'(a,b)  \;:=\; \sup\nolimits_{x \in \mathbf{X}} \sup\nolimits_{y \in \mathbf{X}} d(x \circ a \circ y, x \circ b \circ y)\]
constitutes a topologically equivalent and also computable, but now bi-invariant, metric.
The size of maximum packings may be non-computable for a CCCMS:

\begin{example}
There exists a CCCMS $\mathbf{X}$ such that $\kappa$ defined as $$\kappa(n) = \max \{|T| \mid T \subseteq \mathbf{X} \ \forall x \neq y \in T \ \ d(x,y) > 2^{-n}\}$$ is Turing-equivalent to the Halting problem.
\begin{proof}
Let $s_n = 1 + 2^{-5-t_n}$ if the $n$-th Turing machine halts after exactly $t_n$ steps, and $s_n = 1$ if the $n$-th Turing machine never halts. Clearly, $(s_n)_{n \in \mathbb{N}}$ is a computable sequence. We now modify the standard metric on Cantor space to be $d(p,q) = s_n2^{-n}$ where $n$ is the first component where $p$ and $q$ differ. We find $(\Cantor,d)$ to be CCCMS (in fact, to be computably isomorphic to $\Cantor$), but if $\kappa$ codes the size of maximum packings in $(\Cantor,d)$, then $\kappa(n) = 2^n$ iff the $n$-th Turing machine halts, and $\kappa(n) = 2^{n-1}$ otherwise.
\end{proof}
\end{example}

However, for any CCCMS there is a computable sequence of radii converging to zero for which we can compute the maximum packings (see Subsection \ref{subsec@computingmaximumpackings} below). It is straightforward to see that this suffices for Theorem \ref{theo:explicit}. As such, we see that the requirements for Theorem \ref{theo:explicit} are implied by those of Theorem \ref{theo:synthetic:nice}.

For the converse direction, note that Theorem~\ref{theo:explicit} does not
suppose the group operation $\circ$ (nor neutral element nor inversion) to be computable.
Indeed, a group operation on a CCCMS can have a computable bi-invariant metric but fail to be computable itself.
This is due to the potential for many different group operations to have the same bi-invariant metric:

\begin{example}
Fix some $A \subseteq \mathbb{N}$. Let $G_n := IZ_{p_n^2}$ if $n \in A$, and let $G_n := \IZ_{p_n} \times \IZ_{p_n}$ if $n \notin A$, where $p_n$ is the $n$-th prime.
Note that both have cardinality $p_n^2$ but are not isomorphic as additive groups yet both have the same Haar measure under the bi-invariant discrete metric.
Now let $G_A := \Pi_{n \in \mathbb{N}} G_n$, equipped with the Baire space metric.
For $A \neq B$ we find that $G_A$ and $G_B$ are not homeomorphic. The group operation on $G_A$ is computable iff $A$ is decidable. However, both the bi-invariant metric structure on $G_A$ and the Haar measure are all independent of $A$, and computable.
\end{example}
Interestingly, the Haar measure on a compact group is determined already by an invariant metric
and independent of the potentially many different underlying group operations:

\begin{corollary}
\label{corr:uniqueness}
Consider a compact metric space $(X,d)$ with two group operations $\circ$ and $\circ'$
both rendering $d$ left-invariant. Then $(X,\circ)$ and $(X,\circ')$ induce the same Haar measure.
\end{corollary}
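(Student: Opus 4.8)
The plan is to derive the corollary from the classical characterisation of the Haar measure recalled in \cref{f:Classical}. Here $(X,\circ)$ and $(X,\circ')$ are compact topological groups, so by \cref{f:Haar} each carries a unique Haar measure, say $\mu$ and $\mu'$; the point is that for a \emph{left-invariant} metric the covering numbers $[A:B]$ appearing in \cref{f:Classical} degenerate --- taking $B$ to be a metric ball around the unit --- into a quantity that depends on $(X,d)$ alone and is blind to the group law. Concretely, writing $e$ for the unit of $\circ$ and $B_r(e)=\{y:d(e,y)<r\}$, surjectivity of $y\mapsto g\circ y$ together with left-invariance of $d$ gives
\[ g\circ B_r(e)\;=\;\{g\circ y:d(g\circ e,g\circ y)<r\}\;=\;\{g\circ y:d(g,g\circ y)<r\}\;=\;B_r(g) . \]
Hence, as $g$ runs over $X$, the left translates $g\circ B_r(e)$ run over \emph{all} open balls of radius $r$, so the least number of left translates of $B_r(e)$ that cover a non-empty $A\subseteq X$ equals the least number $N_r(A)\in\IN$ of open $r$-balls with arbitrary centres needed to cover $A$ --- finite by total boundedness and a function of $(X,d)$ only. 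Running the same computation in $(X,\circ')$, with its unit $e'$, gives $[A:B_r(e')]=N_r(A)$ as well.

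Next I would pass to a cofinal subnet: the balls $\{B_r(e):r>0\}$ are cofinal --- indeed linearly ordered --- in the net of open neighbourhoods of $e$ under reverse inclusion, so the net limit of \cref{f:Classical} may be evaluated along them. Thus for every compact $A\subseteq X$,
\[ \mu(A)\;=\;\lim_{r\to 0^+}\frac{[A:B_r(e)]}{[X:B_r(e)]}\;=\;\lim_{r\to 0^+}\frac{N_r(A)}{N_r(X)} , \]
and the same argument applied to $(X,\circ')$ yields $\mu'(A)=\lim_{r\to 0^+}N_r(A)/N_r(X)$ too. Hence $\mu$ and $\mu'$ agree on every compact subset of $X$. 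Since in the compact metric space $X$ the compact sets are precisely the closed sets, while the open sets form a $\pi$-system generating the Borel $\sigma$-algebra and $\mu(X)=\mu'(X)=1$, two Borel probability measures agreeing on all closed (equivalently, all open) sets agree on all Borel sets; so $\mu=\mu'$. Alternatively one may invoke regularity of the Haar measure from \cref{f:Haar}: inner regularity with respect to compacta promotes agreement on compacta to agreement on open sets, and outer regularity then to all Borel sets.

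The argument is soft; the only points needing genuine care are the identity $g\circ B_r(e)=B_r(g)$ --- where one must use surjectivity of the translation map together with left-invariance, and must not conflate the two units $e$ and $e'$ --- and the legitimacy of restricting the net limit in \cref{f:Classical} to the cofinal family of balls. Neither constitutes a serious obstacle.
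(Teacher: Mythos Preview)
Your argument is correct and follows essentially the same route as the paper: both invoke \cref{f:Classical}, use left-invariance of $d$ to identify $g\circ B_r(e)=B_r(g)$ so that the covering numbers $[A:B_r(e)]$ become purely metric quantities independent of the group law, and then restrict the net limit to the cofinal family of metric balls. Your write-up is in fact more careful than the paper's --- you explicitly handle the possibly distinct units $e,e'$, and you spell out the passage from agreement on compacta to agreement on all Borel sets via regularity, a step the paper's terse proof leaves implicit.
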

\begin{proof}
In the metric case,
the net of neighborhoods $B$ of $e$ from Fact~\ref{f:Classical}
becomes a sequence of open balls $B_{1/2^n}(e)$.
Left-invariance implies that all translates
$q\cdot B_{r}(e)=B_{r}(q)$ have the same measure;
and by the group property,
every open ball $B_{r}(q)$ is a translate of $B_r(e)$:
for both $\circ$ and $\circ'$.
In particular, $[X:B_{1/2^n}]=\kappa_X(n)$.
\end{proof}
On the other hand the collection of different group operations
$\circ$ to a given bi-invariant metric $d$ is `tame':

\begin{lemma}
\label{lemma:groupoperationscompact}
Let $(\mathbf{X},d)$ be a CCCMS. The set $O \subseteq \mathcal{C}(\mathbf{X} \times \mathbf{X},\mathbf{X})$ of group operations
rendering $d$ bi-invariant is a computably compact set.
\begin{proof}
If $d$ is bi-invariant for $\circ \in \mathcal{C}(\mathbf{X} \times \mathbf{X},\mathbf{X})$,
the triangle inequality gives
\[ d(a\circ x,b\circ y) \;\leq\; d(a\circ x,b\circ x)\;+\;d(b\circ x,b\circ y)
\;=\; d(a,b)+d(x,y) \leq 2 d((a,x),(b,y)) \quad ; \]
rendering $\circ$ 2-Lipschitz with respect to the maximum metric on $\mathbf{X}\times\mathbf{X}$.
By the effective Arzel\`a-Ascoli theorem,
the subset of 2-Lipschitz $f\in\mathcal{C}(\mathbf{X} \times \mathbf{X},\mathbf{X})$ is computably compact.
Within this set, we are interested in those satisfying the bi-invariance and group axioms:
\begin{multline*}
\forall a,b,c: \qquad  d\big(f(c,a), f(c,b)\big) \;=\; d(a, b) \;=\; d\big(f(a,c),f(b,c)\big)
\\  \exists e,a':
\quad  f(a,e) = a = f(e,a) \wedge f(f(a,b),c) = f(a,f(b,c)) \wedge f(a,a') = e = f(a',a)
\end{multline*}
These are computably closed predicates since the quantification is over the computably compact and computably overt space $\mathbf{X}$. This ends the proof, since a computably closed subset of a computably compact set is computably compact.
\end{proof}
\end{lemma}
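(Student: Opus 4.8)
The plan is to present $O$ as a computably closed subset of a computably compact superset, and then to appeal — as already done several times in the preceding material — to the fact that a computably closed subspace of a computably compact space is again computably compact. The key preliminary observation is that bi-invariance of $d$ for an operation $\circ$ already forces a uniform modulus of continuity: by the triangle inequality together with left- and right-invariance,
\[ d(a\circ x,\, b\circ y)\;\leq\; d(a\circ x,\,b\circ x)\;+\;d(b\circ x,\,b\circ y)\;=\;d(a,b)+d(x,y)\;\leq\;2\,d\big((a,x),(b,y)\big), \]
so every element of $O$ is $2$-Lipschitz with respect to the maximum metric on $\mathbf{X}\times\mathbf{X}$. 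Since $\mathbf{X}\times\mathbf{X}$ is again a CCCMS (finite products of CCCMS are CCCMS) and the codomain $\mathbf{X}$ is computably compact, the effective Arzel\`a-Ascoli theorem yields that the set $L\subseteq\mathcal{C}(\mathbf{X}\times\mathbf{X},\mathbf{X})$ of all $2$-Lipschitz maps is computably compact, and by the above $O\subseteq L$.

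It then remains to carve $O$ out of $L$ by the defining axioms: bi-invariance, namely $\forall a,b,c\colon d(f(c,a),f(c,b))=d(a,b)=d(f(a,c),f(b,c))$, and the group axioms, $\exists e,a'$ with $\forall a,b,c\colon f(a,e)=a=f(e,a)\wedge f(f(a,b),c)=f(a,f(b,c))\wedge f(a,a')=e=f(a',a)$. The innermost formulas are finite conjunctions of equalities in $\mathbf{X}$ (or in $\mathbb{R}$), hence computably closed predicates, because a CCCMS is computably Hausdorff; and they depend computably on $f$ and on the quantified points, since evaluation in $\mathcal{C}(\mathbf{X}\times\mathbf{X},\mathbf{X})$ and the metric $d$ are computable. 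Closing the quantifiers from the inside out, existential quantification over the computably compact $\mathbf{X}$ and universal quantification over the computably overt $\mathbf{X}$ (a CMS is computably separable, hence computably overt) each preserve computable closedness. Hence ``$f$ is a bi-invariant group operation'' is a computably closed predicate on $f\in L$, so $O$ is a computably closed subset of the computably compact $L$, and therefore computably compact.

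I expect the only delicate point to be the quantifier bookkeeping in the last step — in particular that the identity element and the inverses must be introduced by \emph{existential} quantifiers over $\mathbf{X}$, which is legitimate precisely because $\mathbf{X}$ is computably compact (so that the projection of a computably closed set along a computably compact fibre remains computably closed); this is the same bundle of closure properties of computably closed predicates already exploited for \cref{lemma:allpairinvarianceclosed}. A secondary, routine point is to invoke the effective Arzel\`a-Ascoli theorem with the explicit constant modulus $2$ produced by the opening estimate.
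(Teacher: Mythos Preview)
Your proposal is correct and follows essentially the same route as the paper: the same $2$-Lipschitz estimate from bi-invariance, the same appeal to effective Arzel\`a--Ascoli for a computably compact ambient set, and the same use of quantifier closure properties (universal over overt, existential over compact) to show the group and bi-invariance axioms cut out a computably closed subset. Your write-up is in fact slightly more explicit than the paper's about why each quantifier step is legitimate.
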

We can combine Corollary \ref{corr:uniqueness} and Lemma \ref{lemma:groupoperationscompact} to see that Theorem \ref{theo:synthetic:main} also implies that from a CCCMS $(X,d)$ such that some group operation is bi-invariant for $d$ we can compute the Haar measure for any such group operation.

\subsection{Computing maximum packings}
\label{subsec@computingmaximumpackings}

Let $\mathbb{R}^+$ denote the positive real numbers, and $\mathbb{Q}^+$ the positive rational numbers. We shall set out to prove:

\begin{theorem}
\label{theo:packings}
The following are equivalent for a computable Polish space $\mathbf{X}$:
\begin{enumerate}
\item $\mathbf{X}$ is computably compact.
\item There are computable sequences $(\varepsilon_i \in \mathbb{Q}^+)_{i \in \mathbb{N}}$ and $(s_i \in \mathbb{N})_{i \in \mathbb{N}}$ such that $\lim_{i \to \infty} \varepsilon_i = 0$ and \[s_i = \sup \{m \in \mathbb{N} \mid \exists x_1,\ldots,x_m \in \mathbf{X} \ \forall i, j \leq m, i \neq j \ d(x_i,x_j) > \varepsilon_i\}\]
\end{enumerate}
\end{theorem}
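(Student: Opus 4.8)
The plan is to prove the two implications separately, with the direction $(2)\Rightarrow(1)$ being the substantive one. For $(1)\Rightarrow(2)$, I would first note that a computably compact computable Polish space admits, for every rational $\varepsilon>0$, a finite $\varepsilon$-net that can be extracted from the computable dense sequence: computable compactness lets us verify that finitely many basic balls cover $\mathbf{X}$, so by dovetailing over all finite tuples of rationals-indexed centers and radii we can eventually find such a net. From the existence of a finite $\varepsilon$-net one gets an a priori upper bound on the size of any $\varepsilon$-separated set (by a standard pigeonhole: two points in the same net-ball are within $2\varepsilon$ of each other). To turn the supremum in (2) into a \emph{computable} integer sequence, I would pick the radii $\varepsilon_i$ adaptively: having produced an upper bound $N_i$ on packing size at scale $\varepsilon_i$ from a net, the actual value $s_i$ is the largest $m\le N_i$ for which an $\varepsilon_i$-separated $m$-tuple exists; since strict inequality $d(x_j,x_k)>\varepsilon_i$ is a semidecidable (open) condition on points, and we may search over the dense sequence (as in Fact~\ref{f:Packing}, a maximum packing can be approximated arbitrarily well by dense points while retaining strict separation), we can semidecide ``$s_i\ge m$'' for each $m\le N_i$ and hence compute $s_i$ exactly. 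Choosing $\varepsilon_i\to 0$ explicitly (e.g.\ $\varepsilon_i=2^{-i}$, adjusted if necessary so the net search terminates) finishes this direction.

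For $(2)\Rightarrow(1)$, the idea is that a computable sequence of maximum packing sizes $s_i$ at known radii $\varepsilon_i\to0$ lets us certify covers. Given the finitely many points $x_1,\dots,x_{s_i}$ realizing a maximum $\varepsilon_i$-packing (computable by the Fact~\ref{f:Packing}-style exhaustive search, using that we know the target cardinality $s_i$), maximality forces $\mathbf{X}\subseteq\bigcup_{k\le s_i}\overline{B}_{\varepsilon_i}(x_k)$: any point $z$ with $d(z,x_k)>\varepsilon_i$ for all $k$ could be adjoined, contradicting maximality. So $\{\overline{B}_{\varepsilon_i}(x_k)\}_k$ is a finite closed $\varepsilon_i$-cover, and by slightly enlarging radii to rationals $>\varepsilon_i$ we get a finite \emph{open} cover by basic balls of radius $\to0$. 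To establish computable compactness we must show that \emph{every} given open cover (equivalently, that $\mathbf{X}$ as an element of $\mathcal{A}(\mathbf{X})$ is co-overt, i.e.\ we can semidecide ``$\mathbf{X}\subseteq U$'' for $U\in\mathcal{O}(\mathbf{X})$): given such a $U$, enumerate the explicit finite covers at scales $\varepsilon_i$; for each we can semidecide whether each of its finitely many basic balls is $\subseteq U$ (this is a recognizable property in the hyperspace of open sets because the space is a computable Polish space, so basic-ball-inclusion is semidecidable via the dense sequence and the metric), and since $U$ is open and the covers have mesh $\to0$, if $\mathbf{X}\subseteq U$ then for $i$ large enough every ball of the $i$-th cover lies in $U$, at which point we accept.

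The main obstacle I anticipate is the delicate interplay between the \emph{strict} separation inequality and passing to the dense sequence: both that maximum packings can be realized (or arbitrarily well approximated) by points of the dense sequence without destroying strict separation, and conversely that maximality of a packing over \emph{all} of $\mathbf{X}$ genuinely yields a cover by closed balls of radius exactly $\varepsilon_i$ (not just $\varepsilon_i+\text{something}$). The argument in the proof of Fact~\ref{f:Packing} (choosing $\delta = R - 2^{-n}$ slack) is exactly the template, and I would reuse it; the only new wrinkle is that here the radii $\varepsilon_i$ are not of the form $2^{-n}$ but arbitrary rationals, which changes nothing essential. A secondary point requiring care is ensuring the sequence $(\varepsilon_i)$ in $(1)\Rightarrow(2)$ can be chosen \emph{in advance} as a computable sequence rather than merely produced lazily — but since the net-search for each fixed rational $\varepsilon$ is guaranteed to terminate under computable compactness, one may simply take $\varepsilon_i := 2^{-i}$ outright and only the bound $N_i$ and the search for $s_i$ need be run.
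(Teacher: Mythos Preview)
Your $(2)\Rightarrow(1)$ direction is essentially the paper's argument: a maximum $\varepsilon_i$-packing yields a cover by $\varepsilon_i$-balls, and one checks whether the balls fit into a given $U=\bigcup_n B(z_n,r_n)$ via the semidecidable sufficient condition $d(x_j,z_k)+\varepsilon_i<r_k$ (your phrase ``basic-ball-inclusion is semidecidable'' is a bit loose---inclusion of a closed ball in an open set is not semidecidable in general without already knowing computable compactness---but the paper's explicit test is what you have in mind, and Lemma~\ref{lemma:scale} plays exactly the Lebesgue-number role you describe).

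The $(1)\Rightarrow(2)$ direction, however, has a real gap. You argue: obtain an upper bound $N_i$ from a net, semidecide ``$s_i\ge m$'' for each $m\le N_i$, and thereby compute $s_i$. But semideciding ``$s_i\ge m$'' for all $m\le N_i$ only produces $s_i$ as an element of $\mathbb{N}_<$ (a lower integer): if the true value is $s_i=m_0$, the search for an $(m_0+1)$-separated tuple runs forever and you can never certify that $m_0$ is the answer. An upper bound $N_i$ does not help---it bounds how high you need to search, not when to stop. Concretely, the paper's Example immediately preceding this subsection exhibits a CCCMS in which the maximum packing number at radii $2^{-n}$ is Turing-equivalent to the Halting problem, so your proposal to ``simply take $\varepsilon_i:=2^{-i}$ outright'' is provably impossible.

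The missing idea is that the radii $\varepsilon_i$ must be chosen \emph{adaptively}. The paper defines, besides the lower-semicomputable $\ell(\delta)=\sup\{m:\exists\,\varepsilon_i\text{-separated }m\text{-tuple}\}$, an upper-semicomputable $u(\delta)=\inf\{m:\forall (m{+}1)\text{-tuples }\exists i\neq j\ d(x_i,x_j)<\delta^{-1}\}$ using compactness for the universal quantifier. One has $\ell\le u$ always and $u(\delta)\le\ell(\delta+\varepsilon)$ for all $\varepsilon>0$; hence the sets $\{\delta:\ell(\delta)=u(\delta)=n\}$ are open intervals whose complement is countable and discrete, and rationals inside them can be enumerated. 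At such radii the packing number is pinned from both sides and therefore computable. This adaptive choice of $\varepsilon_i$ is the crux you are missing.
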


Our proof involves a number of lemmata, which we shall gather first.

\begin{lemma}
\label{lemma:boundsfunctions}
Let $\ell : \mathbb{R}^+ \to \mathbb{N}_<$ and $u : \mathbb{R}^+ \to \mathbb{N}_>$ be computable functions with $\forall x \in \mathbb{R}^+ \ \ell(x) \leq u(x)$ and $\forall x, \varepsilon \in \mathbb{R}^+ \ u(x) \leq \ell(x+\varepsilon)$. Then $U_n = \{x \in \mathbb{R}^+ \mid \ell(x) = u(x) = n\}$ defines a computable sequence of connected open sets satisfying that $\mathbb{R}^+ \setminus \left (\bigcup_{n \in \mathbb{N}} U_n \right )$ consists of countably many isolated points, and that $\forall n, k \in \mathbb{N} \ x \in U_n \wedge y \in U_{n+k+1} \Rightarrow x < y$.
\begin{proof}
Since $\mathalpha{\leq} \subseteq \mathbb{N}_> \times \mathbb{N}_<$ is computably open, we can obtain any $\{x \in \mathbb{R}^+ \mid \ell(x) \geq u(x) = n\}$ as an open set, and by the assumed properties of $\ell$ and $u$, this is actually equal to $U_n$. Putting together the two inequalities for $u$ and $\ell$ shows that both are non-decreasing functions, which implies the linear order on the $U_n$.

Our arguments already established that $\mathbb{R}^+ \setminus \left (\bigcup_{n \in \mathbb{N}} U_n \right )$ is a union of sets of the form $\{x \mid \forall y \in U_n \ \forall z \in U_m \ y < x < z\}$, where $m$ is the least number larger than $n$ such that $U_m \neq \emptyset$. Assume that $\{x \mid \forall y \in U_n \ \forall z \in U_m \ y < x < z\}$ is not a singleton. Then there there are $x_0 < x_1 < x_2 \ldots$ contained in it (as we are dealing with an interval). We know that $n \leq \ell(x_k) < u(x_k) \leq \ell(x_{k+1}) \leq m$ for all $k \in \mathbb{N}$, so there would need to be an infinite strictly increasing sequence between $n$ and $m$ in $\mathbb{N}_<$, contradiction.
\end{proof}
\end{lemma}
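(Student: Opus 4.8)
The plan is to reduce everything to the single observation that the hypotheses force both $\ell$ and $u$ to be non-decreasing, and then to read off the four assertions from monotonicity together with the well-ordering of $\IN$ and the semidecidability properties of the orders on $\IN_<$ and $\IN_>$. Chaining $\ell(x)\le u(x)\le\ell(x+\varepsilon)$ gives that $\ell$ is non-decreasing, and $u(x)\le\ell(x+\varepsilon)\le u(x+\varepsilon)$ gives the same for $u$. For computability, ``$\ell(x)\ge n$'' is semidecidable from an $\IN_<$-name of $\ell(x)$ and ``$u(x)\le n$'' is semidecidable from an $\IN_>$-name of $u(x)$, uniformly in $x$ and $n$; hence $\{x:\ell(x)\ge n\}\cap\{x:u(x)\le n\}$ is a computable open set uniformly in $n$, and by $\ell(x)\le u(x)$ it equals $U_n$. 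Connectedness is then immediate: for $x<x''<x'$ with $x,x'\in U_n$, monotonicity squeezes $\ell(x'')=u(x'')=n$, so $U_n$ is an interval. The linear order is equally direct: if $x\in U_n$, $y\in U_{n+k+1}$ and $y\le x$, monotonicity of $\ell$ yields $n+k+1=\ell(y)\le\ell(x)=n$, which is absurd, so $x<y$.

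The one place that needs a genuine argument is the complement $C:=\IR^+\setminus\bigcup_n U_n=\{x:\ell(x)<u(x)\}$; I would show it has no accumulation point in $\IR^+$, whence it is a discrete, hence countable, set of isolated points. Suppose first $x_0<x_1<x_2<\cdots$ lies in $C$ and is bounded above, say by $L$. Using the shift hypothesis with $\varepsilon=x_{k+1}-x_k>0$ gives $u(x_k)\le\ell(x_{k+1})$, so
\[ \ell(x_0)<u(x_0)\le\ell(x_1)<u(x_1)\le\ell(x_2)<\cdots \]
is a strictly increasing sequence of naturals bounded by $u(L)<\infty$ --- impossible. A strictly decreasing sequence $x_0>x_1>\cdots$ in $C$ yields, via the shift with $\varepsilon=x_k-x_{k+1}>0$, $u(x_{k+1})\le\ell(x_k)$ and hence a strictly decreasing sequence $\ell(x_0)>\ell(x_1)>\cdots$ of naturals, again impossible. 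Since any accumulation point of $C$ in $\IR^+$ would be the limit of a monotone sequence drawn from $C$, there is none; so every point of $C$ is isolated, $C$ is countable, and each maximal subinterval of $C$ between consecutive non-empty $U_n$ and $U_m$ collapses to a single point (flanked by the open intervals $U_n$, $U_m$).

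The step I expect to be the crux is exactly this discreteness argument: recognizing that the hypothesis $u(x)\le\ell(x+\varepsilon)$ is precisely what makes the interleaved values $\ell(x_0),u(x_0),\ell(x_1),u(x_1),\dots$ \emph{strictly} monotone along any monotone sequence inside $C$, after which boundedness --- automatic below since $u$ is finite at any point to the right, automatic above since $\IN$ is well-ordered --- produces the contradiction. Everything else follows formally from monotonicity of $\ell$ and $u$ and from the semidecidability of the relevant order relations on $\IN_<$ and $\IN_>$.
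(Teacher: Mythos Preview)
Your proof is correct and follows essentially the same strategy as the paper's: derive monotonicity of $\ell$ and $u$, obtain each $U_n$ as a computable open set via the semidecidable order relations on $\IN_<$ and $\IN_>$, and rule out non-trivial pieces of the complement by producing a strictly monotone bounded sequence of natural numbers. The only presentational difference is that the paper first decomposes the complement into gaps between consecutive nonempty $U_n$ and $U_m$ and shows each gap is a singleton (using only the increasing-sequence case), whereas you show directly that $C=\{x:\ell(x)<u(x)\}$ has no accumulation point by handling increasing and decreasing sequences separately --- a slightly cleaner packaging of the same idea.
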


\begin{corollary}
\label{corr:separations}
Given $\ell, u$ as in Lemma \ref{lemma:boundsfunctions}, we can compute a pair of sequences $(a_i \in \mathbb{Q}^+)_{i \in \mathbb{N}}$ and monotone $(k_i \in \mathbb{N})_{i \in \mathbb{N}}$ such that $(k_i)_{i \in \mathbb{N}}$ is cofinal in the range of $\ell$; and $\ell(a_i) = u(a_i) = k_i$ for all $i \in \mathbb{N}$.
\end{corollary}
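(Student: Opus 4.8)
The plan is to extract the desired sequences $(a_i)$ and $(k_i)$ directly from the open sets $U_n$ produced by Lemma~\ref{lemma:boundsfunctions}. Recall that each $U_n = \{x \in \mathbb{R}^+ \mid \ell(x)=u(x)=n\}$ is a computable open (hence effectively enumerable) subset of $\mathbb{R}^+$, the $U_n$ are linearly ordered in the sense described, and $x \in U_n$ forces $\ell(x)=u(x)=n$. So any rational witness we can locate inside some $U_n$ automatically satisfies the equality $\ell(a_i)=u(a_i)=n$; the only work is to organize the search so that the resulting index sequence is monotone and cofinal in the range of~$\ell$.

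First I would observe that, since each $U_n$ is given as a computable open set, we can semidecide ``$q \in U_n$'' for rational $q$; dovetailing over all pairs $(q,n) \in \mathbb{Q}^+ \times \mathbb{N}$ we obtain an effective enumeration of the set $\{(q,n) : q \in \mathbb{Q}^+ \cap U_n\}$, which is nonempty for every $n$ with $U_n \neq \emptyset$ because $U_n$ is a nonempty open subset of $\mathbb{R}^+$ and hence contains a rational. Next I would run a greedy procedure: maintain a ``current level'' $c$, initially the least $n$ with $U_n \neq \emptyset$ (found by searching the enumeration); having fixed $a_0,\dots,a_{i-1}$ with levels $k_0 \leq \dots \leq k_{i-1} = c$, continue scanning the enumeration until some pair $(q,n)$ with $n \geq c$ appears — here choosing $n$ to be the smallest such level that shows up, or simply the next one that does — set $a_i := q$, $k_i := n$, and update $c := n$. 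Monotonicity of $(k_i)$ is immediate from the construction, and $\ell(a_i)=u(a_i)=k_i$ holds by membership in $U_{k_i}$.

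For cofinality in the range of $\ell$: if $m$ lies in the range of $\ell$, pick $x$ with $\ell(x)=m$; by the interleaving inequality $u(x) \leq \ell(x+\varepsilon)$ and monotonicity of $u,\ell$ established in Lemma~\ref{lemma:boundsfunctions}, for slightly larger $x' > x$ we have $\ell(x')=u(x') \geq m$, so there is some $n \geq m$ with $U_n \neq \emptyset$ (indeed $U_{u(x')} \ni x'$ once $x'$ avoids the countably many isolated boundary points, which we may arrange). Since the greedy scan keeps raising $c$ through every level that is ever witnessed and never skips past a witnessed nonempty $U_n$ without recording some level $\geq n$, the sequence $(k_i)$ eventually reaches a value $\geq m$; being monotone and taking only values that are actual levels of nonempty $U_n$ — all of which lie in the range of $\ell$ — it is cofinal in that range.

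The main obstacle I anticipate is the bookkeeping needed to guarantee that the greedy scan is genuinely cofinal rather than getting ``stuck'' waiting forever for a witness at the exact next level: because $U_n$ may be empty for some $n$, one must phrase the update as ``advance $c$ to the next level that is actually witnessed'' rather than ``$c+1$'', and one must check that no nonempty $U_n$ with $n$ large is permanently ignored — this follows since the enumeration is exhaustive, so every rational in every nonempty $U_n$ is eventually listed, forcing $c$ past it. A minor secondary point is the handling of the countably many isolated boundary points of $\mathbb{R}^+ \setminus \bigcup_n U_n$ when arguing that arbitrarily large levels are witnessed; since those points are isolated, any target value $m$ in the range of $\ell$ is realized on a whole interval inside some $U_n$ with $n \geq m$, so the rationals in that interval supply the needed witnesses.
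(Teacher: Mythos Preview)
The paper states this as a bare corollary with no proof, evidently regarding it as immediate from Lemma~\ref{lemma:boundsfunctions}. Your approach is the natural way to flesh out that implicit argument: enumerate rational witnesses in the computable open sets $U_n$ and greedily extract a monotone cofinal subsequence. The construction and the cofinality reasoning are correct.

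One small correction: you cannot in general compute the \emph{least} $n$ with $U_n \neq \emptyset$, since that would require confirming emptiness of $U_0,\ldots,U_{n-1}$, which is only co-semidecidable. Fortunately this is not needed --- simply take as your initial level whatever $n$ appears first in the dovetailed enumeration, and the rest of the argument goes through unchanged. Your fourth paragraph already contains the key observation that makes cofinality work: since the scan advances through the enumeration one entry at a time and every rational in every nonempty $U_n$ is eventually listed, the current level $c$ must eventually reach or exceed any such $n$.
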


\begin{lemma}
\label{lemma:ell}
Given a compact computable metric space $\mathbf{X}$, the map $\ell : \mathbb{R}^+ \to \mathbb{N}_<$ defined via \[\ell(\delta) = \sup \{m \in \mathbb{N} \mid \exists x_1,\ldots,x_m \in \mathbf{X} \ \forall i, j \leq m \ d(x_i,x_j) > \delta^{-1}\}\]
is computable.
\begin{proof}
$\forall i, j \leq m \ d(x_i,x_j) < \delta^{-1}$ defines an open predicate on $\mathbf{X}^m \times \mathbb{R}^+$, and every computable metric space is computably overt. Compactness of $\mathbf{X}$ ensures that the function is well-defined.
\end{proof}
\end{lemma}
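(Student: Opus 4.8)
The plan is to take the target type seriously: computing a map $\ell : \mathbb{R}^+ \to \mathbb{N}_<$ means producing, from a name of $\delta \in \mathbb{R}^+$, an enumeration of those $m \in \IN$ with $\ell(\delta) \geq m$. Equivalently, it suffices to semidecide the predicate ``$\ell(\delta) \geq m$'' uniformly in $\delta$ and $m$, since $\ell(\delta)$ is then recovered as the supremum of the enumerated lower bounds. So the entire task reduces to showing that ``$\ell(\delta) \geq m$'' is a computably open predicate in $\delta \in \mathbb{R}^+$, uniformly in $m$.

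Unwinding the definition, ``$\ell(\delta) \geq m$'' holds exactly when there exist $x_1, \ldots, x_m \in \mathbf{X}$ with $d(x_i, x_j) > \delta^{-1}$ for all $i \neq j \leq m$. First I would check that the inner quantifier-free matrix $\bigwedge_{i \neq j} d(x_i, x_j) > \delta^{-1}$ defines a computably open subset of $\mathbf{X}^m \times \mathbb{R}^+$: the map $(x_i, x_j) \mapsto d(x_i, x_j)$ is computable on a computable metric space, $\delta \mapsto \delta^{-1}$ is computable on $\mathbb{R}^+$, strict inequality $a > b$ between reals is a computably open relation, and a finite conjunction of computably open predicates is again computably open. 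Then I would discharge the existential quantifier over $x_1, \ldots, x_m$: the space $\mathbf{X}^m$ is computably overt, being a finite product of the computably separable (hence computably overt) space $\mathbf{X}$, and existential quantification over a computably overt space maps computably open predicates to computably open predicates. This yields ``$\ell(\delta) \geq m$'' as a computably open predicate in $\delta$, uniformly in $m$, which is exactly the ingredient identified above.

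The remaining point is that $\ell$ is genuinely $\IN$-valued rather than occasionally $+\infty$, so that the supremum in the definition is attained and the enumeration of lower bounds really stabilizes at a finite value. Here compactness enters: a compact metric space is totally bounded, so for each fixed separation radius $\delta^{-1} > 0$ only finitely many points can be pairwise $\delta^{-1}$-separated, bounding $\ell(\delta)$. I expect the only subtle point to be recognizing why the target $\mathbb{N}_<$ is forced: $\ell$ is intrinsically only lower-semicomputable, because witnessing a large packing is an open (positive) event while ruling one out is not, so the output type $\mathbb{N}_<$ is precisely what the overtness-plus-open-predicate argument delivers, and no genuine obstacle arises beyond correctly matching these representations.
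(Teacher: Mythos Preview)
Your proposal is correct and follows essentially the same approach as the paper: show that the inner predicate is computably open on $\mathbf{X}^m \times \mathbb{R}^+$, use computable overtness of $\mathbf{X}^m$ (inherited from $\mathbf{X}$ being a computable metric space) to handle the existential quantifier, and invoke compactness for well-definedness. Your write-up is simply a more detailed unpacking of the paper's terse argument.
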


\begin{lemma}
\label{lemma:u}
Given a computably compact computable metric space $\mathbf{X}$, the map $u : \mathbb{R}^+ \to \mathbb{N}_>$ defined via \[u(\delta) = \inf \{m \in \mathbb{N} \mid \forall x_1,\ldots,x_{m+1} \in \mathbf{X} \ \exists  i, j \leq m+1 \ i \neq j \wedge d(x_i,x_j) < \delta^{-1}\}\]
is computable.
\begin{proof}
$\exists  i, j \leq m+1 \ i \neq j \wedge d(x_i,x_j) < \delta^{-1}$ defines an open predicate on $\mathbf{X}^{m+1} \times \mathbb{R}^+$, and computable compactness lets us do the universal quantification over $\mathbf{X}^{m+1}$.
\end{proof}
\end{lemma}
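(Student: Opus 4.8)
The plan is to recognise $u$ as the result of applying universal quantification over the computably compact space $\mathbf{X}^{m+1}$ to a uniformly computably open predicate, and then to read off computability into $\mathbb{N}_>$ from the semidecidability of the threshold relation $u(\delta)\le k$. First I would fix $m$ and consider the predicate
\[
P_m(x_1,\ldots,x_{m+1},\delta) \;:\equiv\; \bigvee_{1\le i<j\le m+1} d(x_i,x_j) < \delta^{-1}
\]
on $\mathbf{X}^{m+1}\times\mathbb{R}^+$. Since $d$ is a computable metric, $\delta\mapsto\delta^{-1}$ is computable on $\mathbb{R}^+$, and strict inequality between computable reals is a computably open relation, each disjunct $d(x_i,x_j)<\delta^{-1}$ is a computable element of $\mathcal{O}(\mathbf{X}^{m+1}\times\mathbb{R}^+)$; a finite disjunction of such is again computably open, and the whole construction is uniform in $m$.

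Next, $\mathbf{X}^{m+1}$ is computably compact as a finite product of the computably compact space $\mathbf{X}$. By the defining property of computable compactness --- universal quantification over a computably compact space preserves computable open predicates --- the set
\[
Q_m \;=\; \{\delta\in\mathbb{R}^+ \mid \forall x_1,\ldots,x_{m+1}\in\mathbf{X}\ P_m(x_1,\ldots,x_{m+1},\delta)\}
\]
is a computable element of $\mathcal{O}(\mathbb{R}^+)$, uniformly in $m$. This is exactly dual to the use of computable overtness in Lemma~\ref{lemma:ell}, where an existential quantifier was discharged instead. Well-definedness of $u$, i.e.\ $u(\delta)<\infty$, follows from compactness: total boundedness bounds the cardinality of any $\delta^{-1}$-separated subset, so $Q_m$ holds for all sufficiently large $m$.

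The final step links this to the representation of the upper naturals. The set $\{m\in\mathbb{N}\mid Q_m(\delta)\}$ is upward closed: if every $(m{+}1)$-tuple contains a pair at distance $<\delta^{-1}$, then so does every $(m{+}2)$-tuple, as any such tuple contains an $(m{+}1)$-subtuple. Hence $u(\delta)=\min\{m\mid Q_m(\delta)\}$ and, crucially, $u(\delta)\le k \iff Q_k(\delta)$. Since $\delta\in Q_k$ is semidecidable uniformly in $k$ and $\delta$, the relation $u(\delta)\le k$ is computably open in $(\delta,k)$, which is precisely what it means for $u:\mathbb{R}^+\to\mathbb{N}_>$ to be computable.

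I expect the only genuinely delicate point to be the bookkeeping about representations: matching the direction of semidecidability to $\mathbb{N}_>$, where we may confirm upper bounds $u(\delta)\le k$ but not exact values, as opposed to the $\mathbb{N}_<$ obtained dually in Lemma~\ref{lemma:ell}. Everything else --- openness of the strict-inequality predicate, the finite disjunction, and the appeal to computable compactness for the $\forall$-quantifier --- is routine, and the upward-closure observation is what converts the infimum in the definition of $u$ into the clean threshold characterisation $u(\delta)\le k \iff Q_k(\delta)$.
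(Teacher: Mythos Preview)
Your proof is correct and follows essentially the same approach as the paper: identify the inner predicate as computably open on $\mathbf{X}^{m+1}\times\mathbb{R}^+$ and then discharge the universal quantifier via computable compactness of $\mathbf{X}^{m+1}$. You simply make explicit the bookkeeping (uniformity in $m$, upward closure, and the match with the $\mathbb{N}_>$ representation) that the paper's two-line proof leaves to the reader.
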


\begin{lemma}
\label{lemma:applicable}
The functions $\ell$ and $u$ defined in Lemmas \ref{lemma:ell} and \ref{lemma:u} satisfy the criteria of Lemma \ref{lemma:boundsfunctions}.
\begin{proof}
Computability of the maps is established in the lemmas defining them. From the definitions, it is clear that $\ell(\delta) \leq u(\delta)$. Now assume that $u(\delta) = m$ holds for some $\delta \in \mathbb{R}^+$, $m \in \mathbb{N}$. Since $u(\delta)$ is defined as an infimum, this means that there are $x_1,\ldots,x_m \in \mathbf{X}$ such that $d(x_i,x_j) \geq \delta^{-1}$ whenever $i \neq j$. But then for any $\varepsilon \in \mathbb{R}^+$, we also have $d(x_i,x_j) > (\delta + \varepsilon)^{-1}$, and hence the $x_i$ also witness that $\ell(x + \varepsilon) \geq m$, establishing the second inequality.
\end{proof}
\end{lemma}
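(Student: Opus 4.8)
The plan is to check the three hypotheses of Lemma~\ref{lemma:boundsfunctions} in turn for the specific $\ell$ and $u$ from Lemmas~\ref{lemma:ell} and~\ref{lemma:u}: that both are computable (with the stated codomains $\mathbb{N}_<$ for the supremum $\ell$ and $\mathbb{N}_>$ for the infimum $u$), that $\ell(\delta) \leq u(\delta)$ holds pointwise, and that $u(x) \leq \ell(x+\varepsilon)$ for every $x,\varepsilon \in \mathbb{R}^+$. Computability is immediate, being exactly the content of Lemmas~\ref{lemma:ell} and~\ref{lemma:u}, so I would simply invoke those two lemmas rather than reprove anything.

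For the pointwise bound $\ell(\delta) \leq u(\delta)$, the cleanest route is to read both quantities as packing numbers at radius $\delta^{-1}$, differing only in whether the separation is strict or non-strict. Concretely, $\ell(\delta)$ is the largest number of points pairwise more than $\delta^{-1}$ apart, whereas $u(\delta)$ --- being the least $m$ for which any $m+1$ points must contain a pair at distance below $\delta^{-1}$ --- equals the largest number of points pairwise at least $\delta^{-1}$ apart. Since a strict $(>\delta^{-1})$-packing is in particular a non-strict $(\geq\delta^{-1})$-packing, any family witnessing $\ell(\delta)\geq m$ also witnesses $u(\delta)\geq m$, which yields $\ell(\delta)\leq u(\delta)$ directly from the definitions.

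The remaining inequality $u(x)\leq \ell(x+\varepsilon)$ is where the radius shift does the work, and I expect the only genuine subtlety to lie in extracting a usable witness from the infimum defining $u$. Writing $u(x)=m$, the infimum characterisation supplies points $x_1,\ldots,x_m\in\mathbf{X}$ with $d(x_i,x_j)\geq x^{-1}$ whenever $i\neq j$ --- for otherwise $m-1$ would already satisfy the defining condition, contradicting that $m$ is the infimum. Because $x+\varepsilon>x>0$ forces $(x+\varepsilon)^{-1}<x^{-1}$, these very same points satisfy the strict bound $d(x_i,x_j)>(x+\varepsilon)^{-1}$, so they witness $\ell(x+\varepsilon)\geq m=u(x)$.

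The crux is thus the elementary observation that enlarging the argument of $\ell$ shrinks the separation threshold by just enough to upgrade a non-strict packing into a strict one; once that is seen, no metric estimate or compactness argument is needed beyond what the earlier lemmas already provide. I would flag the witness-extraction from the infimum as the one place demanding care, since it is what converts the purely definitional $u(x)=m$ into a concrete $m$-point configuration that can be fed to $\ell$.
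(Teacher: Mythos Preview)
Your proposal is correct and follows essentially the same route as the paper: invoke the defining lemmas for computability, read $\ell(\delta)\leq u(\delta)$ off the definitions (you spell out the strict-versus-nonstrict packing interpretation that the paper leaves implicit), and for $u(x)\leq\ell(x+\varepsilon)$ extract an $m$-point $\geq x^{-1}$-separated witness from the infimum and use $(x+\varepsilon)^{-1}<x^{-1}$ to feed it to $\ell$. The only cosmetic point is the overloaded use of $x$ for both the argument and the witness points; renaming the latter would improve readability.
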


\begin{lemma}
\label{lemma:scale}
Let $\mathbf{X}$ be a compact metric space with $\mathbf{X} = \bigcup_{n = 0}^N B(z_n,r_n)$. There is some $e > 0$ such that for $e > \varepsilon > 0$ such that for any $x \in \mathbf{X}$ there exists some $n \leq N$ with $\overline{B}(x,\varepsilon) \subseteq B(z_n,r_n)$.
\begin{proof}
The map $x \mapsto \sup \{\delta \in \mathbb{R}^+ \mid \exists n \leq N \ d(x,z_n) + \delta^{-1} < r_n\}$ is a continuous function from $\mathbf{X}$ to $\mathbb{R}_<$, and as such has some upper bound $e^{-1}$.
\end{proof}
\end{lemma}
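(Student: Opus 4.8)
The plan is to run the standard Lebesgue-number argument, packaged so that the threshold $e$ falls out of compactness. First I would introduce the \emph{slack function}
\[ f:\mathbf{X}\to\IR, \qquad f(x)\;:=\;\max_{0\le n\le N}\big(r_n-d(x,z_n)\big). \]
As a finite maximum of continuous functions, $f$ is continuous; and since the balls $B(z_n,r_n)$ cover $\mathbf{X}$, every $x\in\mathbf{X}$ satisfies $d(x,z_n)<r_n$ for some $n\le N$, whence $f(x)>0$ everywhere on $\mathbf{X}$.

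Next I would invoke compactness of $\mathbf{X}$: the continuous function $f$ attains its infimum, which is therefore a strictly positive real; let $e$ denote it (or any positive number strictly below it, if one prefers to keep all inequalities strict without further thought). I claim this $e$ witnesses the lemma. Indeed, fix $0<\varepsilon<e$ and $x\in\mathbf{X}$, and pick $n\le N$ attaining the maximum defining $f(x)$, so that $r_n-d(x,z_n)=f(x)\ge e>\varepsilon$. Then for every $y$ with $d(x,y)\le\varepsilon$ the triangle inequality gives $d(y,z_n)\le d(y,x)+d(x,z_n)\le\varepsilon+d(x,z_n)<r_n$, i.e. $\overline{B}(x,\varepsilon)\subseteq B(z_n,r_n)$, as required.

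There is no genuine obstacle here; the only point to watch is the asymmetry between the closed ball $\overline{B}(x,\varepsilon)$ and the open balls $B(z_n,r_n)$, which forces the \emph{strict} comparison $\varepsilon<f(x)$ and is exactly why one takes $\varepsilon$ below $e\le\min f$ rather than equal to it. If, as the surrounding development of \cref{theo:packings} suggests, one additionally wants an \emph{effective} version --- a computable $e^{-1}$ --- one instead writes the slack via the substitution $\delta\mapsto\delta^{-1}$, turning it into a computable map $\mathbf{X}\to\IR_<$ on a computably compact space, which then admits a computable upper bound $e^{-1}$; but for the bare existence statement the classical extreme-value argument above is all that is needed.
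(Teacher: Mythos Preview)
Your proof is correct and is essentially the same Lebesgue-number argument the paper gives: the paper simply encodes your slack function $f(x)=\max_n\big(r_n-d(x,z_n)\big)$ via its reciprocal so that the map lands in $\mathbb{R}_<$ and the desired positive lower bound on $f$ becomes an upper bound $e^{-1}$ --- exactly the reformulation you already anticipate in your final paragraph.
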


\begin{proof}[Proof of Theorem \ref{theo:packings}]
The implication from $(1)$ to $(2)$ proceeds via Corollary \ref{corr:separations}, which is applicable by Lemma \ref{lemma:applicable}.

To see that $(2)$ implies $(1)$, assume that we are given some $U \in \mathcal{O}(\mathbf{X})$ in the form $U = \bigcup_{n \in \mathbb{N}} B(z_n,r_n)$. For each $i \in \mathbb{N}$, we try to find $x_1,\ldots,x_{s_i} \in U$ such that $\forall j,k \leq s_i, k \neq j \ d(x_k,x_j) > \varepsilon_i$ (which will succeed if indeed $U = \mathbf{X}$. If in addition, we find that for each $j \leq m$ there is some $k_j \in \mathbb{N}$ with $d(x_j,z_k) + \varepsilon_i < r_k$ (which we can indeed detect if true), then we answer that $U = \mathbf{X}$. We are justified in doing so, because we already know that $\mathbf{X} = \bigcup_{j \leq s_i} \overline{B}(x_j,\varepsilon_j)$.

It remains to argue that we correctly identify $U = \mathbf{X}$ in all cases. The existence of the sequences alone, together with completeness of $\mathbf{X}$, establishes that $\mathbf{X}$ is compact. Thus, if $U = \mathbf{X}$, then already $\mathbf{X} = \bigcup_{n = 0}^N B(z_n,r_n)$ for some $N \in \mathbb{N}$. Choosing $i$ sufficiently large to make sure that $\varepsilon_i$ is sufficiently small for Lemma \ref{lemma:scale} to apply ensures that the procedure above will correctly give a positive answer.
\end{proof}

\subsection{More on computable compactness}

To conclude this section, we shall consider a family of examples that show that we need more computability requirements than that of the metric and of the group operation. We consider the closed subgroups of $(\Cantor,\oplus)$, where $\oplus$ denotes the componentwise exclusive or. These subgroups are of the form $$G_A := \{p \in \Cantor \mid \forall n \in A \ p(n) = 0\}$$ for some $A \subseteq \mathbb{N}$. Each $G_A$ inherits compactness, computable metrizability and the computability of the group operation from $(\Cantor, \oplus)$.

$G_A$ is computably compact iff $A$ is c.e., and effectively separable (and thus a computable metric space) iff $A$ is co-c.e. Now if we have the Haar measure $\mu_A$ on $G_A$, we can recover $A$ since $\mu_A(\{p \in G_A \mid p(n) = 1\}) = \frac{1}{2}$ iff $n \notin A$ and $\mu_A(\{p \in G_A \mid p(n) = 1\}) = 0$ iff $n \in A$. We thus see that $G_A$ is a CCCMS iff $\mu_A$ is computable -- so neither computable compactness or computable separability are dispensable for the computability of the Haar measure.

If we already have a bi-invariant metric, computable compactness is even necessary:
\begin{theorem}
\label{theo:invariantimpliescompact}
Let $(\mathbf{X},d)$ be a computable metric space with computable probability measure $\mu$
such that $\mu(B_r(x))$ depends only on $r$ but not on $x$. Then $\mathbf{X}$ is computably compact.
\end{theorem}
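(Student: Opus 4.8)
The plan is to deduce computable compactness from the construction of explicit, computable finite $2^{-n}$-nets: for each $n$ I want to compute a finite set $F_n$ of points with $\mathbf{X}\subseteq\bigcup_{p\in F_n}B_{2^{-n}}(p)$. Given such nets one recovers semidecidability of ``$U=\mathbf{X}$'' for $U\in\mathcal{O}(\mathbf{X})$ by the very reasoning in the proof of Theorem~\ref{theo:packings}, implication $(2)\Rightarrow(1)$: when $U=\mathbf{X}$, by \cref{lemma:scale} every closed ball $\overline{B}_{2^{-n}}(p)$, $p\in F_n$, lies inside one of the balls constituting $U$ once $2^{-n}$ is small, so the finite union $\bigcup_{p\in F_n}\overline{B}_{2^{-n}}(p)=\mathbf{X}$ certifies $U=\mathbf{X}$ via a recognizable condition. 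The preliminary step is to record that $v(r):=\mu(B_r(p))$ — well defined by the center-independence hypothesis — is strictly positive for every $r>0$, since the dense sequence $A$ satisfies $\mathbf{X}=\bigcup_{p\in A}B_r(p)$ and countable subadditivity of $\mu$ would otherwise give $\mu(\mathbf{X})=0$. As $\mu$ on a basic open set is a computable lower real, I can then compute a positive rational $q_n$ with $0<q_n<v(2^{-n-1})$; and any $2^{-n}$-separated subset of $\mathbf{X}$ has at most $\lfloor 1/v(2^{-n-1})\rfloor$ elements (disjoint balls of radius $2^{-n-1}$), so $\mathbf{X}$ is totally bounded, hence — being complete — classically compact.

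The heart of the argument, and the only place where homogeneity of $\mu$ is really used, is the observation that a finite union of \emph{small} balls of large enough measure is automatically a net at a slightly coarser scale: if a finite $F$ satisfies $\mu\big(\bigcup_{p\in F}B_{2^{-n-1}}(p)\big)>1-v(2^{-n-1})$, then $\bigcup_{p\in F}B_{2^{-n}}(p)=\mathbf{X}$. Writing $W:=\bigcup_{p\in F}B_{2^{-n-1}}(p)$, the closed remainder $\mathbf{X}\setminus W$ has measure strictly below $v(2^{-n-1})$, so it can contain no ball $B_{2^{-n-1}}(x)$ (such a ball has measure exactly $v(2^{-n-1})$, by center-independence); hence every $x\in\mathbf{X}\setminus W$ has $d(x,W)<2^{-n-1}$, and the triangle inequality places $x$ in some $B_{2^{-n}}(p)$ with $p\in F$.

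With this in hand the construction of $F_n$ is immediate: I dovetail over the initial segments $F^{(k)}:=\{a_1,\dots,a_k\}$ of $A$ and semidecide the recognizable (lower-real) condition $\mu\big(\bigcup_{p\in F^{(k)}}B_{2^{-n-1}}(p)\big)>1-q_n$, returning $F_n:=F^{(k)}$ for the first $k$ that succeeds. This search terminates because $\bigcup_{p\in A}B_{2^{-n-1}}(p)=\mathbf{X}$ has measure $1$ and $\mu$ is continuous from below, so these lower-real values increase above $1-q_n$; and because $q_n<v(2^{-n-1})$, the returned $F^{(k)}$ meets the hypothesis of the observation and is therefore a $2^{-n}$-net.

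The main obstacle I anticipate is not the plan itself but the care required because measures of open sets are available only as lower reals — ``$\mu(U)=1$'' is in particular not recognizable, and there is no direct handle on the leftover closed set $\mathbf{X}\setminus W$. The construction is arranged so that everything recognizable is a strict inequality $\mu(W)>1-q_n$, and the subtle point to get right is precisely that the strict measure deficiency $1-\mu(W)<v(2^{-n-1})$, together with center-homogeneity of $\mu$, already suffices to control the leftover.
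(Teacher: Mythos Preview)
Your proposal is correct and follows essentially the same strategy as the paper's proof: both arguments compute finite $\varepsilon$-nets by searching for a finite union of half-radius balls whose measure exceeds $1$ minus the (positive, homogeneity-guaranteed) measure of a single small ball, then appeal to the equivalence between computable finite covers and computable compactness (the paper cites Theorem~\ref{theo:epsiloncoverings} rather than Theorem~\ref{theo:packings}, but the underlying reasoning via Lemma~\ref{lemma:scale} is identical). The only cosmetic differences are your tight radii $2^{-n-1},2^{-n}$ versus the paper's $0.4\varepsilon,0.5\varepsilon,\varepsilon$, and your search over initial segments of the dense sequence versus the paper's over arbitrary finite unions.
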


\begin{theorem}
\label{theo:epsiloncoverings}
The following are equivalent for a complete computable metric space $(\mathbf{X},d)$:
\begin{enumerate}
\item $\mathbf{X}$ is computably compact.
\item There is a computable multi-valued function that on input $\varepsilon > 0$ computes a tuple $(x_1,\ldots,x_N) \in \mathbf{X}^N$ such that $\mathbf{X} \subseteq \bigcup_{n = 1}^N B(x_n,\varepsilon)$.
\end{enumerate}
\begin{proof}
The implication from $1.$ to $2.$ is straight-forward: By computable compactness we can recognize a suitable solution, and compactness ensures that there is one.

For the converse, we first note that the well-definedness of the multi-valued function in $2.$ is just total boundedness, which for a complete metric space implies compactness. To see that we even get computable compactness, we assume that we are given as input some $U \in \mathcal{O}(\mathbf{X})$, which we can take to be of the form $U = \bigcup_{n \in \mathbb{N}} B(z_n,r_n)$. We now test for each $k \in \mathbb{N}$ the following:

We obtain some $(x_1,\ldots,x_M)$ such that $\mathbf{X} \subseteq \bigcup_{m = 1}^M B(x_m,2^{-k})$. If for every $m \leq M$ there exists some $n \in \mathbb{N}$ with $d(z_n,x_m) + 2^{-k} < r_n$, then clearly $B(x_m,2^{-k}) \subseteq B(z_n,r_n)$, and hence $\mathbf{X} \subseteq U$. Lemma \ref{lemma:scale}
guarantees that if indeed $\mathbf{X} = U$, then for sufficiently large $k$ this test will be successful.
\end{proof}
\end{theorem}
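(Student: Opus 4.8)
The plan is to prove the two implications separately, reading the equivalence as the statement that, for complete computable metric spaces, computable compactness coincides with possessing a computable modulus of total boundedness (an effective procedure producing finite $\varepsilon$-nets).

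For (1) $\Rightarrow$ (2), I would exploit that computable compactness makes the predicate ``$\mathbf{X} \subseteq U$'' semidecidable for open $U \in \mathcal{O}(\mathbf{X})$. Given a rational $\varepsilon > 0$, enumerate all finite tuples $(x_1,\ldots,x_N)$ drawn from the fixed dense sequence $\xi$ of the computable metric space. For each tuple the set $\bigcup_{n=1}^N B(x_n,\varepsilon)$ is a computable element of $\mathcal{O}(\mathbf{X})$, so by computable compactness we may semidecide whether it covers $\mathbf{X}$; dovetailing these semidecisions over all tuples, I output the first covering tuple found. Termination is guaranteed since a computably compact space is totally bounded: a finite $\varepsilon/2$-net exists, and replacing each of its points by a $\xi$-point within $\varepsilon/2$ yields a tuple from the dense sequence whose $\varepsilon$-balls already cover $\mathbf{X}$.

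For the converse (2) $\Rightarrow$ (1), I would first note that well-definedness of the multivalued function in (2) is exactly total boundedness, which together with completeness of $\mathbf{X}$ yields ordinary compactness. The effective content is to semidecide ``$\mathbf{X} \subseteq U$'' for a given $U = \bigcup_{n\in\IN} B(z_n,r_n)$. I would run, for each $k \in \IN$, the test: use (2) to obtain a finite $2^{-k}$-net $(x_1,\ldots,x_M)$ with $\mathbf{X} \subseteq \bigcup_{m=1}^M B(x_m,2^{-k})$, and check whether for every $m \leq M$ there is some $n$ with $d(z_n,x_m) + 2^{-k} < r_n$. This is a recognizable condition (a strict real inequality with an existential over $n$, hence dovetailable), and whenever it holds we get $\overline{B}(x_m,2^{-k}) \subseteq B(z_n,r_n) \subseteq U$ for each $m$, so the net already certifies $\mathbf{X} \subseteq U$; accepting is therefore sound. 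Dovetailing over $k$ gives a semidecision procedure.

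The crux, and the main obstacle, is \emph{completeness} of this procedure: that whenever $\mathbf{X} = U$ actually holds, the test succeeds for some $k$. This is where Lemma \ref{lemma:scale} enters. If $\mathbf{X} = U$, compactness extracts a finite subcover $\mathbf{X} = \bigcup_{n=0}^{N} B(z_n,r_n)$, and Lemma \ref{lemma:scale} supplies a threshold $e > 0$ such that for all $\varepsilon < e$ and all $x \in \mathbf{X}$ some $B(z_n,r_n)$ already contains $\overline{B}(x,\varepsilon)$. Choosing $k$ with $2^{-k} < e$ then forces each net point $x_m$ to satisfy the strict inequality $d(z_n,x_m) + 2^{-k} < r_n$ for a suitable $n$, so the test passes. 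Securing this quantitative ``uniform swallowing'' of small balls by the finite subcover, rather than merely pointwise containment, is the delicate point, and Lemma \ref{lemma:scale} is exactly what bridges the gap; the overall structure parallels the (2) $\Rightarrow$ (1) argument of Theorem \ref{theo:packings}.
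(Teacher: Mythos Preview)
Your proposal is correct and follows essentially the same route as the paper: for $(1)\Rightarrow(2)$ you search finite tuples from the dense sequence and use computable compactness to recognize a cover (the paper just says this is ``straight-forward''), and for $(2)\Rightarrow(1)$ you run exactly the paper's test---for each $k$ obtain a $2^{-k}$-net and check the strict inequality $d(z_n,x_m)+2^{-k}<r_n$---with soundness immediate and completeness supplied by Lemma~\ref{lemma:scale} after extracting a finite subcover. Your write-up is in fact more explicit than the paper's about why the test is recognizable and about the role of Lemma~\ref{lemma:scale} in bridging set-theoretic containment to the detectable strict inequality.
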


\begin{proof}[of Theorem \ref{theo:invariantimpliescompact}]
First, we can conclude that $\mathbf{X}$ is compact: If not, there would be some $2\varepsilon$-separated sequence $(a_n)_{n \in \mathbb{N}}$, but then $\sum_{n \in \mathbb{N}} \mu(B(a_n,\varepsilon))$ is a constant series and diverges, but also would need to be bounded from above by $\mu(\mathbf{X})$.

To see that $\mathbf{X}$ is even computably compact, we use Theorem \ref{theo:epsiloncoverings} and prove instead the computability of finite $\varepsilon$-coverings. To obtain a $\varepsilon$-covering, we try out finite unions of balls of radius $0.5\varepsilon$, and compute their measure. As $\mathbf{X}$ is compact, some $0.5\varepsilon$-covering exists. In particular, we will eventually find a finite union $T_k$ whose measure is at least $1 - 2^{-k}$ for any $k$. We simultaneously compute lower bounds for the measure of $B(x,0.4\varepsilon)$, until we find one exceeding $2^{-k}$. Now we claim that the $\varepsilon$-balls around the centers from $T_k$ do cover $\mathbf{X}$: Assume that they do not. Then some $y \in \mathbf{X}$ is not covered. But then $B(y,0.4\varepsilon)$ is disjoint from the union of the $0.5\varepsilon$-balls around centers from $T_k$. But this is contradicted by the sum of the measure of these sets being greater than $1$.
\end{proof}

\section{Computational Complexity of the Haar Integral}
\label{sec:Complexity}

We now move beyond mere computability of the Haar measure, and consider the computational complexity of this task for the groups $G=\calSO(3)$, $G=\calO(3)$, $G=\calSU(2)$, and $G=\calU(2)$. In each case, the complexity turns out to be closely related to the complexity class $\classSharpP_1$. We prove Theorem~\ref{theo:Complexity}, namely

\begin{enumerate}
\item[a)]
For every polynomial-time computable $f\in\calC(G)$,
$\int_G f\in\IR$ is computable in polynomial space (and exponential time).
\item[b)]
If $\classFP_1=\classSharpP_1$ and
$f\in\calC(G)$ is polynomial-time computable,
then so is $\int_G f\in\IR$.
\item[c)]
There exists a polynomial-time computable $f\in\calC(G)$
such that polynomial-time computability of $\int_G f\in\IR$
implies $\classFP_1=\classSharpP_1$.
\end{enumerate}

To this end recall \cite[Theorem~5.32]{Ko91} that
(a), (b), and (c) are known for definite Riemann integration
\[ \calC[0;1]\;\ni\; \tilde f \;\mapsto\; \int\nolimits_0^1 \tilde f(t)\,dt \;\in\; \IR  \enspace . \]
Moreover, Item~(c) remains true for $\tilde f\in\calC_0^\infty[0;1]$:
the class of smooth (infinitely often differentiable) $\tilde f:[0;1]\to\IR$ such that $\tilde f(0)=0=\tilde f(1)$; cmp. \cite{DBLP:conf/macis/FereeZ15,lmcs:3924}.

Before proceeding to the groups $\calSO(3)$, $\calO(3)$, $\calSU(2)$, $\calU(2)$,
recall the argument for the case
$\calU(1)=\big\{ \exp(2\pi i t) : 0\leq 1\leq 1\big\}$ equipped with complex multiplication and the Haar integral
\[ \calC\big(\calU(1)\big)\;\ni\; f\;\mapsto\;\int\nolimits_0^1 f\big(\exp(2\pi i t)\big)\,dt \enspace :\]
So to see (c), consider the polynomial-time computable embedding
\[ \calC_0[0;1] \;\ni\; \tilde f \;\mapsto\; \big( \exp(2\pi i t) \mapsto \tilde f(t) \big) \;\in\; \calC\big(\calU(1)\big) \enspace . \]
And to see (a) and (b) for $G=\calU(1)$, consider the polynomial-time computable embedding
\[ \calC\big(\calU(1)\big) \;\ni\; f \;\mapsto\; \big( t \mapsto f(\exp(2\pi i t)) \big) \;\in\; \calC[0;1] \enspace . \]
This also covers $\calSO(2)\cong\calU(1)$; and integration over $\calO(2)\cong\calSO(2)\times\{\pm1\}$
amounts to two integrals over $\calSO(2)$.

\medskip
Let $\IH=\{ \alpha+i\beta+j\gamma+k\delta : \alpha,\beta,\gamma,\delta\in\IR\}$
denote the quaternions, parameterized as real quadruples with respect to units $1,i,j,k$.
The group $\calSU(2)$ is well-known, and easily verified to be, 
isomorphic to the multiplicative group $\IH_1$ of quaternions of norm 1 (aka \emph{versors}) via isomorphism
\begin{equation}
\label{e:Quaternion}
\IH_1 \;\ni\; \alpha+i\beta+j\gamma+k\delta \;\mapsto\; \left( \begin{array}{cc}
\alpha+i\beta & -\gamma+i\delta \\
\gamma+i\delta & \alpha-i\beta \end{array}\right) \;\in\; \calSU(2)
\end{equation}
with $|\alpha|^2+|\beta|^2+|\gamma|^2+|\delta|^2=1$.
Reparameterize $\IH_1$ in generalized spherical coordinates
\begin{multline*}
[0;\pi)\times[0;\pi)\times[0;2\pi) \;\ni\; (\eta,\vartheta,\varphi) \;\mapsto\; \Psi(\eta,\vartheta,\varphi) \;:=\; \\
\cos(\eta) + i\sin(\eta)\cos(\vartheta)
+j\sin(\eta)\sin(\vartheta)\cos(\varphi)+k\sin(\eta)\sin(\vartheta)\sin(\varphi) \;\in\; \IH_1
\end{multline*}
with Jacobian determinant
$\big|\det\big(\Psi'(\eta,\vartheta,\varphi)\big)\big|=\sin^2(\eta)\sin(\vartheta)$,
and verify that integration by change-of-variables
\begin{equation}
\label{e:Integral}
\calC(\IH_1) \;\ni\; f \;\mapsto \; \int\nolimits_0^{2\pi} \int\nolimits_0^\pi \int\nolimits_0^{\pi}
f\big(\Psi(\eta,\vartheta,\varphi)\big)\cdot|\det\Psi'(\eta,\vartheta,\varphi)|\,d\eta\,d\vartheta\,d\varphi
\end{equation}
is left-invariant, hence must coincide with the Haar integral on $\calSU(2)$.
Items~(a) and (b) thus follow by polynomial-time reduction to Euclidean/Riemann integration 
according to Equation~(\ref{e:Integral}).
And Item~(c) follows by polynomial-time embedding
\begin{gather*}
\calC\big(\calU(1)\big) \;\ni\; f \;\mapsto\; \tilde f \;\in\;
\calC(\IH_1)\;\cong\;\calC\big(\calSU(2)\big) , \quad\text{where}\\
\tilde f \;:\; \IH_1\;\ni\; \alpha+i\beta+j\gamma+k\delta \mapsto f\big((\alpha+i\beta)/\sqrt{\alpha^2+\beta^2}\big)\cdot \sqrt{\alpha^2+\beta^2} \;\in\; \IR
\enspace .
\end{gather*}
Since continuous $f$ on compact $\IH_1$ is bounded,
$f\big((\alpha+i\beta)/\sqrt{\alpha^2+\beta^2}\big)\cdot \sqrt{\alpha^2+\beta^2}\to0$
as $\alpha^2+\beta^2\searrow0$. Hence
$\tilde f$ is indeed
well-defined and remains polynomial-time computable
by continuous extension also for $\alpha^2+\beta^2=0$.

\medskip
$\calSO(3)$ is doubly-covered by $\IH_1$, identifying $q\in\IH_1$ with special orthogonal linear map
\[ \IR^3 \;\ni\; (\beta,\gamma,\delta) \;\mapsto\; (\beta',\gamma',\delta')  \;:\;  i\beta'+j\gamma'+k\delta' \overset{!}{=} \big( q\cdot (i\beta+j\gamma+k\delta)\cdot q^{-1} \big) \enspace . \]
Moreover $\calO(3)\cong\calSO(3)\times\{\pm1\}$ and $\calU(2)\cong\calSU(2)\times\calU(1)$.
\qed

\subsection{Implementation and Evaluation}\label{ss:Implement}
Based on the above reduction to ordinary Riemann integration,
we have implemented integration on $\calSU(2)$ in the \texttt{iRRAM} C++ library
\cite{Mue01}. The source code is available at \href{http://github.com/realcomputation/irramplus/tree/master/HAAR}{\texttt{https://github.com/realcomputation/irramplus/tree/master/HAAR}}.
Its empirical evaluation produced the following timing results:
\begin{figure}[H]
    \label{fig:execution-time}
        \centering
            \begin{tikzpicture}
            \begin{axis}[
            xlabel=Precision $n$ referring to absolute error $\leq2^{-n}$,
            ylabel=Logarithm of runtime in seconds,
            grid=major,
            ]
            \addplot[only marks] table [row sep=\\] {
                -13    -4.305006802\\
                -12    -2.520853892\\
                -11    -1.516307923\\
                -10    -0.726900478\\
                -9    0.978605678\\
                -8    2.860749505\\
                -7    3.826900284\\
                -6    6.534514553\\
                -5    9.252760978\\
            };
            \addplot[no markers, red] table [y={create col/linear regression={y=y}}, row sep=\\] {
                x y\\
                -13    -4.305006802\\
                -12    -2.520853892\\
                -11    -1.516307923\\
                -10    -0.726900478\\
                -9    0.978605678\\
                -8    2.860749505\\
                -7    3.826900284\\
                -6    6.534514553\\
                -5    9.252760978\\
            };
            \addplot[no markers, red, domain=-5:0]{\pgfplotstableregressiona*x+\pgfplotstableregressionb};
            \end{axis}
            \end{tikzpicture}
        \caption[]{ Empirical evaluation of rigorous integration on $\calSU(2)$.
        }
    \end{figure}
    Specifically, we chose the Lipschitz-continuous
    function $\IH_1 \ni w+xi+yj+zk \mapsto |w|+|x|+|y|+|z|\in\IR$
    to integrate with\emph{out} letting the algorithm exploit
    its particular symbolic form and symmetry.
    Time measurements were performed on the virtual machine that has Ubuntu 64-bit with 4 cores and 8GB RAM by VMware Workstation 15 Player. The underlying computer has Intel(R) Core(TM) i7-7700K CPU 4.20GHz and 16GB RAM.
    Execution time for each precision is the average execution time of 5 executions.

    Note that the y-axis records the \emph{logarithm} of the execution time in seconds.
    This time is confirmed to grow exponentially with the output precision parameter $n$:
    as expected for a $\classSharpP_1$-complete problem.

\section{Conclusion and Future Work}
\label{sec:Conclusion}
We have devised a computable version of Haar's Theorem:
proven once using the elegant synthetic (implicit) approach
and once developing and analyzing an explicit, imperative algorithm.
And we have established the computational complexity of the Haar integral
to characterize $\classSharpP_1$ for each of the compact groups
$\calU(1),\calU(2),\calO(2),\calO(3),\calSU(2),\calSO(3)$. Moreover, we implemented the algorithm for $\calSU(2)$ in \emph{Exact Real Computation} \cite{DBLP:journals/corr/MullerPP016} and confirmed that the experiment coincides with the complexity theorem.
In fact, our proof shows them mutually second-order polynomial-time Weihrauch reducible
\cite{DBLP:journals/toct/KawamuraC12}.

Future work will generalize the above complexity considerations
to $\calSO(4)$, to $\calSO(d)$, and to further classes of compact metric groups;
and improve the implementation to achieve practical performance.

On the abstract side of our work, an immediate question is whether we can generalize from compact groups to locally compact groups (as was done for the classical Haar's theorem). The price to pay for this generalization in the classic setting is that we no longer obtain a unique probability measure, but merely a locally finite measure identified up to a constant scaling factor. A notion of effective local compactness is available (see \cite{pauly-locallycompact}), but any such generalization seems to require new proof techniques beyond those employed in this article. Recently, Davorin Le\v{s}nik has shown that this one can be done in synthetic topology, provided that one is willing to relax the requirement that measures take values in the lower reals to values in the Borel reals \cite{lesnik-ccc}.

\bibliography{references}

\end{document}